\numberwithin{equation}{section}
\theoremstyle{plain}
\newtheorem{theorem}{Theorem}
\newtheorem{remark}[theorem]{Remark}
\begin{document}

\title{A reduced-form model for level-1 limit order books}

\author{Tzu-Wei Yang}
\address{School of Mathematics, University of Minnesota}
\email{yangx953@umn.edu}

\author{Lingjiong Zhu}
\address{Department of Mathematics, Florida State University}
\email{zhu@math.fsu.edu}

\date{}

\subjclass[2010]{91G99, 62P05}
\keywords{Limit order books, data analysis, reduced form models, diffusion approximations}

\begin{abstract}
One popular approach to model the limit order books dynamics
of the best bid and ask at level-1 is to use the reduced-form diffusion approximations.
It is well known that the biggest contributing factor to the price movement is the imbalance
of the best bid and ask. We investigate the data of the level-1 limit order books of a basket of stocks
and study the numerical evidence of drift, correlation, volatility and their dependence on the imbalance.
Based on the numerical discoveries, we develop a nonparametric discrete model
for the dynamics of the best bid and ask, which can be approximated by a reduced-form model 
that incorporates the empirical data of correlation and volatilities with analytical tractability
that can fit the empirical data of the probability of price movement.
\end{abstract}
\maketitle
\section{Introduction}

The traditional human traders have largely been replaced by the automatic and electronic traders
in today's financial world. The role and the controversy of those high frequency traders
have caught public's attention ever since the infamous flash crash on May 6, 2010
and the long-standing debate over the fairness of equity markets briefly became salient
with the recent publication of the book ``Flash Boys'' by Michael Lewis, who argued
that the trading has become unfair and skewed by the high-frequency trading and dampened
the opportunities of the regular investors.

In automatic and electronic order-driven trading platforms, 
orders arrive at the exchange and wait in the limit order book. There are two types of orders
in the limit order book: market orders and limit orders. Cancellation is also allowed.
One of the key research areas in limit order books has been centered around modeling the limit order book dynamics.
In this paper, we only consider the limit order book model at level-1, that is, we only study
the dynamics of the volumes at the best bid and the best ask.

The limit order book is a discrete queuing system and many of the works in the literature model
study the dynamics of the limit order book in a discrete setting directly, see e.g. Cont \textit{et al.} \cite{Cont2010}, Abergel and Jedidi \cite{ABERGEL2013}.
Another popular approach, is to study the reduced-form of the discrete model. 
In the sense of heavy traffic limits, various authors, see e.g. Cont and de Larrard \cite{Cont2012, Cont2013}, 
Avellaneda \textit{et al.} \cite{Avellaneda2011}, Guo \textit{et al.} \cite{Guo2015}
considered the diffusion limit as an approximation of the discrete model. 
The diffusion approximation is valid if the average queue sizes are much larger than the typical quantity
of shares traded and the frequency of orders per unit time is high, 
see e.g. the discussions in Avellaneda \textit{et al.} \cite{Avellaneda2011}.
With the empirical finding of approximate scale invariance, Bouchaud \textit{et al.} \cite{Gareche2013} derive a two dimensional Fokker-Planck equation describing
the statistical behavior of the queue dynamic.
In the recent work by Huang \textit{et al.} \cite{HLR}, they introduced a model which accommodates the empirical
properties of the full order book and the stylized facts of lower frequency financial data. In their model, the
order flows have state-dependent intensities.
We refer to the recent book \cite{LL} for more details.

One research area of great interests is the dynamics of the limit order books and how
it influences the stock price movement, 
see e.g. Avellaneda \textit{et al.} \cite{Avellaneda2011}, Cont \textit{et al.} \cite{Cont2010}, Huang and Kercheval \cite{Huang2012} etc.
There is strong empirical evidence to suggest the biggest factor that drives the movement for the stock price
to the next level is the \emph{imbalance} of the best bid and the best ask, see e.g. Avellaneda \textit{et al.} \cite{Avellaneda2011},
which is defined
as the ratio of the volume at the best bid and total volume at the best bid and ask:
\begin{equation}
\text{Imbalance}=\frac{\text{Volume at Best Bid}}{\text{Volume at Best Bid}+\text{Volume at Best Ask}}.
\end{equation}
In the limit order book, the stock price will move up when the best ask queue is depleted
and the price will move down when the best bid queue is depleted.
The empirical data suggests that the probability that the stock price will move up
increases as the imbalance increases. One can think the probability of price moving up
as a monotonically increasing function of the current imbalance. 
One may expect that the probability of the stock price moving up
is a monotonic function from $0$ to $1$ as the imbalance increases from $0$ to $1$.
But empirical evidence suggests otherwise. In Avellaneda \textit{et al.} \cite{Avellaneda2011}, they discovered
that even though the probability of the stock price moving up is indeed an increasing function
of the imbalance, it increases from a positive value to a value less than one.
One explanation is the hidden liquidity, that is, the sizes that are not
shown in the limit order book, see \cite{Avellaneda2011}. As it is hypothesized in \cite{Avellaneda2011}, 
there can be two explanations for hidden liquidity. First, markets are fragmented
and it can happen that once the best ask on an exchange is depleted, the price will not
necessarily go up since an ask order at that price may still be available on another market and a new
bid cannot arrive until that price is cleared on all markets. Second, there exist so-called iceberg orders,
the trading algorithms that split large orders into smaller ones that refill the best quotes
as soon as they are depleted.

Indeed, we also discover the hidden liquidity in our numerical analysis
and we use the idea of the hidden liquidity to better fit the model. 
The numerical evidence suggests that empirical probability of price moving up
depends linearly on the imbalance, with hidden liquidity at the very small and large
imbalance levels, see Figures \ref{ProbBAC}, \ref{ProbGE}, \ref{ProbGM}, \ref{ProbJPM}.
In \cite{Avellaneda2011}, correlated Brownian motions are used as a reduced-form model to describe the dynamics at
the best bid and ask queues. The linear dependence of the empirical probability of price moving up
on the imbalance level suggests that in the correlated Brownian motions model, the correlation
should be exactly $-1$. However, we carried out numerical analysis to study 
the correlation and volatilities of the best bid and ask sizes and their dependence
on the imbalance, and found out that the correlation is negative, but far away from $-1$
and it is also dependent on the level of the imbalance. Therefore, the correlated Brownian motions
might be too simplistic explaining the dynamics of the best bid and ask queues.
In this paper, we will build up a non-parametric model that can fit the data 
of the empirical correlation, empirical volatilities of the best bid and ask sizes and the empirical probability
of price moving up simultaneously. 

In this paper, we will use data to study the level-1 limit order books for a basket of stocks, to further understand the dependence
on the imbalance of the best bid and ask sizes. 
We will look at a basket of stocks and also compare the results amongst different exchanges, 
in particular, NASDAQ and NYSE because our empirical data suggest that the stocks we selected have
the largest trading volumes in these two stock exchanges. We discover that the micro structure and the
dynamics of the limit order books depend on their exchanges, in the sense, that the key statistics like correlation
between the best bid and ask, and the drift effect at the best bid and ask queues can differ across
the exchanges. The discrepancy amongst the exchanges has caught a lot of attention lately.
As noted in a recent article on Wall Street Journal \cite{Macey2015}:
\begin{quotation}
	``There is no question that U.S. equity markets are fragmented. The New York Stock Exchange's share
	of trading in its listed stocks has dropped to 32\% of its volume from 77\% a decade ago...
	This fragmentation... also creates arbitrage opportunities that did not exist when trading markets
	were unified.''
\end{quotation}
In our empirical study, we discover the evidence of discrepancy amongst different exchanges
and also across different stocks. 
This has two possible implications. First, the discrepancy can possibly be explained
by the different trading patterns of different algorithmic traders. Say we have two
high frequency traders A and B, who are trading two different baskets of stocks. 
Then the different behavior of the dynamics of the limit order books of the stocks may be due
to the different trading strategies and patterns of these two different players. 
As we will see later from our empirical studies, different stocks are concentrated in different
exchanges. Hence the different trading strategies of the traders behind different stocks
can result in the discrepancy across the different exchanges. Second, the fragmentation of the stock exchanges
may intrinsically cause the difference of the dynamics of the limit order books on different exchanges,
especially when the imbalance of the best bid and ask is either small or large, that is when the queues
at the best bid and ask are near depletion. In these cases, new orders may be directed
to a different exchange when liquidity is still available. Thus the fragmentation of the stock exchanges
and the discrepancy of the limit order books dynamics across different exchanges may create
arbitrage opportunities. As the same article on WSJ pointed out,
\begin{quotation}
	``Transparency disappears
	behind a shroud of complex order types executed on vaguely sinister dark pools, trading venues
	that sometimes are used to disadvantage long-term investors... The remedy is to create multiple
	trading venues and then limit trading in a particular security to one of them.''
\end{quotation}

The paper is organized as follows. In Section \ref{DataSection}, we carry out the data analysis
to study the empirical evidence of the dependence of the dynamics at the best bid and ask on the imbalance. 
Based on the empirical evidence, we build a nonparametric reduced-form model in Section \ref{ReducedSection}
with analytical tractability, hence extending the existing reduced-form and diffusion approximation approach
of the level-1 limit order book dynamics in the literature. The conclusion of the paper is in Section \ref{sec:conclusion}.
Finally, the technical proofs are in Appendix \ref{ProofSection} and the tables are in Appendix \ref{sec:table}.
\section{Data Analysis}\label{DataSection}

For our data analysis, we use the consolidated quotes of the NYSE-TAQ data set 
from the Wharton Research Data Services (WRDS). We look at the level-1 data, that is, the best bid price,
best ask price, best bid size and best ask size for a basket of stocks
traded on different exchanges. The time window of the data set we selected 
is the first five trading days of 2014, that is, January 2nd, 3rd, 6th, 7th and 8th. 
We only consider the trades happened between 10am and 4pm of the trading days.
We exclude the pre-market and after-market data as well as the data from the first half an hour
of each trading day, i.e., 9:30-10am since these data are usually quite noisy. 
We concentrate on the studies of the following stocks: Bank of America (BAC), General Electric (GE), 
General Motors (GM) and JP Morgan \& Chase (JPM). These blue-chip stocks have large market capitalization, 
are highly liquid and have large trading volumes. Moreover, the average prices of these stocks are reasonably
small and the bid-ask spreads are narrow (one tick size most of time), so that the data sets are not too noisy.
A sample table of the raw data we are using is given in Table \ref{SampleTable} as an illustration.   In Table \ref{SampleTable}, the units of best bid and ask 
prices are US Dollars and the units of the best bid and ask sizes are 100.

Out empirical observation from the raw data, shows that for most of time, there is only one size of the bid and ask queues changes, while the other size remains the 
same.  Such a pattern would not be an appropriate approximation of a diffusion process that we consider in Section \ref{ReducedSection} and the empirical correlation 
between the bid and ask queue sizes is almost zero.  We therefore average the consecutive data when there is only one queue size varying while the other one remains 
the same.  In this way, the data would have a better approximation to a diffusion process.

\begin{table}
	\caption{An Example of the Raw Data. Source: Wharton Research Data Services (WRDS).}
	\centering 
	\begin{tabular}{|c|c|c|c|c|c|c|c|}
		\hline 
		Ticker &   Date   &   Time   &  Bid  &  Ask  & Bid Size & Ask Size & Exchange \\ \hline\hline
		                                  GM                                   & 20140102 & 10:12:44 & 40.55 & 40.57 &    9     &    33    &    N     \\ \hline
		                                  GM                                   & 20140102 & 10:12:44 & 40.55 & 40.57 &    7     &    33    &    N     \\ \hline
		                                  GM                                   & 20140102 & 10:12:44 & 40.56 & 40.57 &    4     &    4     &    T     \\ \hline
		                                  GM                                   & 20140102 & 10:12:44 & 40.56 & 40.57 &    4     &    6     &    T     \\ \hline
		                                  GM                                   & 20140102 & 10:12:44 & 40.56 & 40.57 &    1     &    2     &    P     \\ \hline
		                                  GM                                   & 20140102 & 10:12:44 & 40.56 & 40.57 &    3     &    6     &    T     \\ \hline
	\end{tabular}
	\label{SampleTable} 
\end{table}

We found out that the largest volumes
for these stocks we studied are traded on NASDAQ (T) and NYSE (N)\footnote{That is not always the case. For example, for the first five trading days of 2014, the exchanges that trade the largest volumes of Walmart (WMT) are NYSE (N) and BATS (Z) and the exchanges that trade the largest volumes of Microsoft (MSFT) are NASDAQ (T) and BATS (Z). For the comparison purposes, we only study those stocks with top two exchanges being NASDAQ and NYSE.}, see Figure \ref{PieCharts}.
The symbols in Figure \ref{PieCharts} stand for the exchanges that the stocks are traded at
and the details are given in Table \ref{ExchangeTable}.

\begin{table}
	\caption{Primary Listed Exchange Codes}
	\centering 
	\begin{tabular}{|c|c||c|c|}
		\hline 
		Code &           Exchange           & Code &      Exchange       \\ \hline\hline
		                                 B                                   &        NASDAQ OMX BX         &  P   &    NYSE Arca SM     \\ \hline
		                                 C                                   &   National Stock Exchange    &  T   &     NASDAQ OMX      \\ \hline
		                                 J                                   & Direct Edge A Stock Exchange &  W   & CBOE Stock Exchange \\ \hline
		                                 K                                   & Direct Edge X Stock Exchange &  X   &   NASDAQ OMX PSX    \\ \hline
		                                 M                                   &    Chicago Stock Exchange    &  Y   &   BATS Y-Exchange   \\ \hline
		                                 N                                   &   New York Stock Exchange    &  Z   &    BATS Exchange    \\ \hline
	\end{tabular}
	\label{ExchangeTable} 
\end{table}


\begin{figure}
	\centering
	\includegraphics[width=0.49\textwidth]{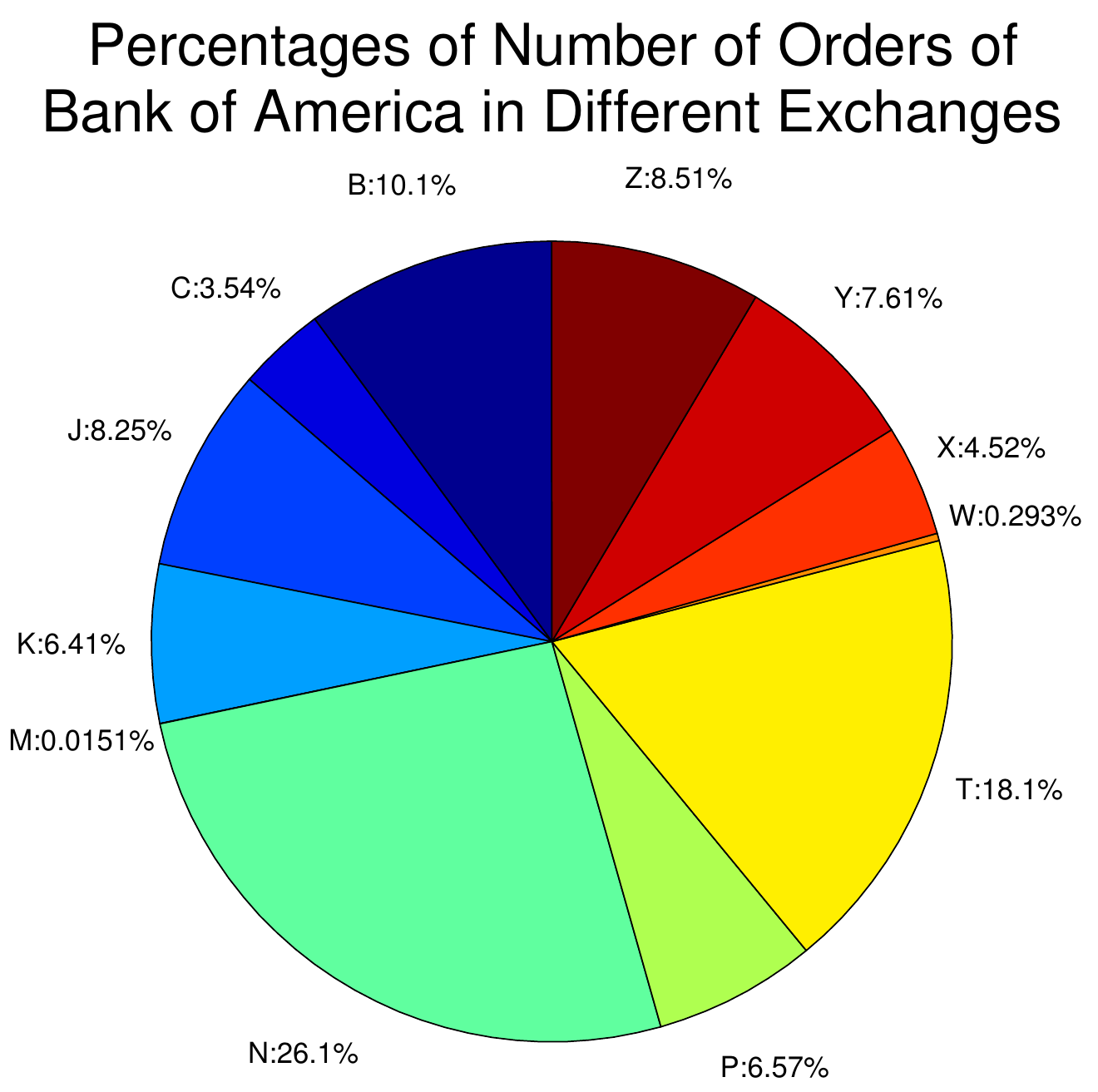}
	\includegraphics[width=0.49\textwidth]{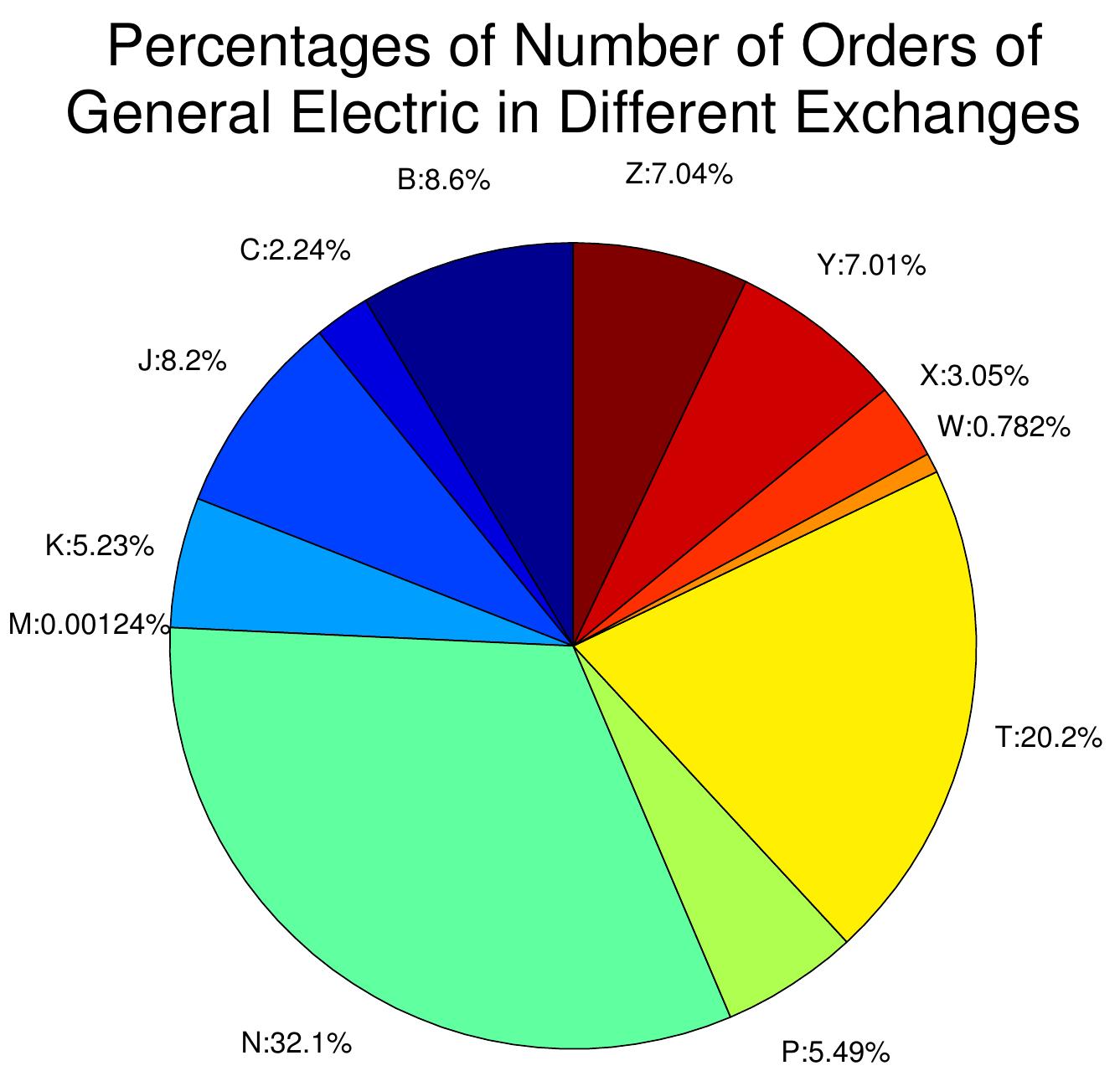}
	
	\includegraphics[width=0.49\textwidth]{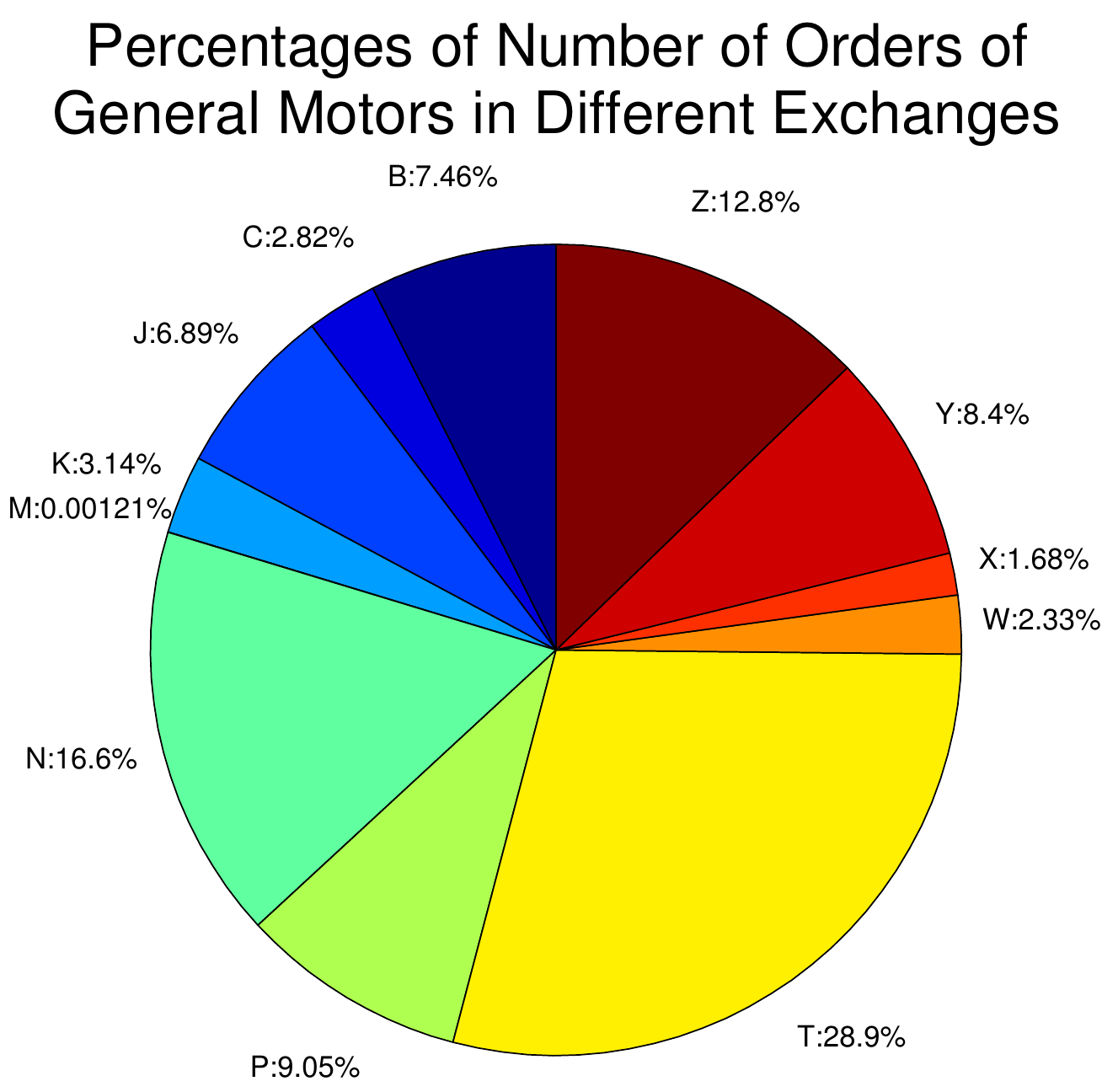}
	\includegraphics[width=0.49\textwidth]{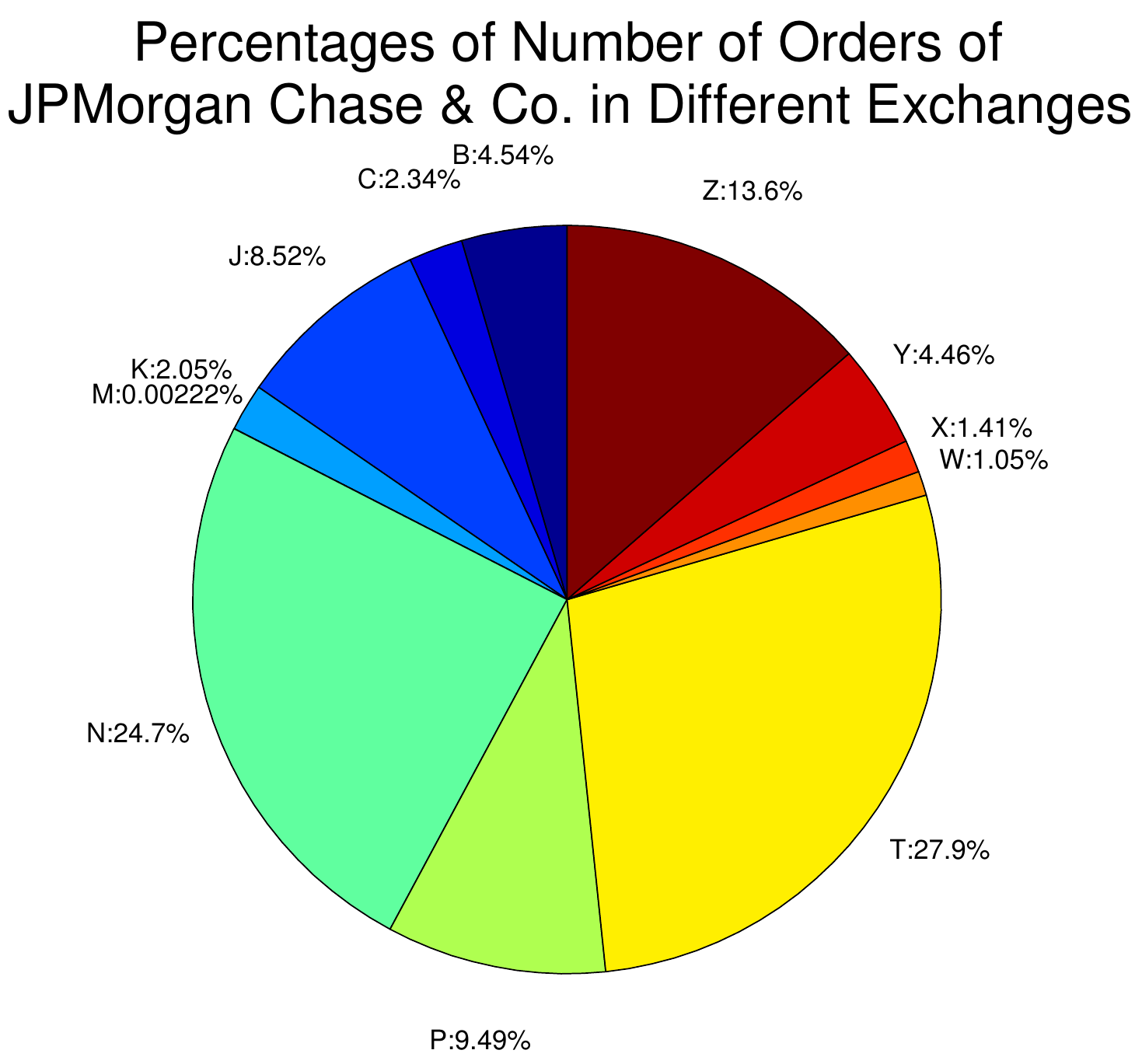}
	\caption{Pie Charts of Bank of America, General Electric, General Motors, and JP Morgan \& Chase}
	\label{PieCharts}
\end{figure}


First, we investigate the drift effect of the best bid and ask queue lengths. 
People have used both driftless diffusions, see e.g. \cite{Avellaneda2011} and diffusions with constant
drift, see e.g. \cite{Cont2013,Guo2015} to model the dynamics of the best bid and ask queues.
Therefore, we are interested in seeing from the data set if there is indeed any evidence
of the drift in the dynamics of the best and bid queues. If there is any evidence of drift, is the drift a constant, 
or a function depending on the queue lengths and the imbalance of the best bid and ask?
Our studies are summarized in Figure \ref{BACBar} for Bank of America, Figure \ref{GEBar} for General Electric,
Figure \ref{GMBar} for General Motors and Figure \ref{JPMBar} for JP Morgan \& Chase.
For example, in Figure \ref{BACBar}, we study the total volumes for the positive changes and the negative changes
at the best bid queues, best ask queues for both NASDAQ and NYSE for the Bank of America stock. 
In all these plots, the purple bars stand for the total volume of the negative changes at a particular imbalance level
and the yellow bars stand for the total volume of the positive changes at a particular imbalance level.
The red lines denote the ratio of the total volume of the positive changes
to the total volume of all the changes.

We can see that when the imbalance is neither too small or too large, there is little evidence
of the drift in the best bid and ask queues.

On the other hand, when the imbalance is very small or very large, we do observe drifts.
From the top left picture in Figure \ref{BACBar}, it is clear that there is evidence of negative drift
at the best bid queue when the imbalance is small (and hence the queue length is short)
and little evidence of drift otherwise. From the top right picture in Figure \ref{BACBar},
it is clear that there is evidence of negative drift
at the best ask queue when the imbalance is small (and hence the queue length is short)
and little evidence of drift otherwise. On the other hand, from the bottom two pictures in Figure \ref{BACBar},
we observe that exactly the opposite is true for the Bank of America stock traded on NYSE, that is,
there is positive drift at the best bid queue when the imbalance is small and at the best ask queue
when the imbalance is large, although the drift effect is weak.
One possible explanation is that there can be different scenarios when the queue lengths are short.
For example, it can happen that the queue length is short when the traded stock is about to move
to the next price level. There is the clustering effect of market orders and cancellations
of the limit orders that can explain the negative drift we observed in the top two pictures in Figure \ref{BACBar}.
On the other hand, it is also possible that the queue length is short because it is a new queue
and there is clustering effect of the arrivals of new limit orders at the new queue, which
results in the positive drift we observed in the bottom two pictures in Figure \ref{BACBar}. 
Similar patterns are also observed for the General Electric stock, see Figure \ref{GEBar}.
On the other hand, we see in Figure \ref{GMBar} that for the General Motors stock, 
for both NASDAQ and NYSE, there is positive drift at the best bid queue when the imbalance is small and at the best ask queue
when the imbalance is large and there is little drift otherwise. Similar patterns also hold
for the JP Morgan \& Chase stock, see Figure \ref{JPMBar}.

The statistics are summarized in Table \ref{TTable} and Table \ref{NTable}. 
BAC b and BAC a stand for the best bid and the best ask queues for BAC respectively. 
The number in each cell is obtained by computing
\begin{equation*}
	\frac{\text{Volume of Positive Change}}{\text{Volume of Positive Change}+\text{Volume of Negative Change}}
\end{equation*} 
From Table \ref{TTable} and Table \ref{NTable}, we can see that except when the imbalance is small or large,
there is little evidence of the drift in the best bid and ask queues.
In terms of modeling, this suggests that we can build up a model with no drift effect
when the imbalance is neither too small or too large. With the small and large imbalance, 
instead of modeling the drift effect by adding a drift term in the dynamics, 
we use the idea of the hidden liquidity from \cite{Avellaneda2011}
to better fit the model and explain the dynamics at the best bid and ask queues.

\begin{figure}
	\centering
	\includegraphics[width=0.49\textwidth]{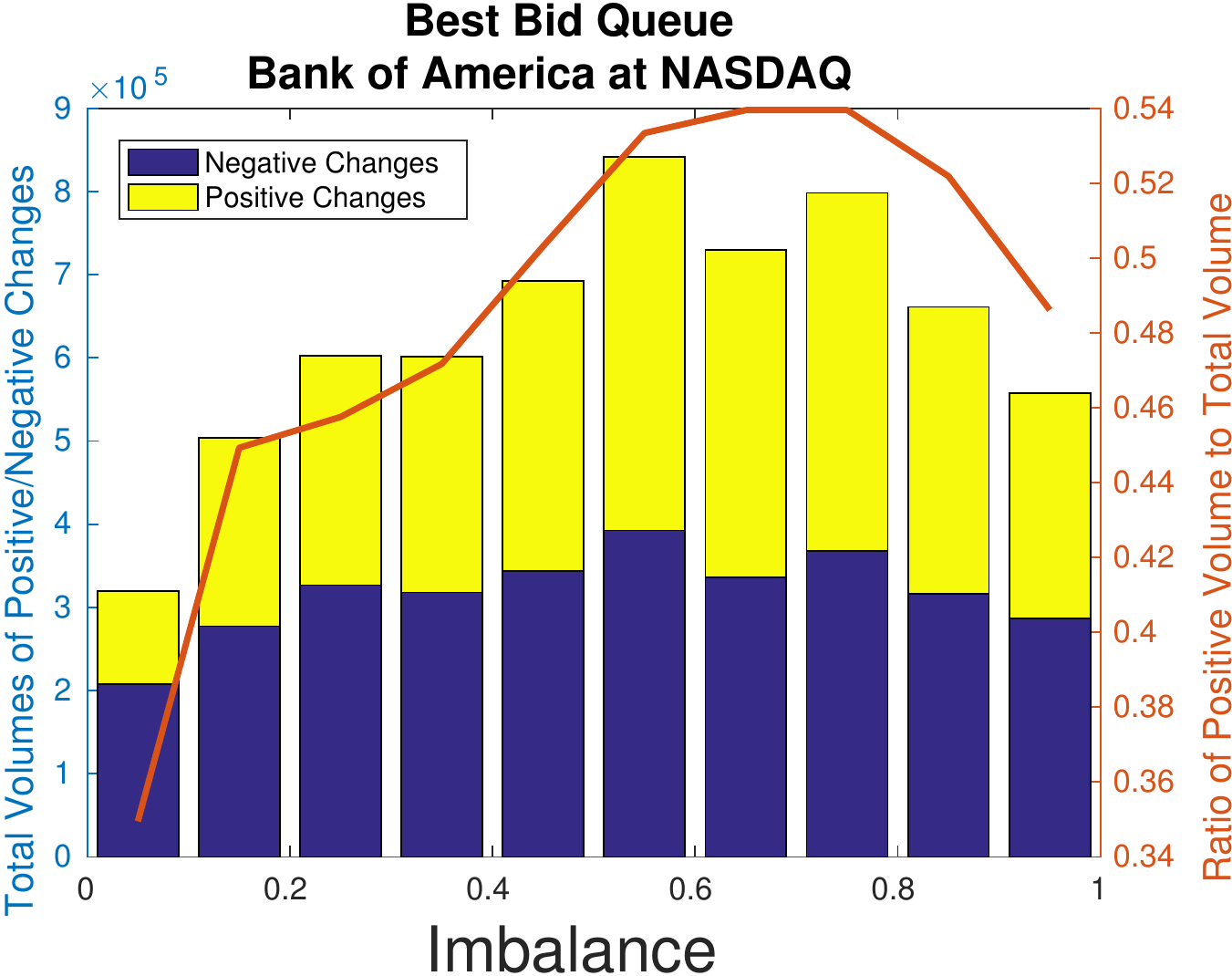}
	\includegraphics[width=0.49\textwidth]{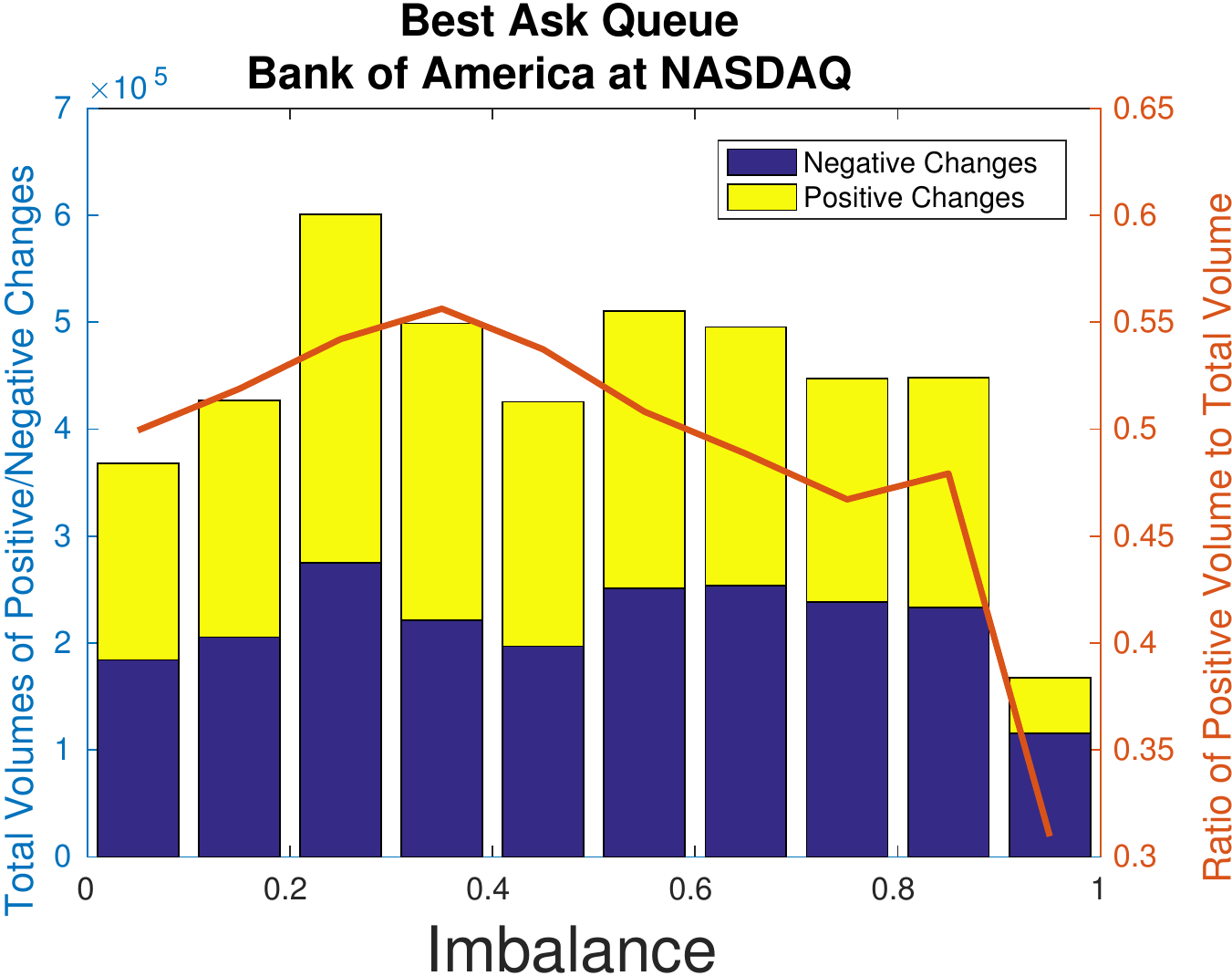}
	
	\includegraphics[width=0.49\textwidth]{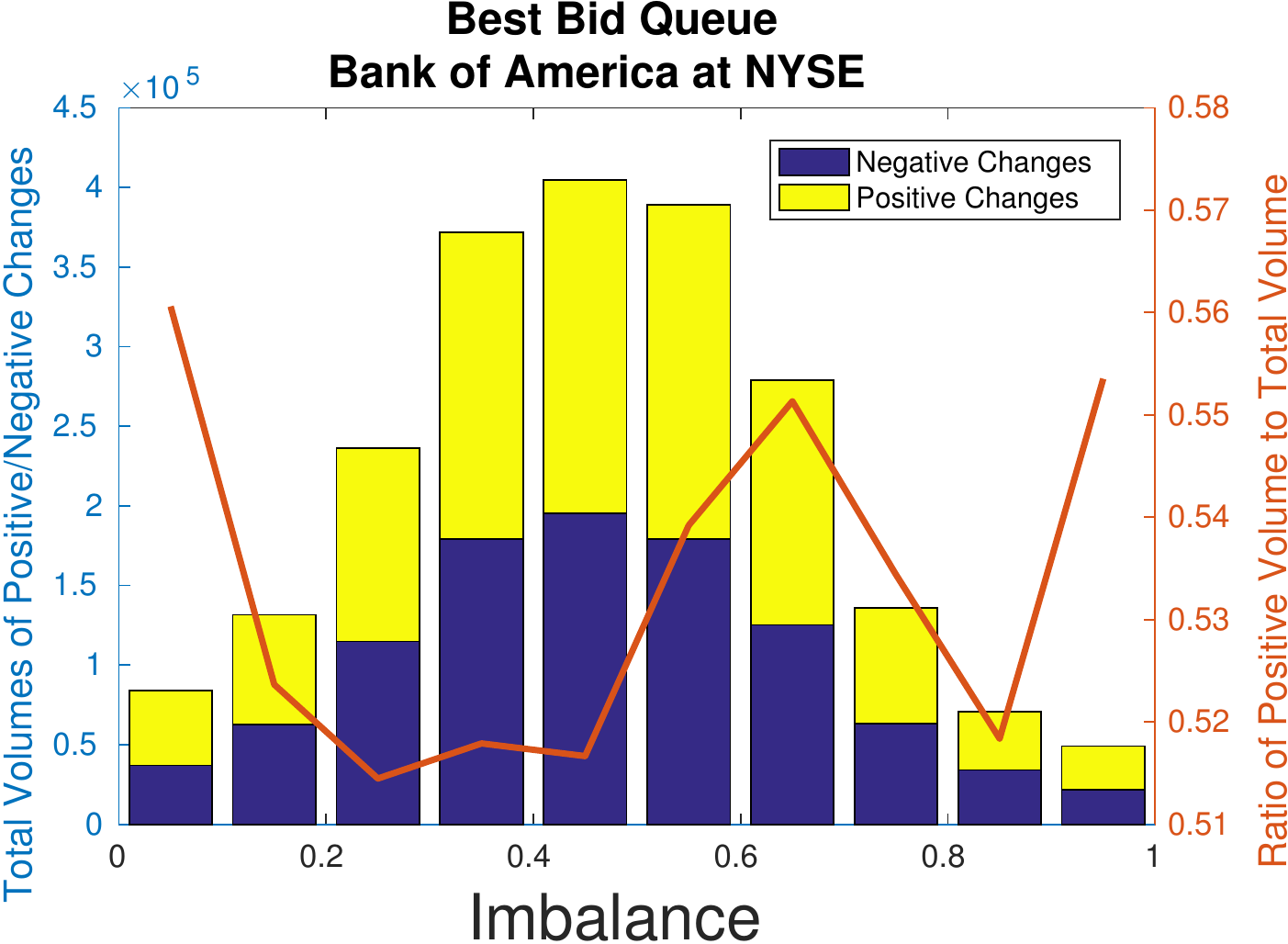}
	\includegraphics[width=0.49\textwidth]{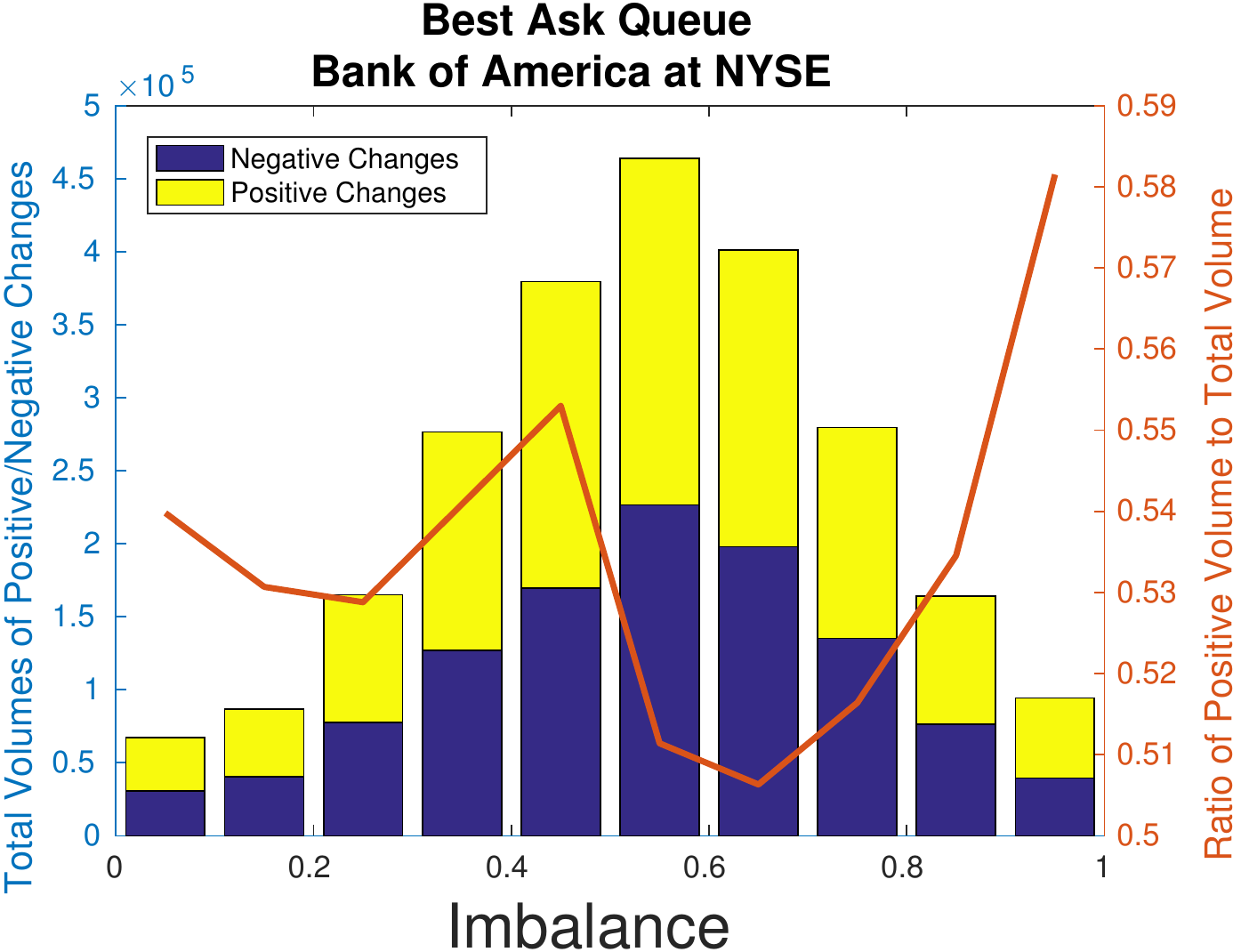}
	
	\caption{Positive and Negative Changes of the Volumes at the Best Bid and the Best Ask of Bank of America at NASDAQ and NYSE.  
	The curve is the ratio of the positive changes to the total changes.}
	\label{BACBar}
\end{figure}

\begin{figure}
	\centering
	\includegraphics[width=0.49\textwidth]{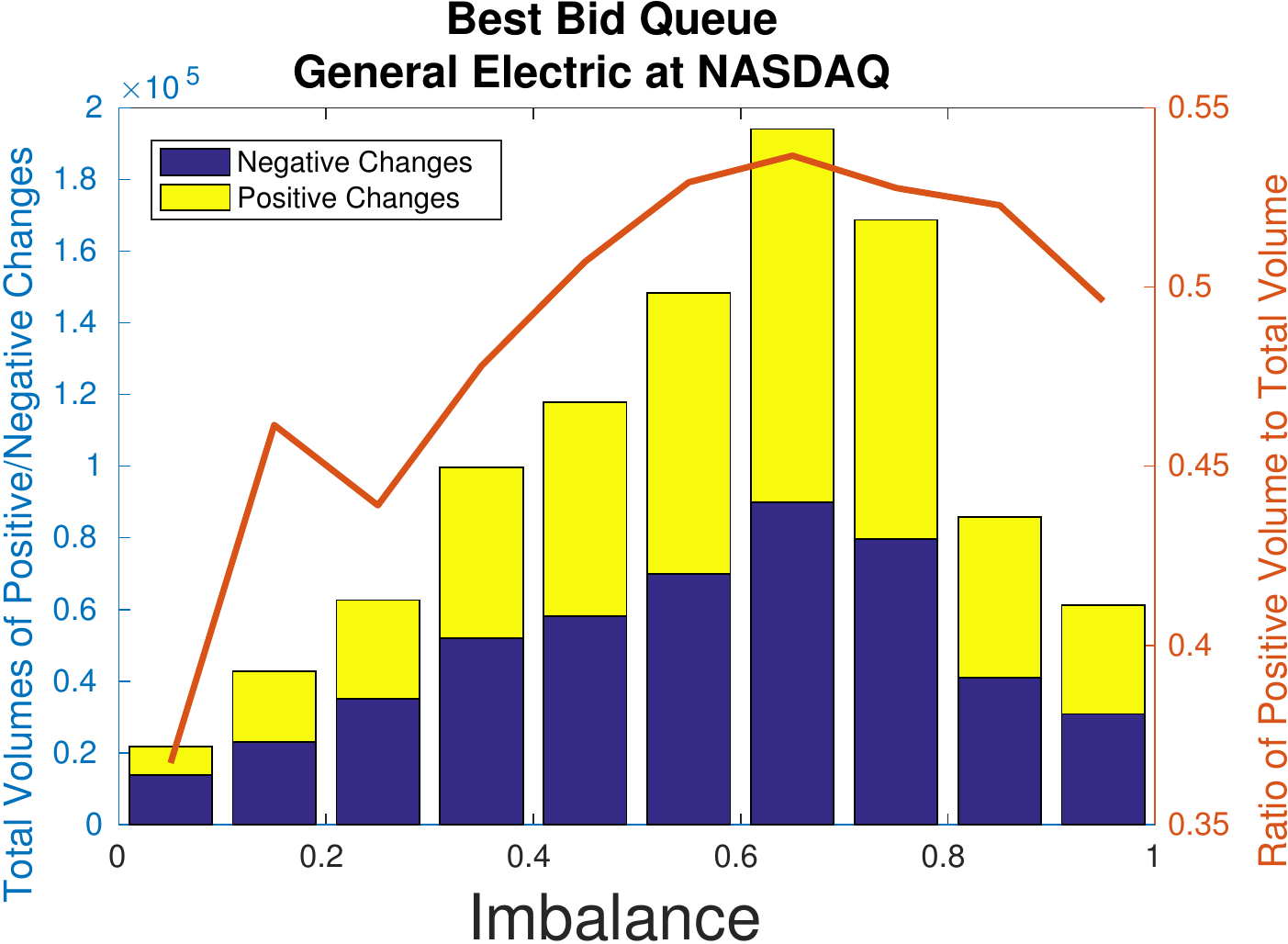}
	\includegraphics[width=0.49\textwidth]{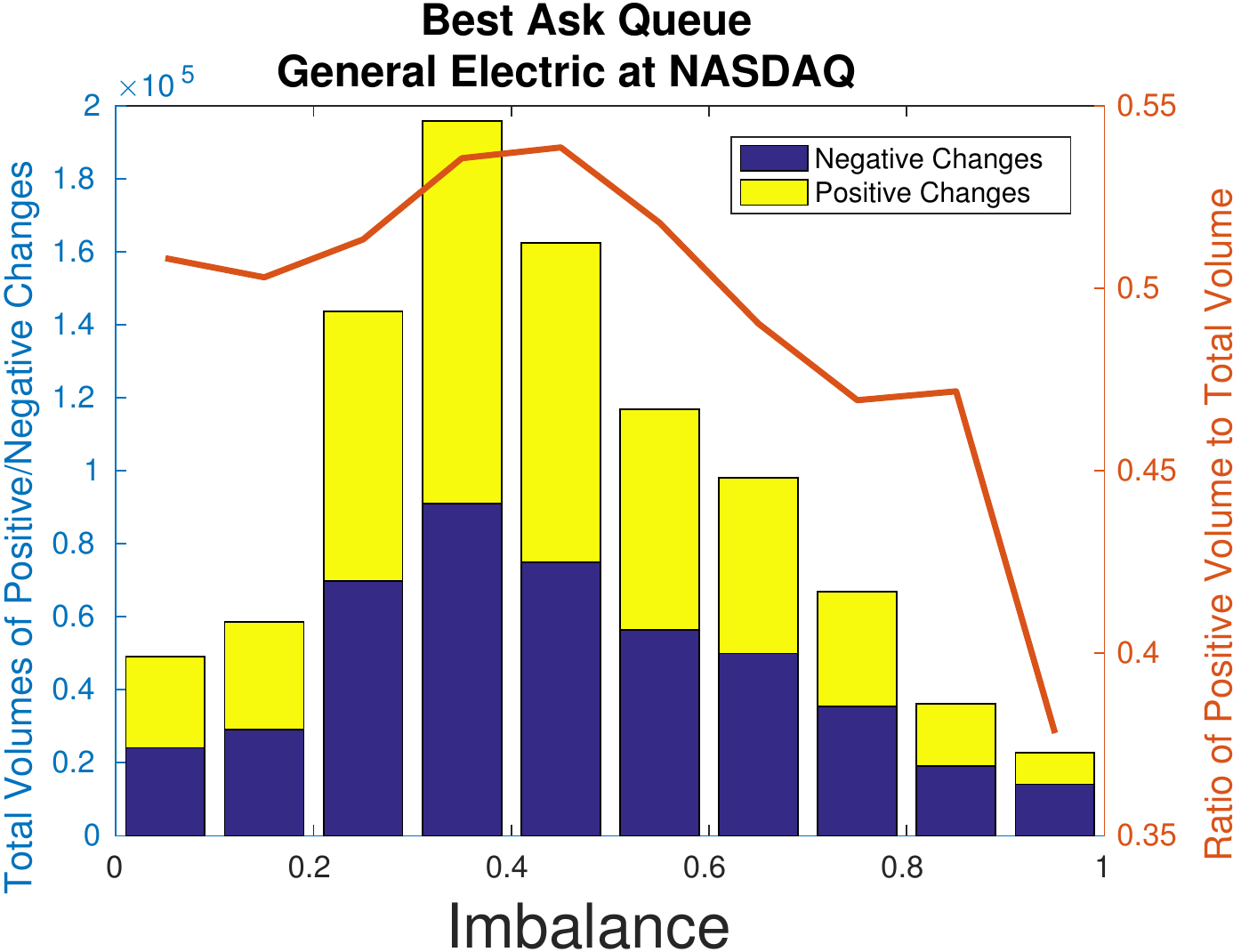}
	
	\includegraphics[width=0.49\textwidth]{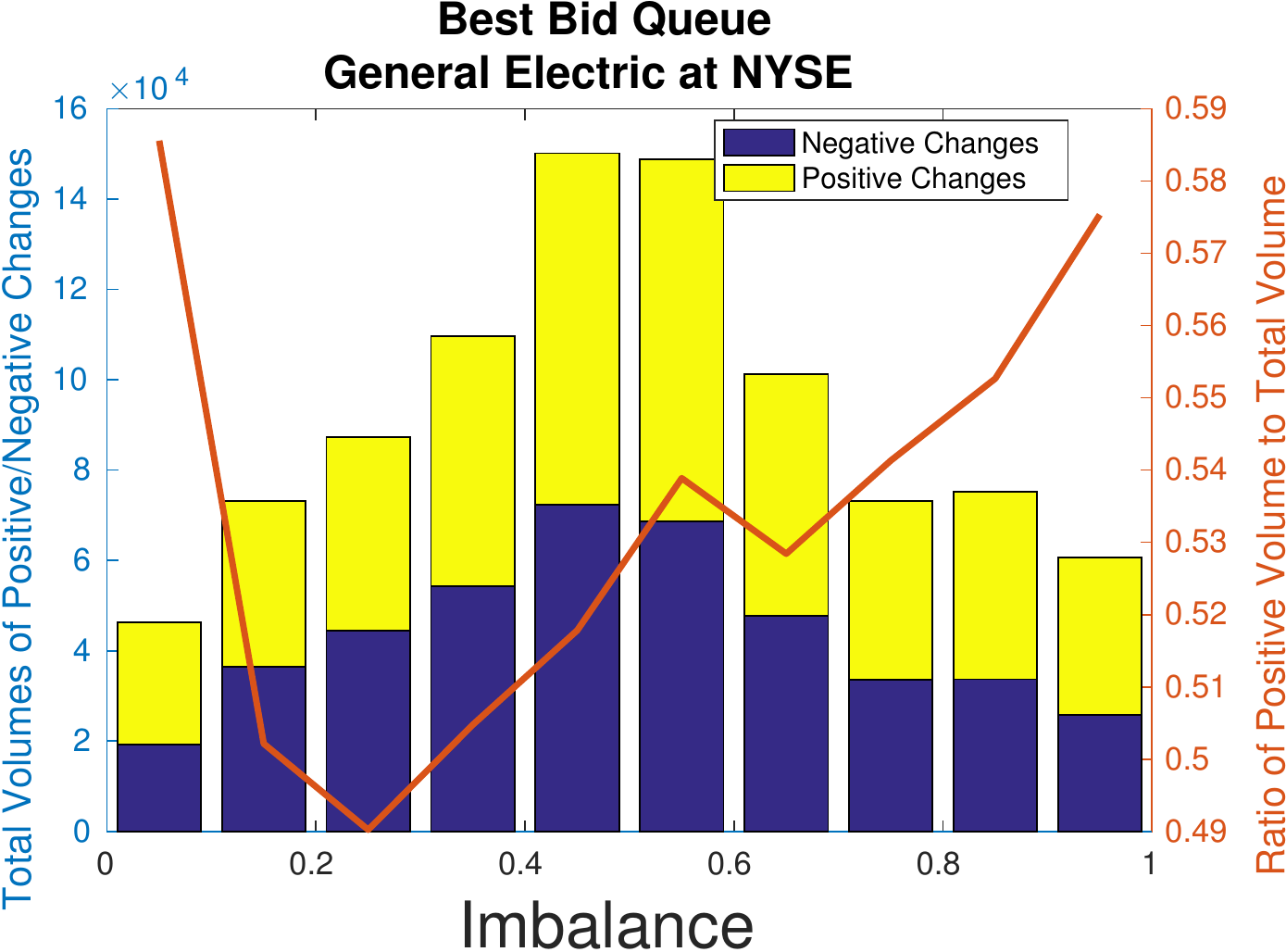}
	\includegraphics[width=0.49\textwidth]{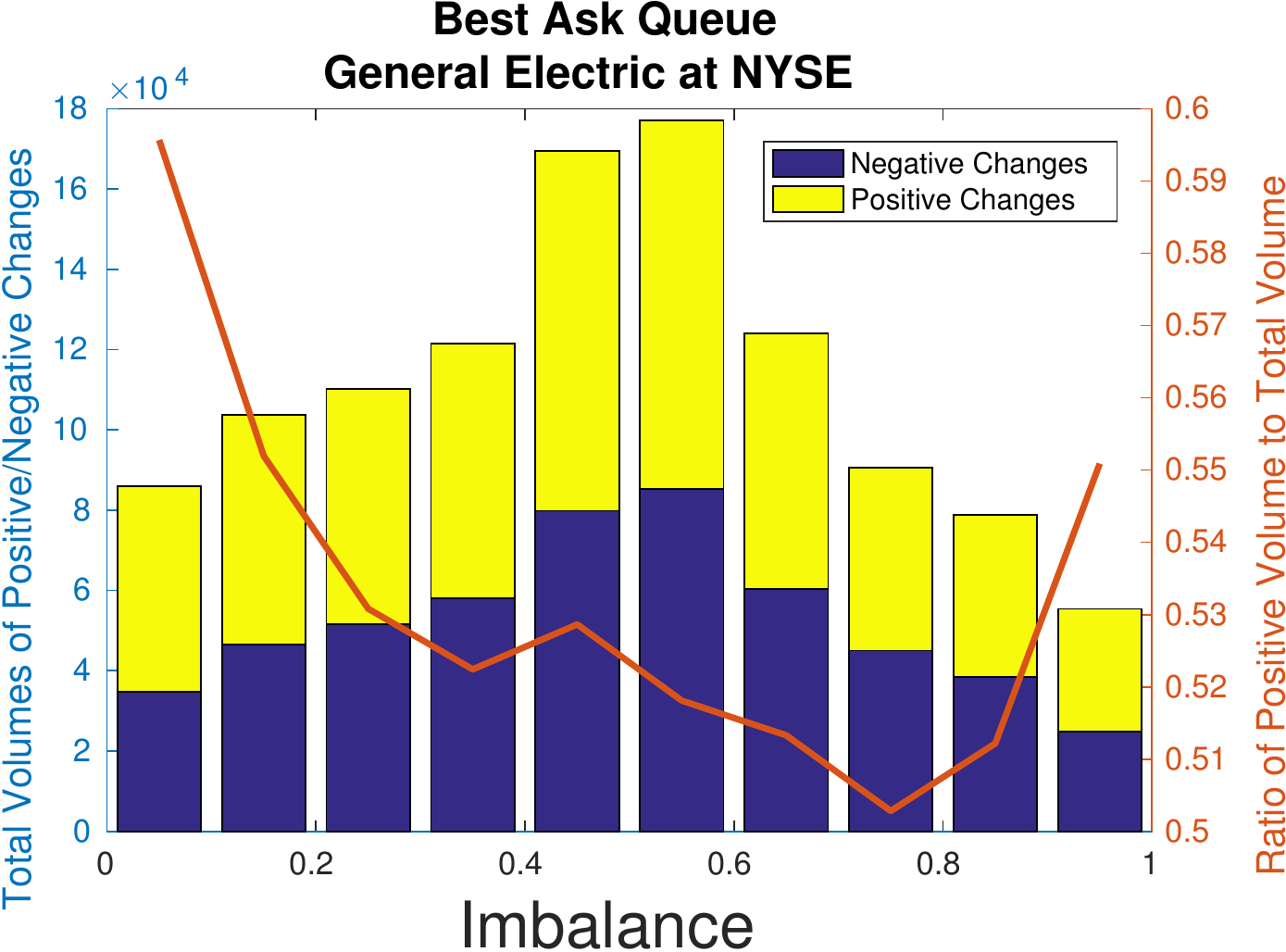}
	
	\caption{Positive and Negative Changes of the Volumes at the Best Bid and the Best Ask of General Electric at NASDAQ and NYSE.
	The curve is the ratio of the positive changes to the total changes.}
	\label{GEBar}
\end{figure}

\begin{figure}
	\centering
	\includegraphics[width=0.49\textwidth]{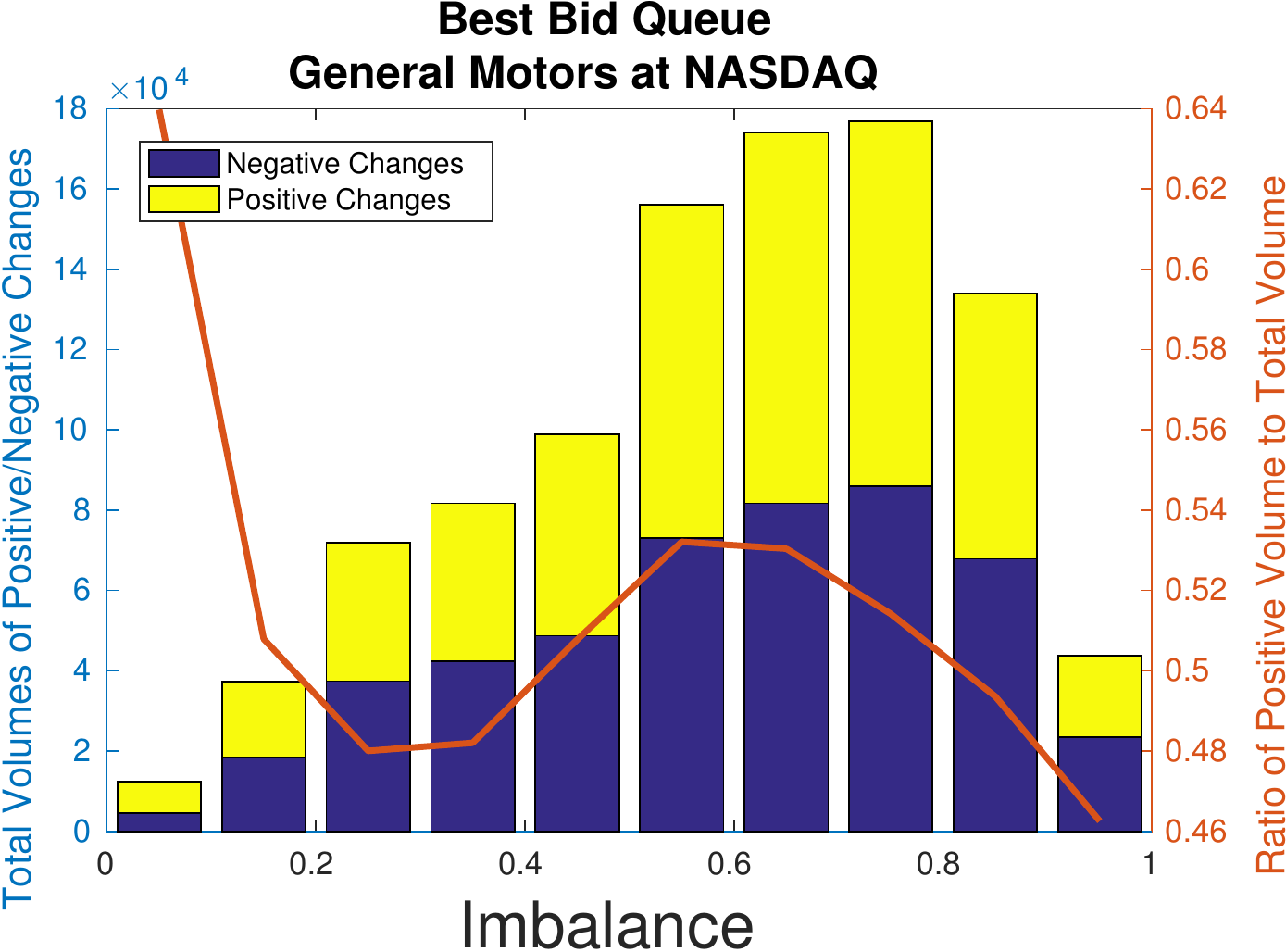}
	\includegraphics[width=0.49\textwidth]{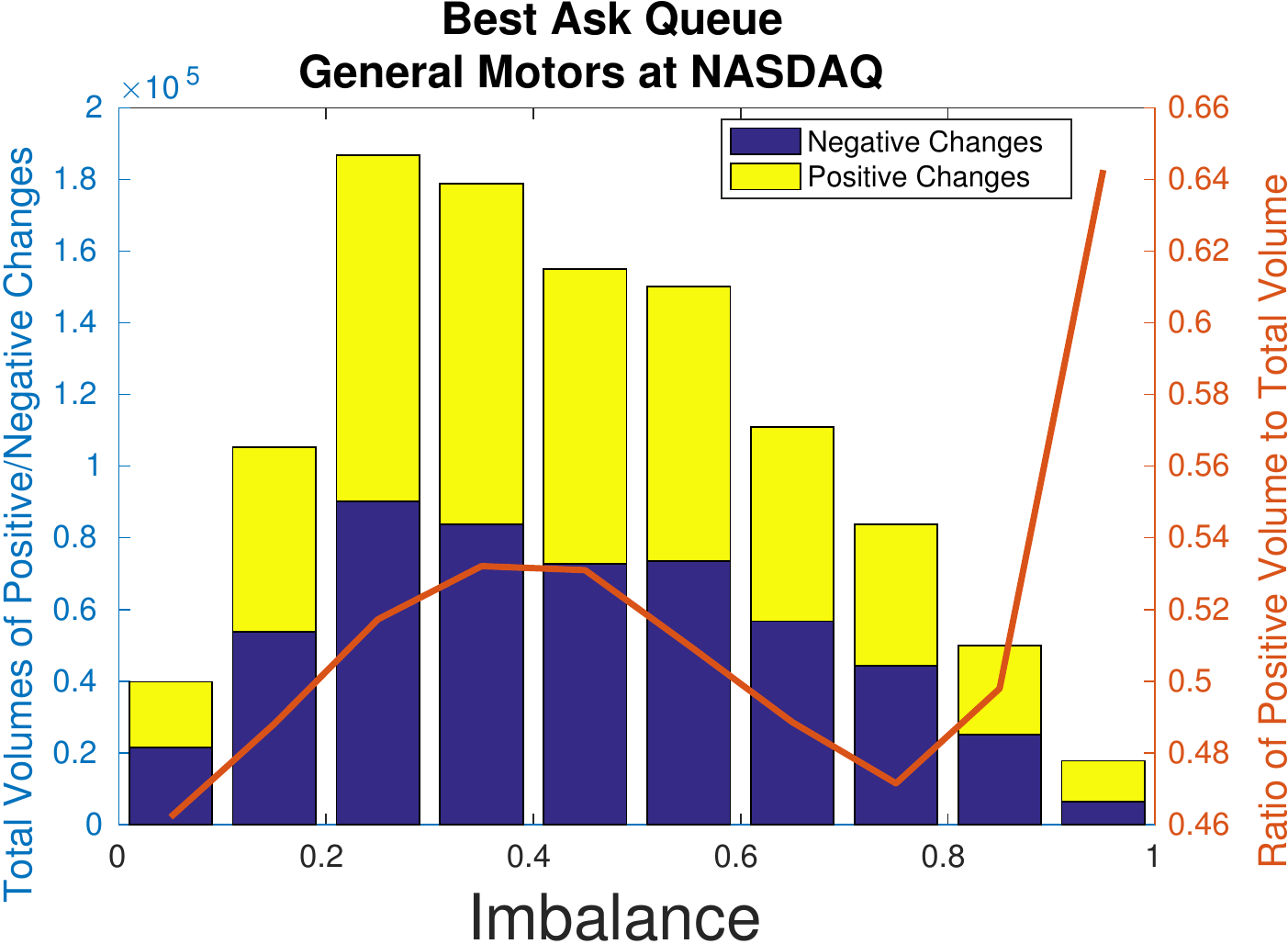}
	\includegraphics[width=0.49\textwidth]{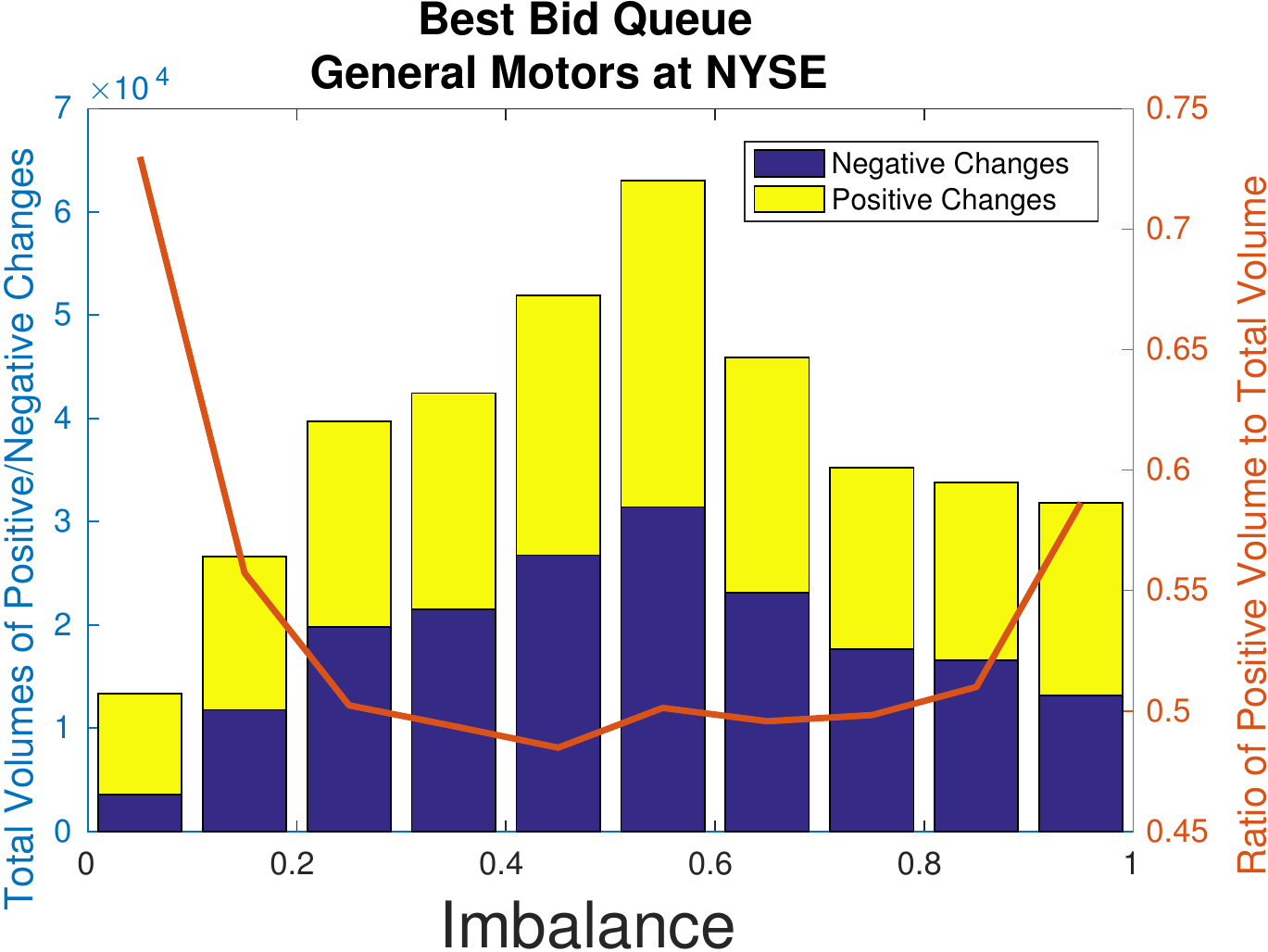}
	\includegraphics[width=0.49\textwidth]{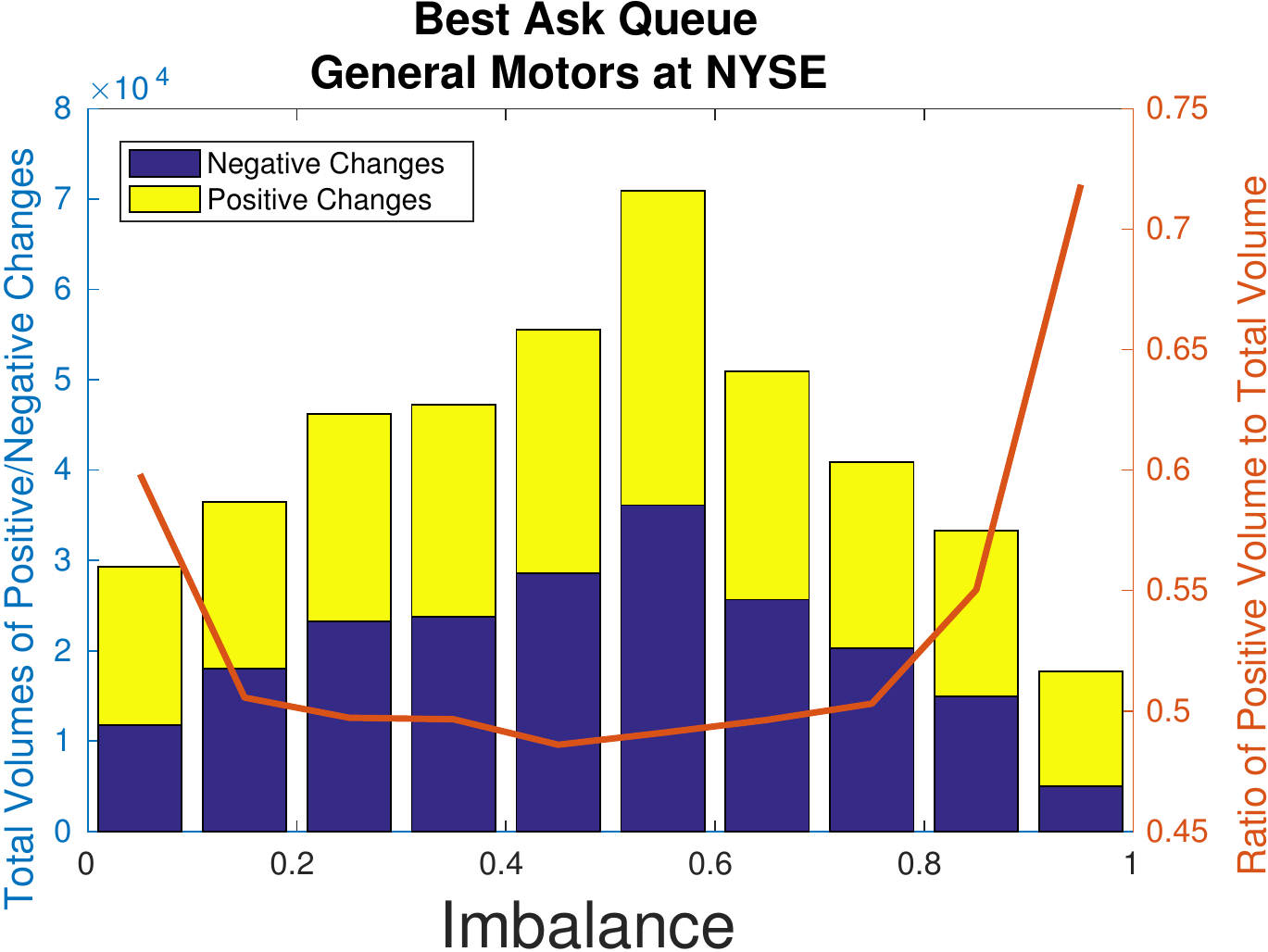}

	\caption{Positive and Negative Changes of the Volumes at the Best Bid and the Best Ask of General Motors at NASDAQ and NYSE.
	The curve is the ratio of the positive changes to the total changes.}
	\label{GMBar}
\end{figure}

\begin{figure}
	\centering
	\includegraphics[width=0.49\textwidth]{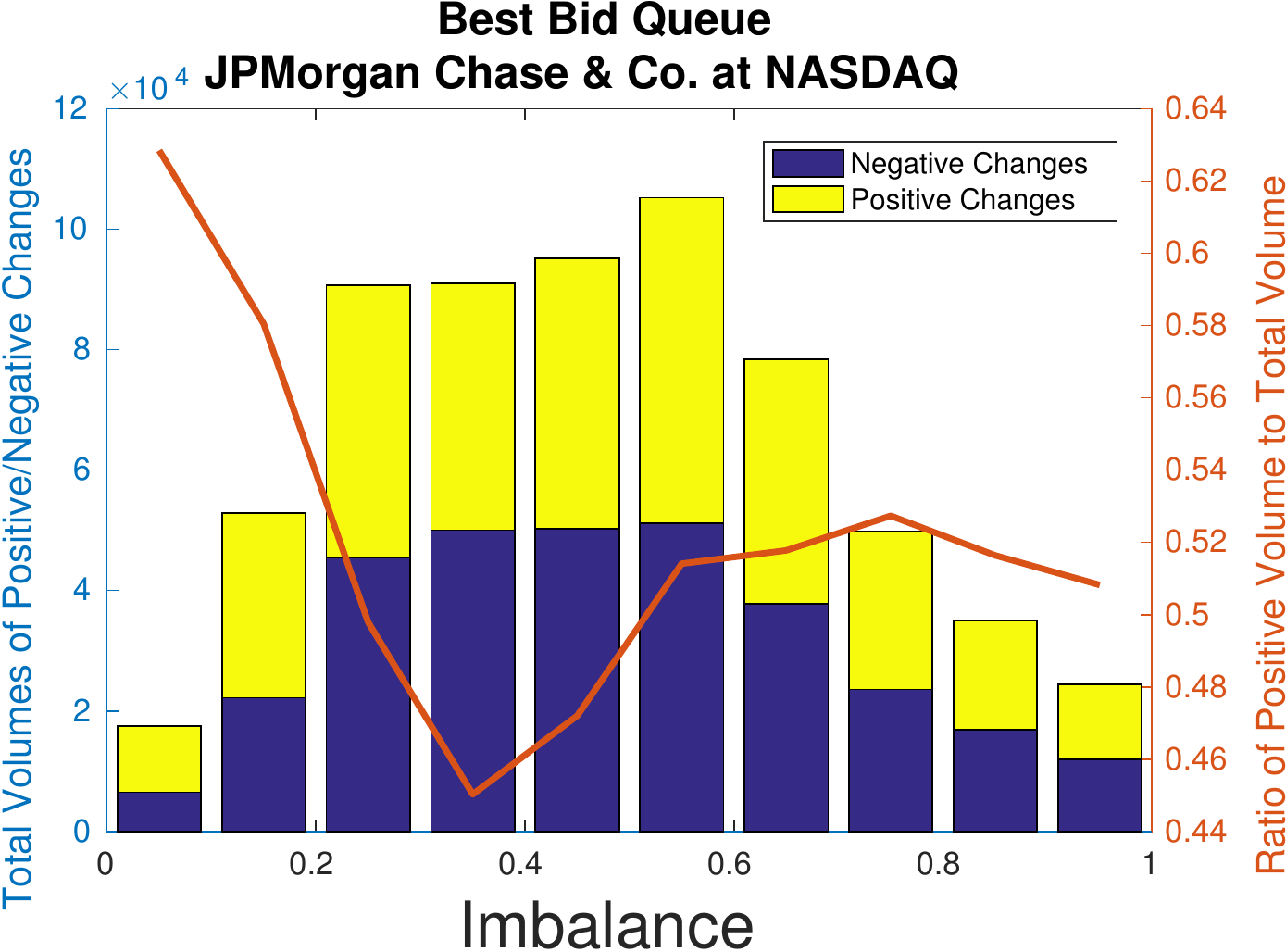}
	\includegraphics[width=0.49\textwidth]{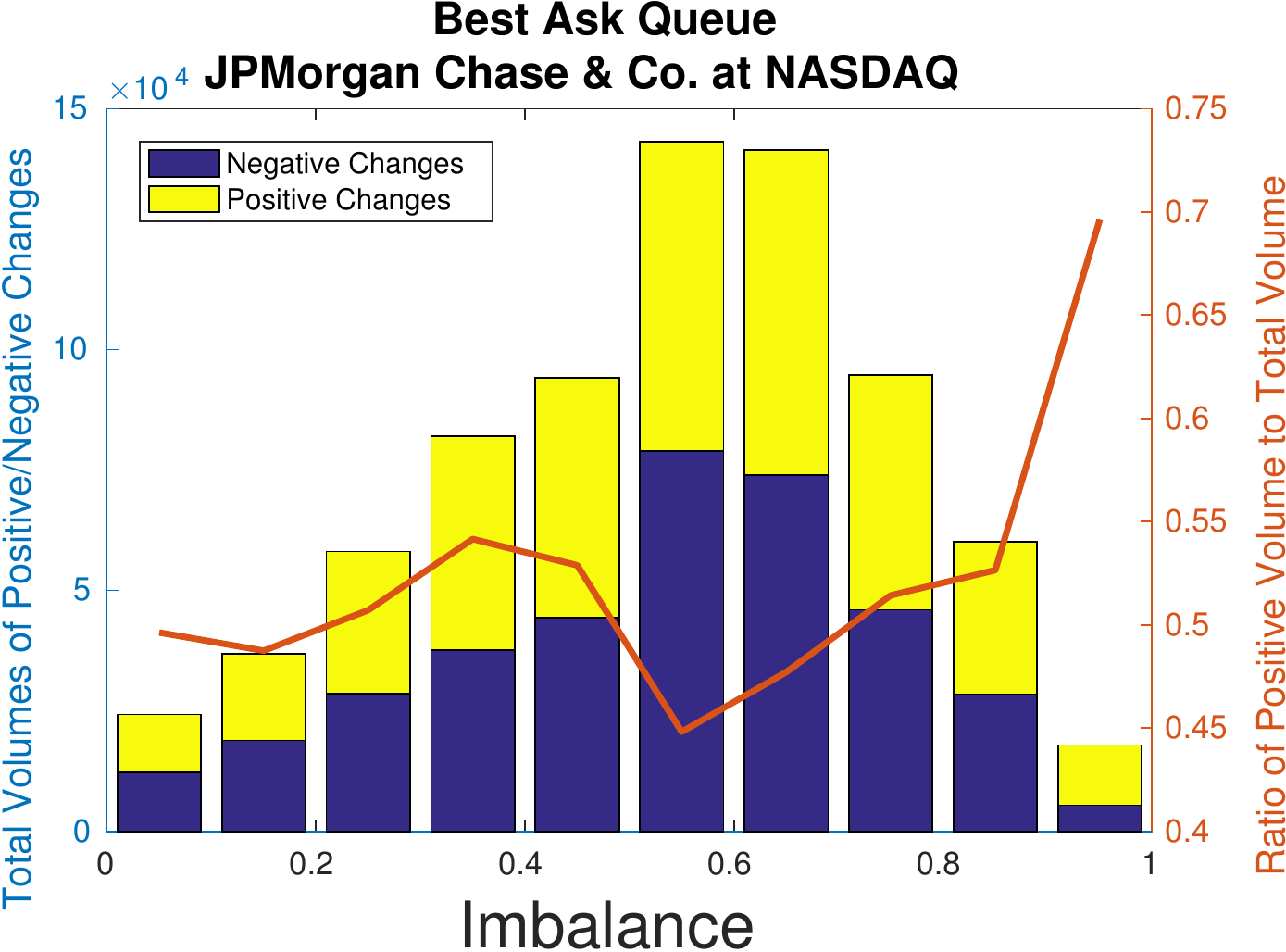}
	
	\includegraphics[width=0.49\textwidth]{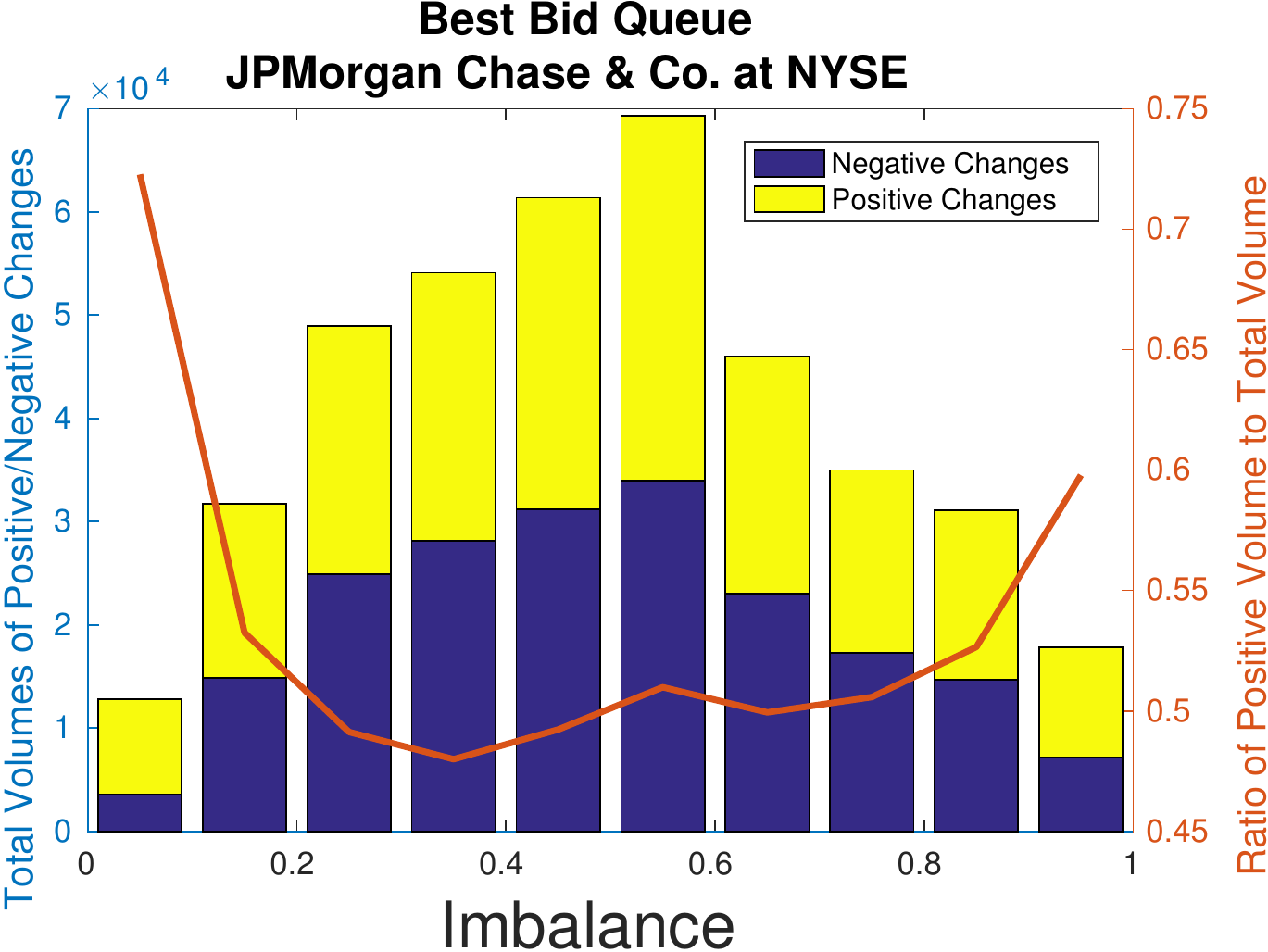}
	\includegraphics[width=0.49\textwidth]{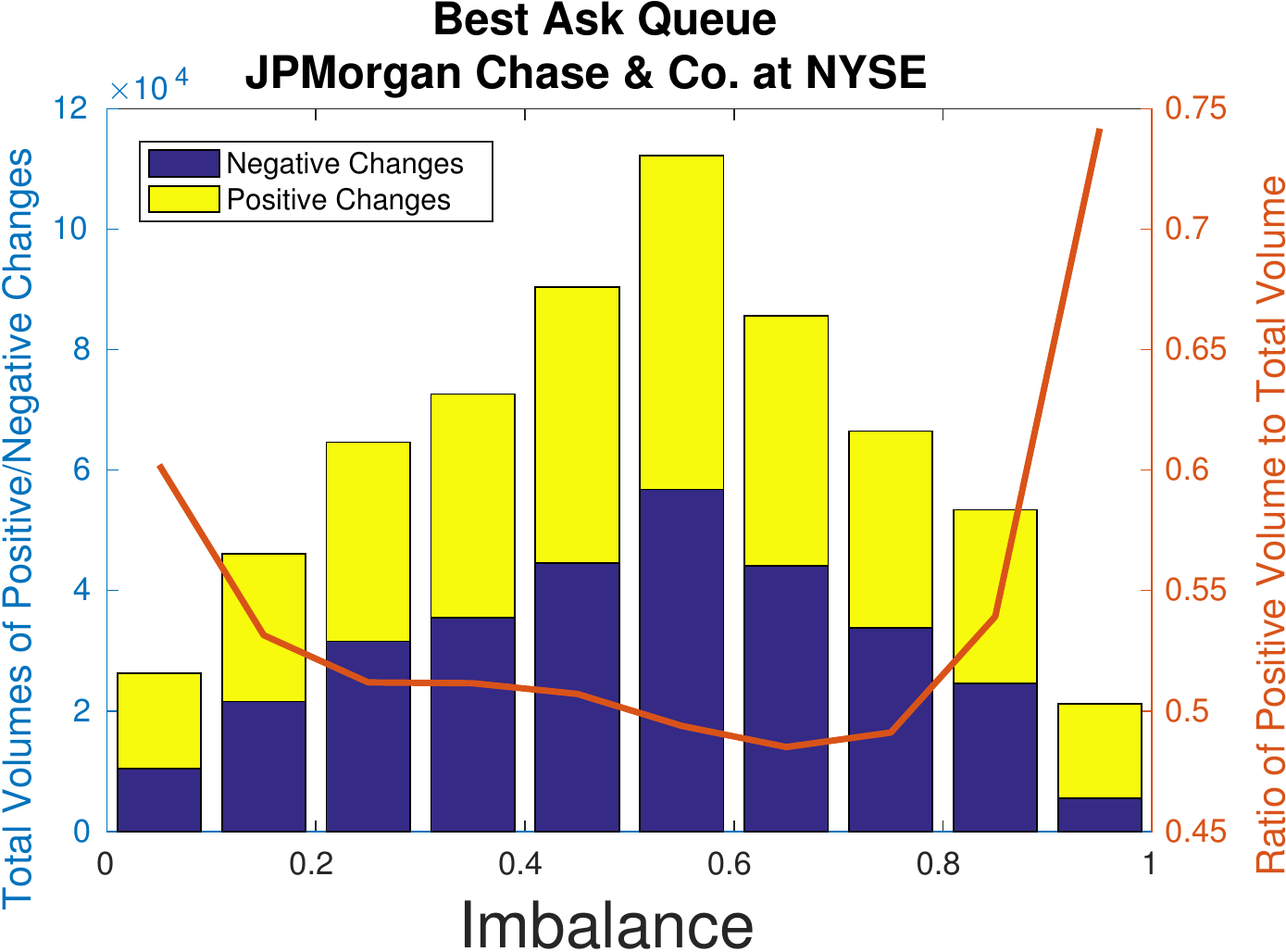}

	\caption{Positive and Negative Changes of the Volumes at the Best Bid and the Best Ask of JP Morgan \& Chase at NASDAQ and NYSE.
	The curve is the ratio of the positive changes to the total changes.}
	\label{JPMBar}
\end{figure}


Next, we investigate the correlation between the best bid and ask dynamics, the volatilities
at the best bid and ask queues, and their dependence on the imbalance. 
We summarized our observations for the Bank of America stock in Figure \ref{BACcorr},
the General Electric stock in Figure \ref{GEcorr}, 
the General Motors stock in Figure \ref{GMcorr},
and the JP Morgan \& Chase stock in Figure \ref{JPMcorr}.
For example, let us take a look at the summary for the Bank of America stock
in Figure \ref{BACcorr}. The top row in Figure \ref{BACcorr} stands
for the correlation between the size changes at the best bid and the best ask (top left),
the standard deviation of size changes at the best bid (top middle), and 
the standard deviation of size changes at the best ask (top right)
for the Bank of America stock traded in NASDAQ. 
Similar statistics for the Bank of America stock traded in NYSE are summarized
in the bottom row in Figure \ref{BACcorr}. As we can see from the top left picture,
the correlation as a function of the imbalance, is a $W$-shaped curve for the Bank of America stock traded in NASDAQ
and from the bottom left picture a $U$-shaped curve for the Bank of America stock traded in NYSE. 
Similar pattern is observed also for the General Electric stock traded in NASDAQ and NYSE, see Figure \ref{GEcorr}.
The $U$-shaped curve is observed for General Motors and JP Morgan \& Chase traded in both NASDAQ and NYSE,
see Figure \ref{GMcorr} and Figure \ref{JPMcorr}. Indeed, we studied some other stocks as well
in the WRDS database and empirical studies suggest that $U$-shape curves and $W$-shaped curves
are universal for the correlation between the size changes at the best bid and ask for most stocks.
It also holds that the correlation in general is negative but is far away from $-1$. It is curious why a typical relation
of the correlation between the size changes at the best bid and ask and the imbalance of the best bid and ask
can be represented by either a $U$-shaped curve or a $W$-shaped curve. It is also worth noting that
sometimes we get different shaped curves for different exchanges (Figure \ref{BACcorr}, Figure \ref{GEcorr})
and sometimes we get the same shaped curves for different exchanges (Figure \ref{GMcorr}, Figure \ref{JPMcorr}).
That can probably be explained by the fact that some high frequency and algorithmic trading firms
apply their trading strategies to a particular stock exchange only and the different trading strategies
result in the different patterns of the best bid and ask dynamics we observed from the data.
Figures \ref{BACcorr}, \ref{GEcorr}, \ref{GMcorr}, \ref{JPMcorr} also contain the information
about the standard deviations of the size changes at the best bid and best ask queues on NASDAQ and NYSE.
The general observation is that most of the time, the standard deviation increases as the imbalance increases
at the best bid queues and decreases as the imbalance increases at the best ask queues. 
Note that best bid size increases as imbalance increases and best ask size decreases as imbalance increases.
Hence, what we observed is that the standard deviations increases as the queue lengths increases.
This is not surprising at all. But what's interesting is that in many cases, it is not exactly monotone
and we see a sudden increase of the standard deviation when the imbalance is small for the best bid
and large for the best ask, that is, when the queue length is short. That suggests
that when the queue length is short, that is when the queue is about to get deleted, or when there is a new queue created,
the volatilities tend to be large. In general, the volatilities of the empirical data 
tend to be noisier than the correlations, 
which is either a $U$-shaped or a $W$-shaped curve. Nevertheless,
it is quite often to observe the skewed $U$-shaped curves. For example,
in top middle and top right pictures in Figure \ref{BACcorr}, Figure \ref{GEcorr}, Figure \ref{GMcorr}
and Figure \ref{JPMcorr}, we have the skewed $U$-shaped curves. 
For the best bid queues, it is skewed towards the left and for the best ask queues, it is skewed towards the right.
It is curious that for the stocks traded on NASDAQ, we have this universal skewed $U$-shapes for the volatilities.
But the data for the NYSE tend to be noisier and the pattern is not very clear. This once again
indicates the very different natures of the level-1 limit order dynamics across different exchanges.

We summarize the statistics of the correlations in Table \ref{corrTable}. As we can see, the correlation is almost
always negative. In terms of the numbers, the strongest correlation is $0.02$ achieved by
the Bank of America stock traded on NYSE with imbalance between $0.05$ and $0.10$.
The most negative correlation is achieved by JP Morgan traded on NYSE, which is $-0.34$, that is far
away from $-1$.
One interesting observation is that when the imbalance is between $0.2$ and $0.8$, 
from Table \ref{corrTable}, we can see that the correlation of the stock traded on NYSE is always
more negative than the correlation of the same stock traded on NASDAQ\footnote{with the exception of JP Morgan when the imbalance is between 0.55 and 0.60}. 
As we mentioned earlier, 
the fragmentation and discrepancy of the stock exchanges is well documented in the literature. 
For example, we can ask the question why the correlation of stocks traded on NYSE is more negative
than that of NASDAQ. 

\begin{figure}
	\centering
	\includegraphics[width=0.32\textwidth]{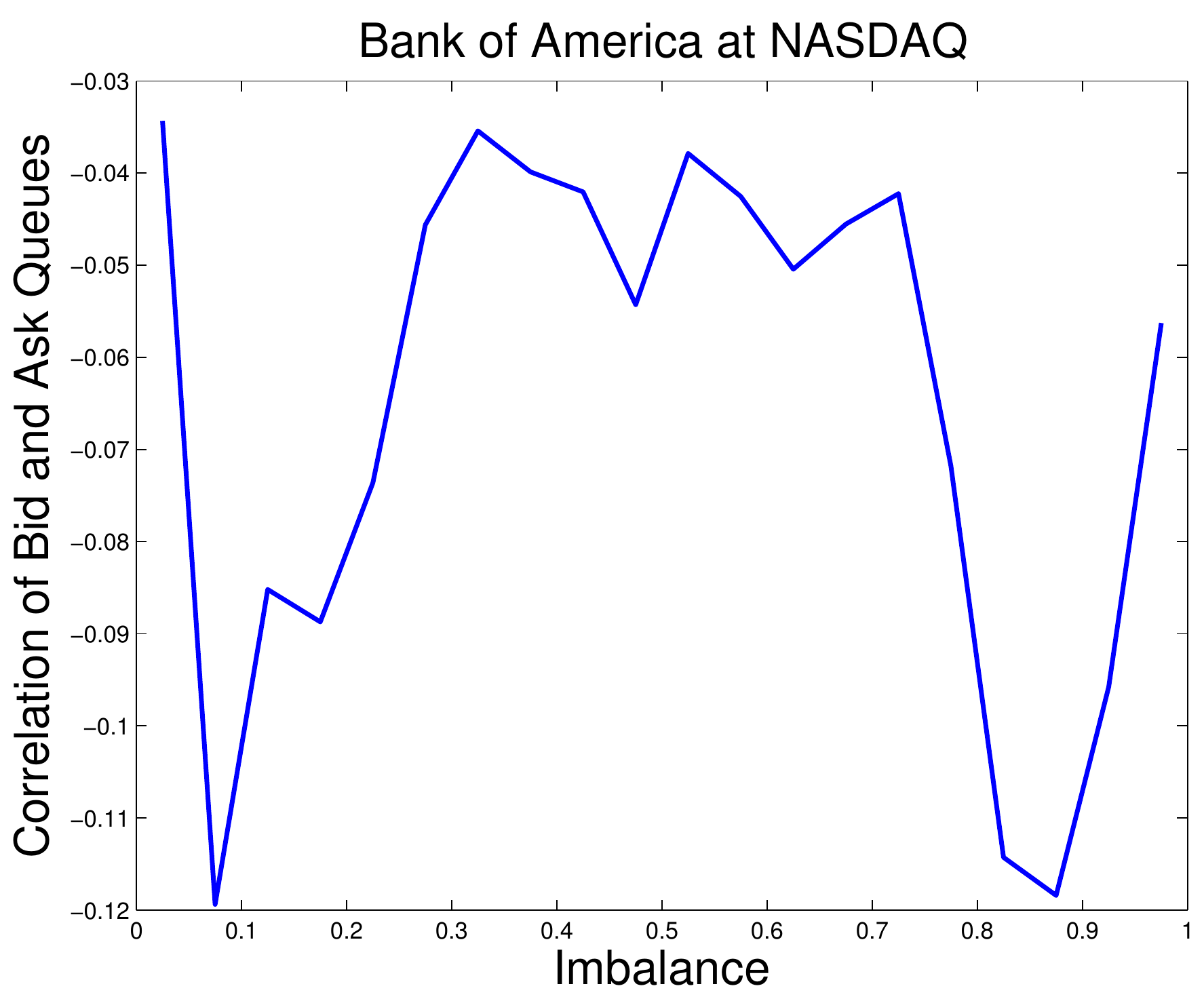}
	\includegraphics[width=0.32\textwidth]{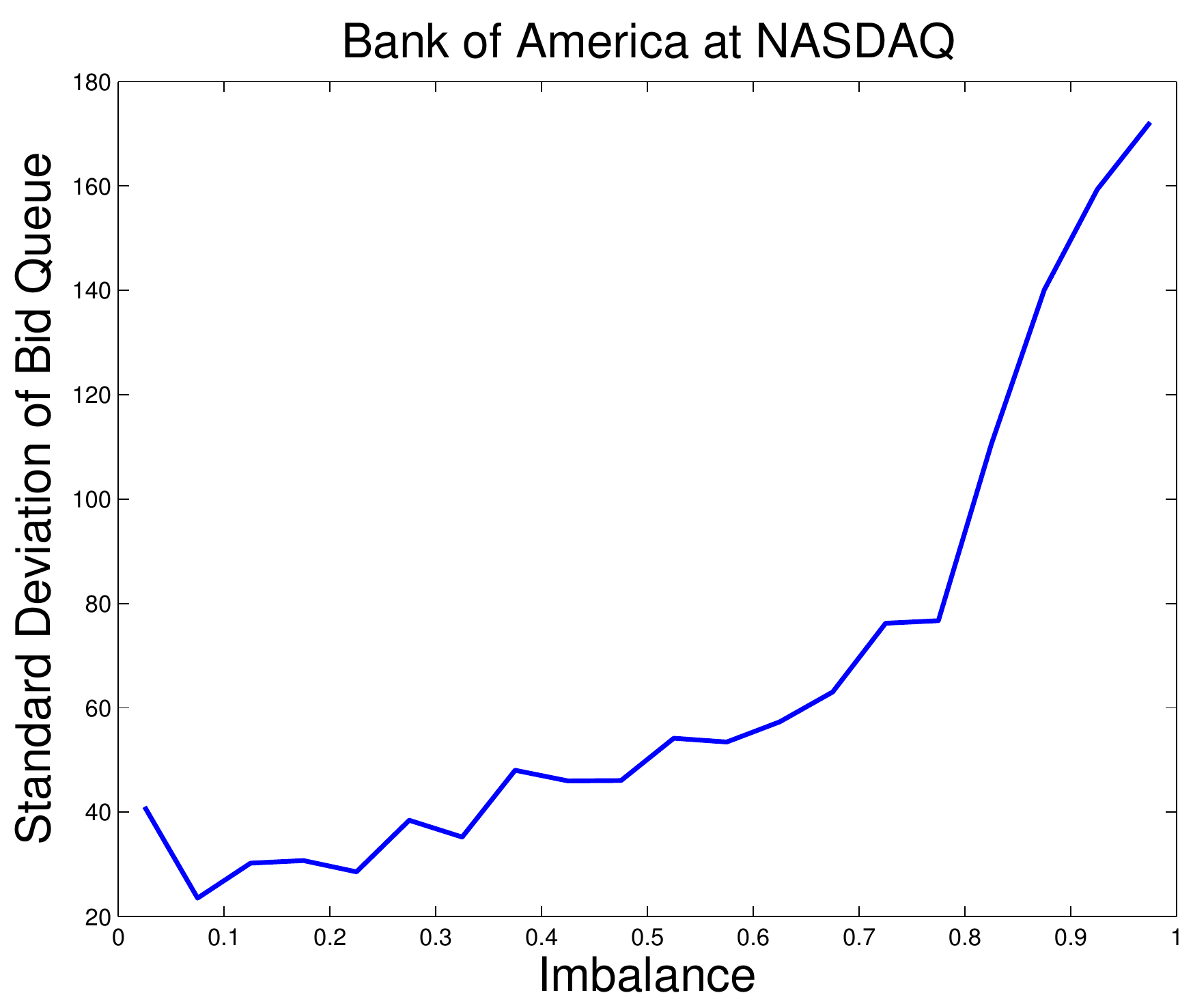}
	\includegraphics[width=0.32\textwidth]{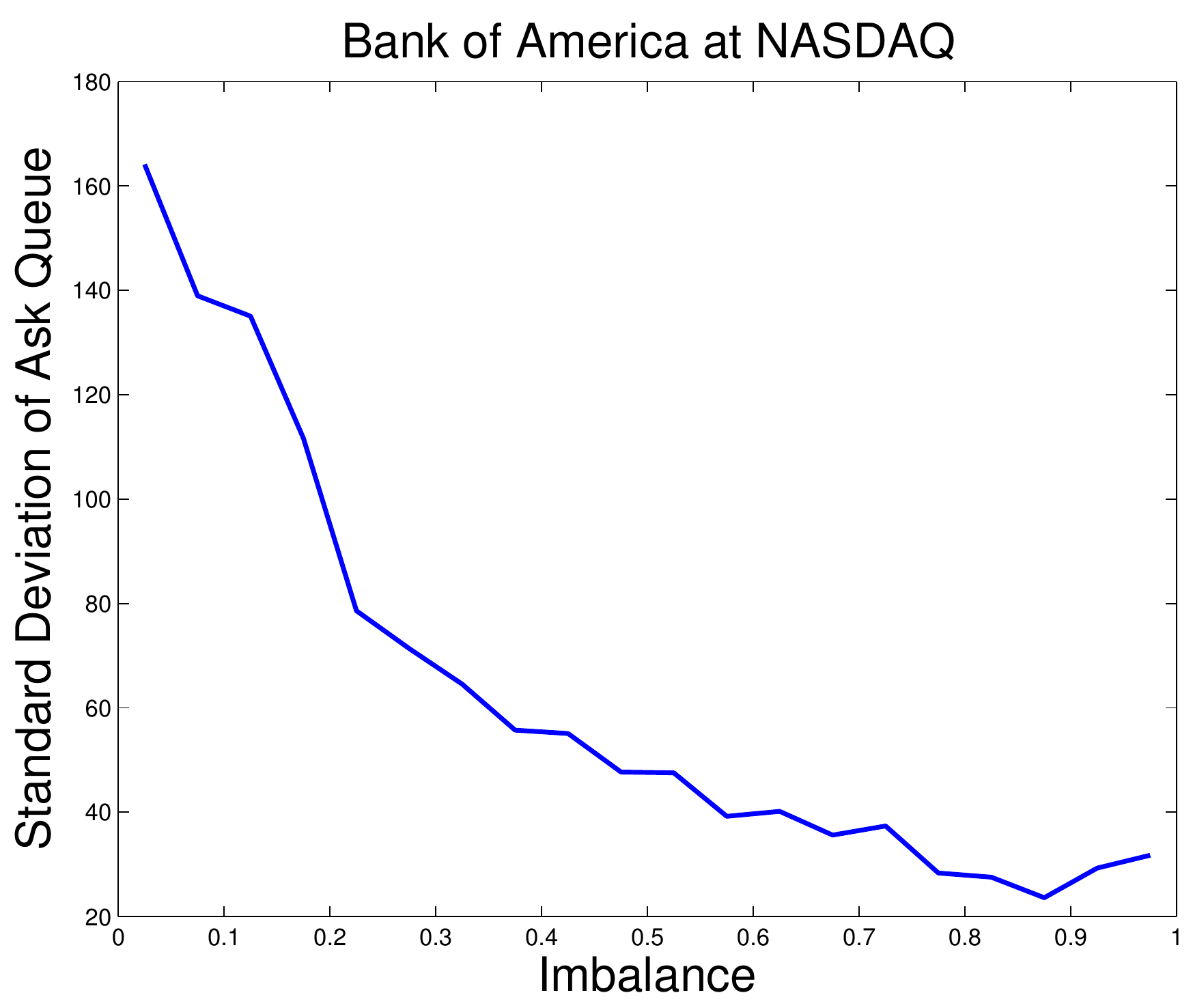}
	
	\includegraphics[width=0.32\textwidth]{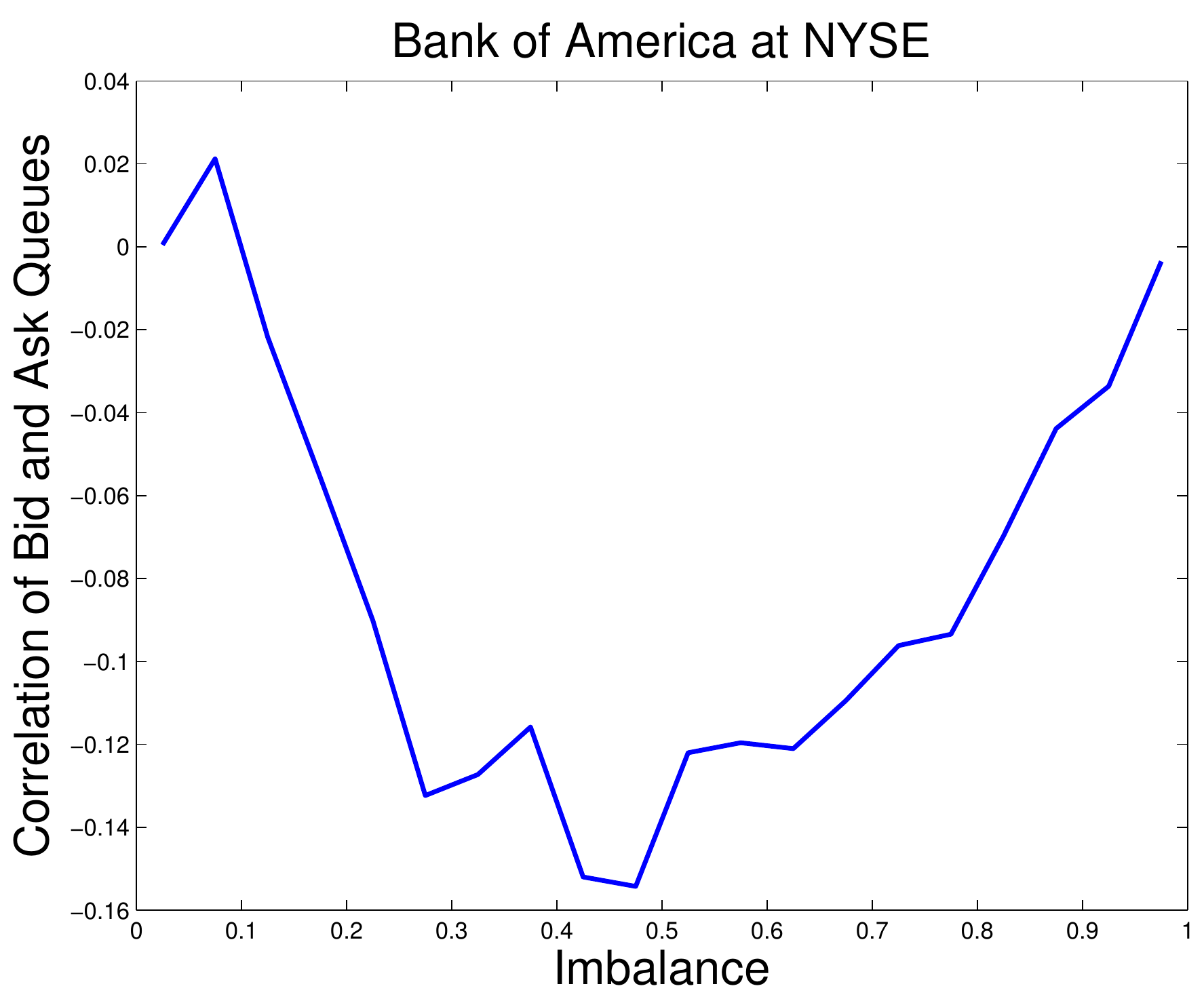}
	\includegraphics[width=0.32\textwidth]{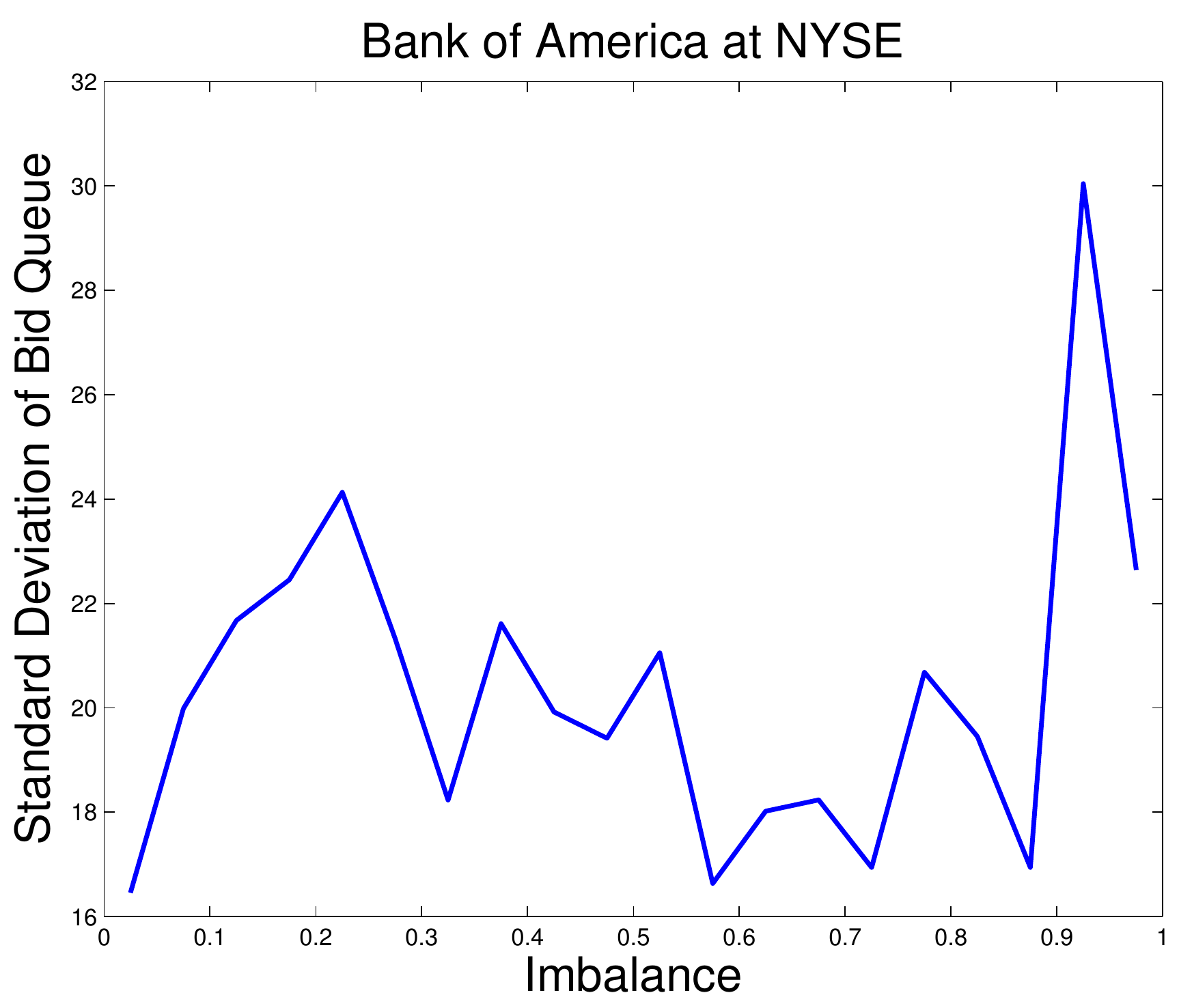}
	\includegraphics[width=0.32\textwidth]{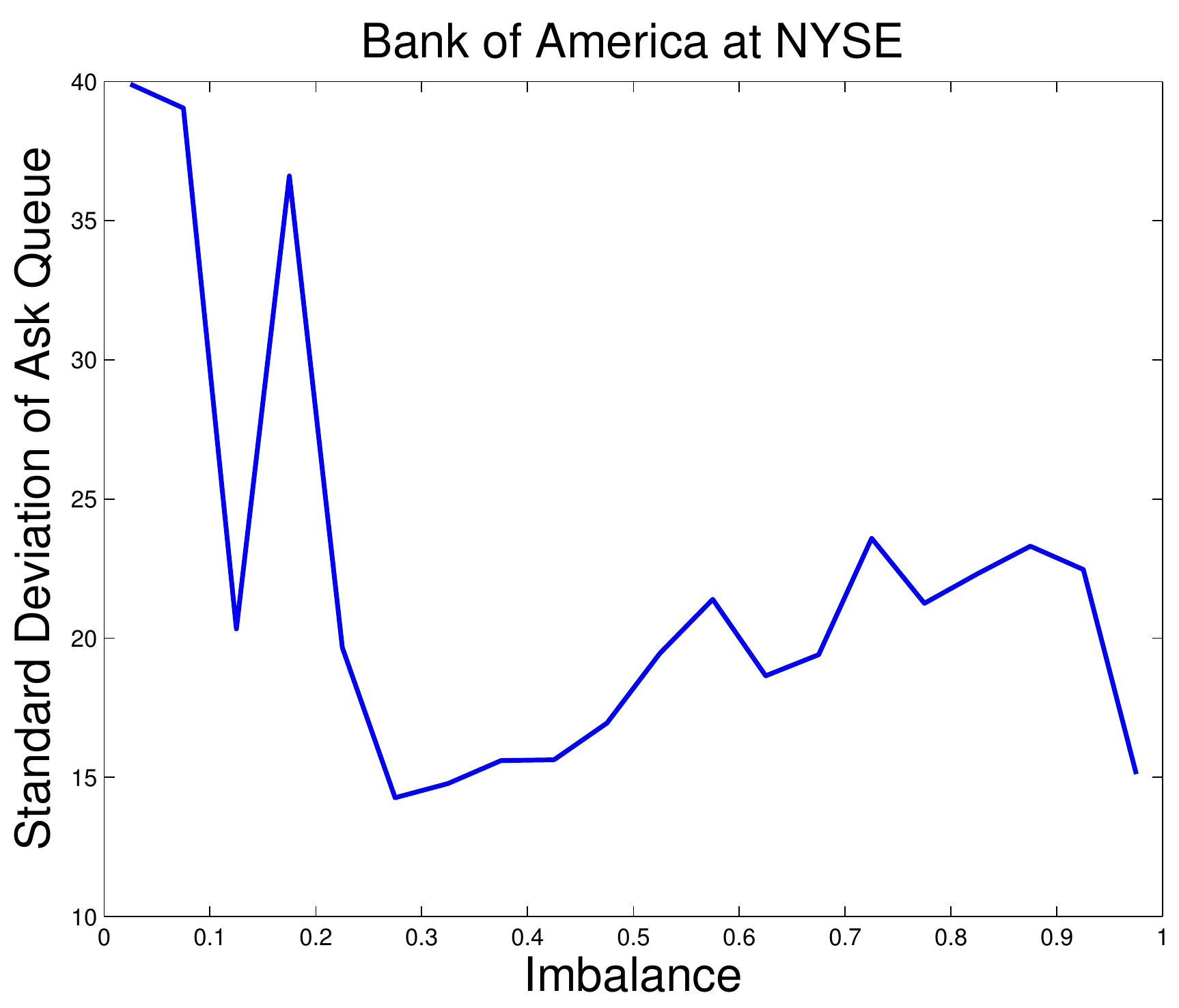}

	\caption{Correlations and Standard Deviations of the Volumes at the Best Bid and the Best Ask
		of Bank of America at NASDAQ and NYSE}
	\label{BACcorr}
\end{figure}

\begin{figure}
	\centering
	\includegraphics[width=0.32\textwidth]{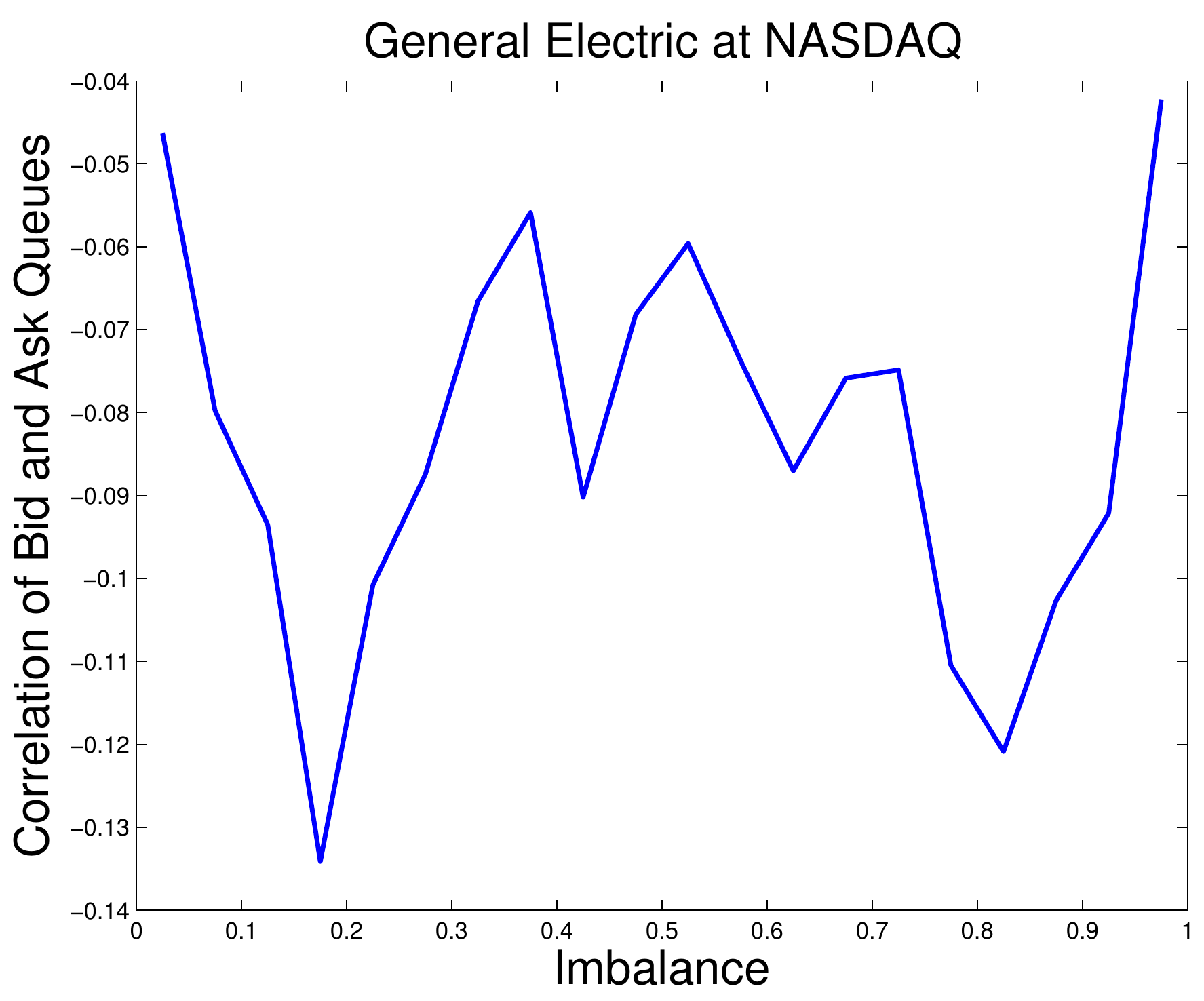}
	\includegraphics[width=0.32\textwidth]{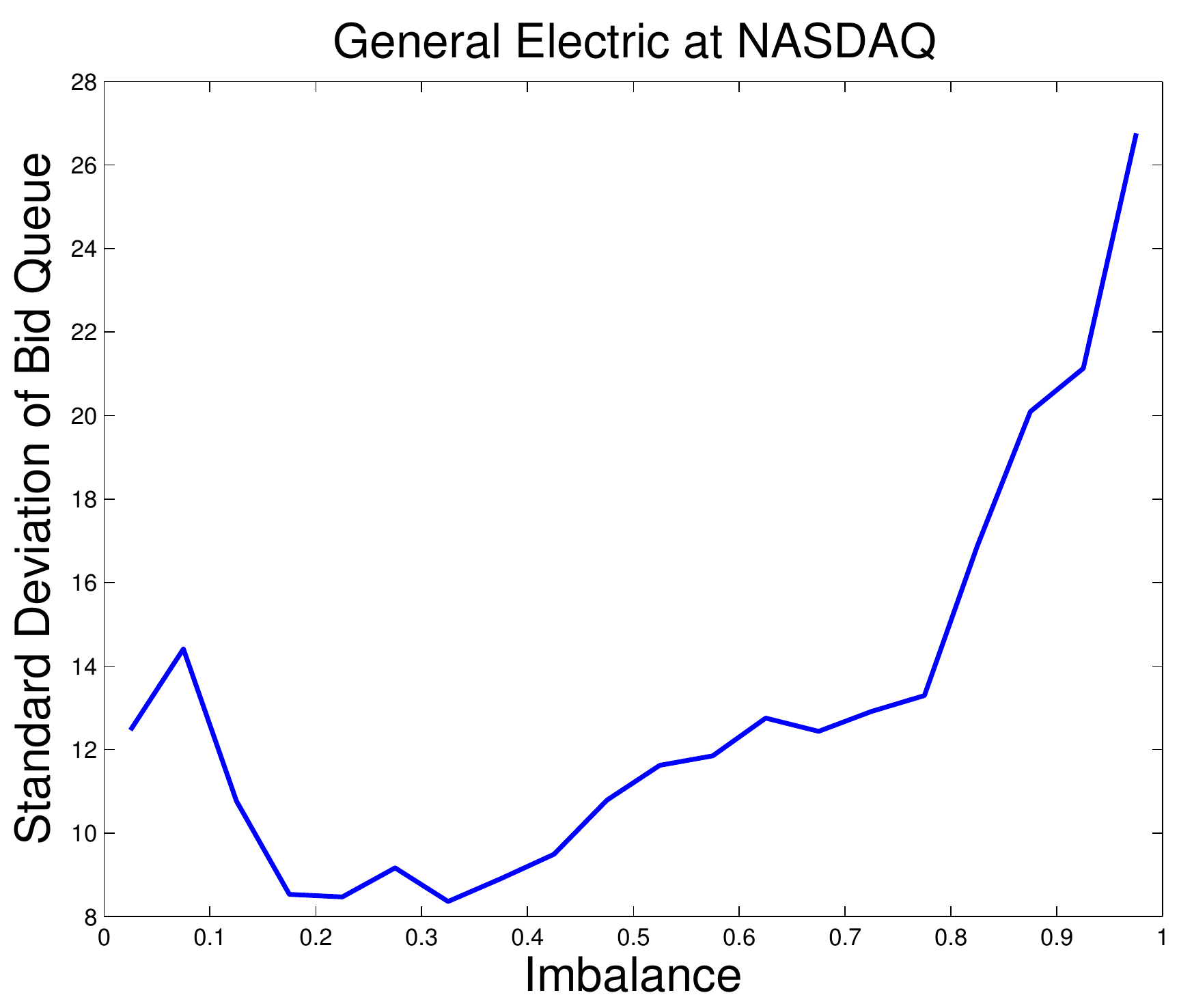}
	\includegraphics[width=0.32\textwidth]{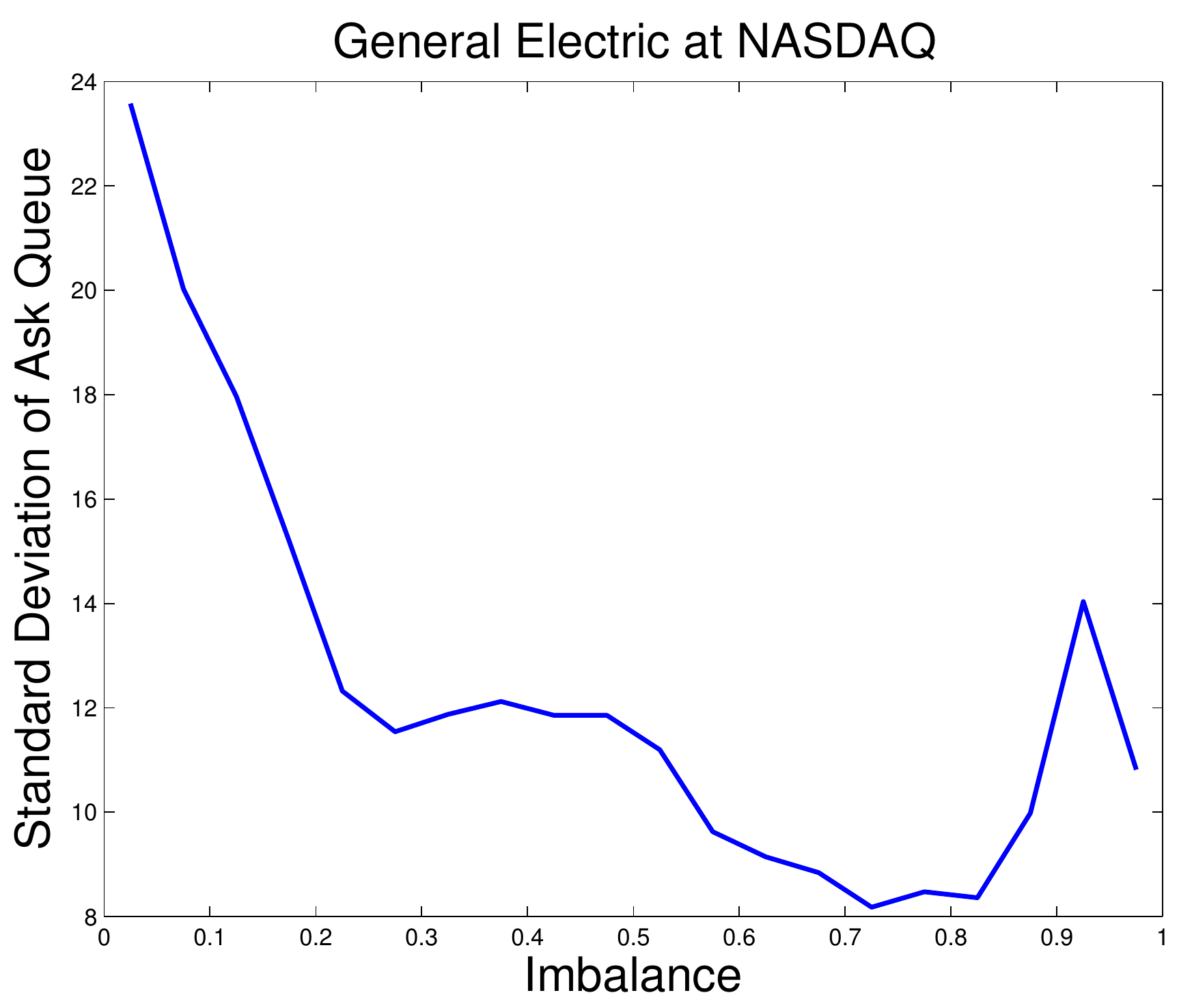}
	
	\includegraphics[width=0.32\textwidth]{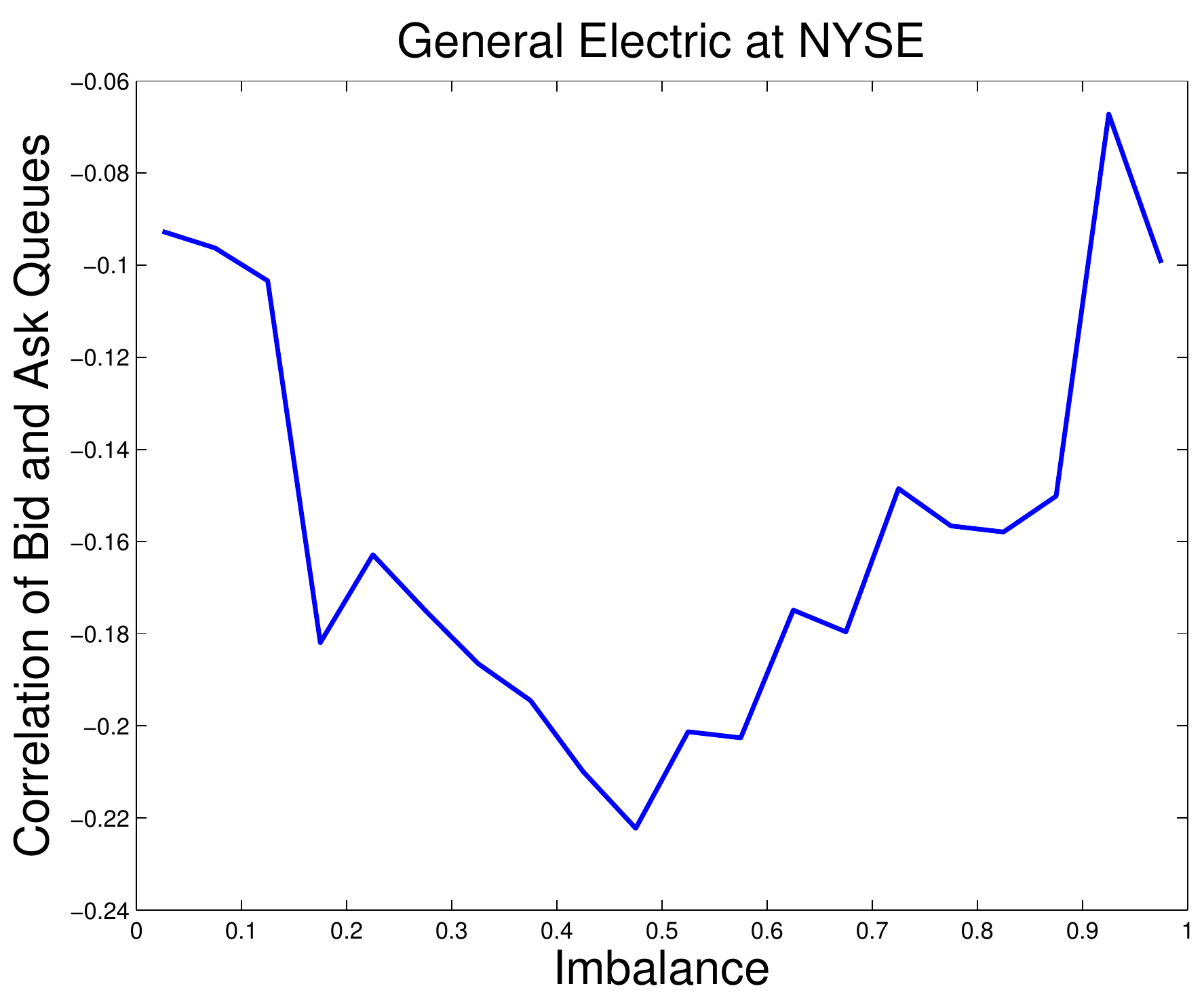}
	\includegraphics[width=0.32\textwidth]{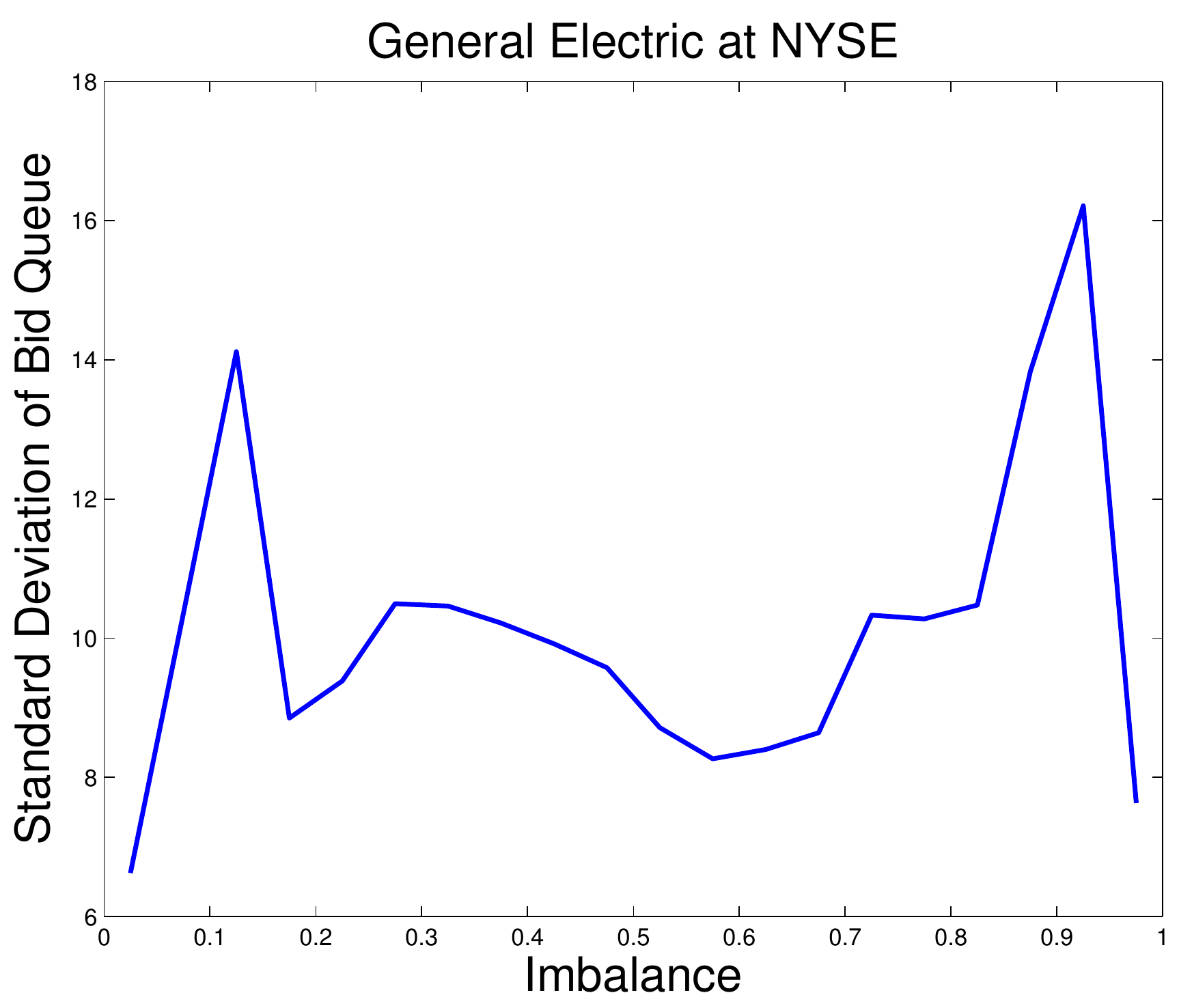}
	\includegraphics[width=0.32\textwidth]{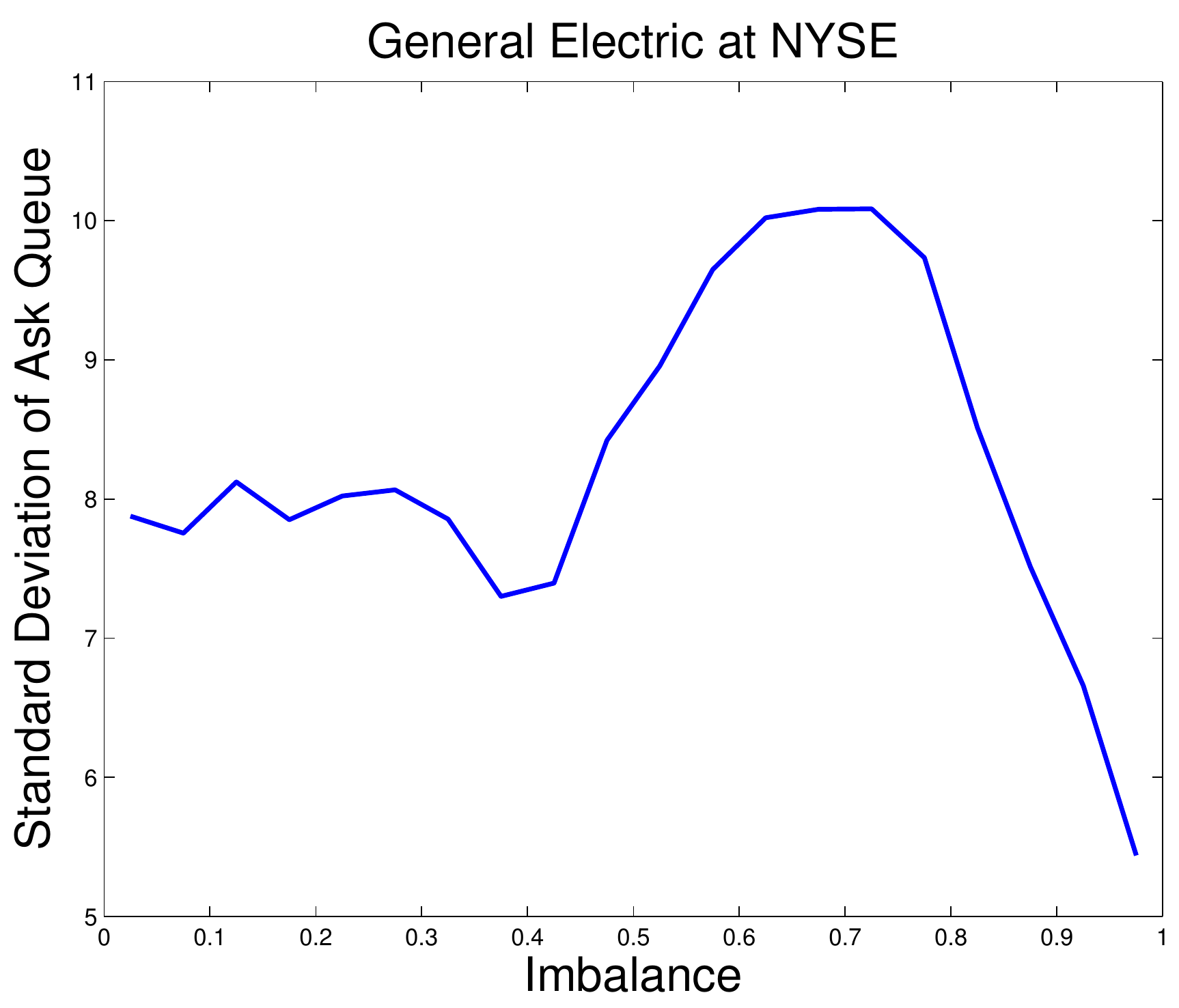}

	\caption{Correlations and Standard Deviations of the Volumes at the Best Bid and the Best Ask
		of General Electric at NASDAQ and NYSE}
	\label{GEcorr}
\end{figure}

\begin{figure}
	\centering
	\includegraphics[width=0.32\textwidth]{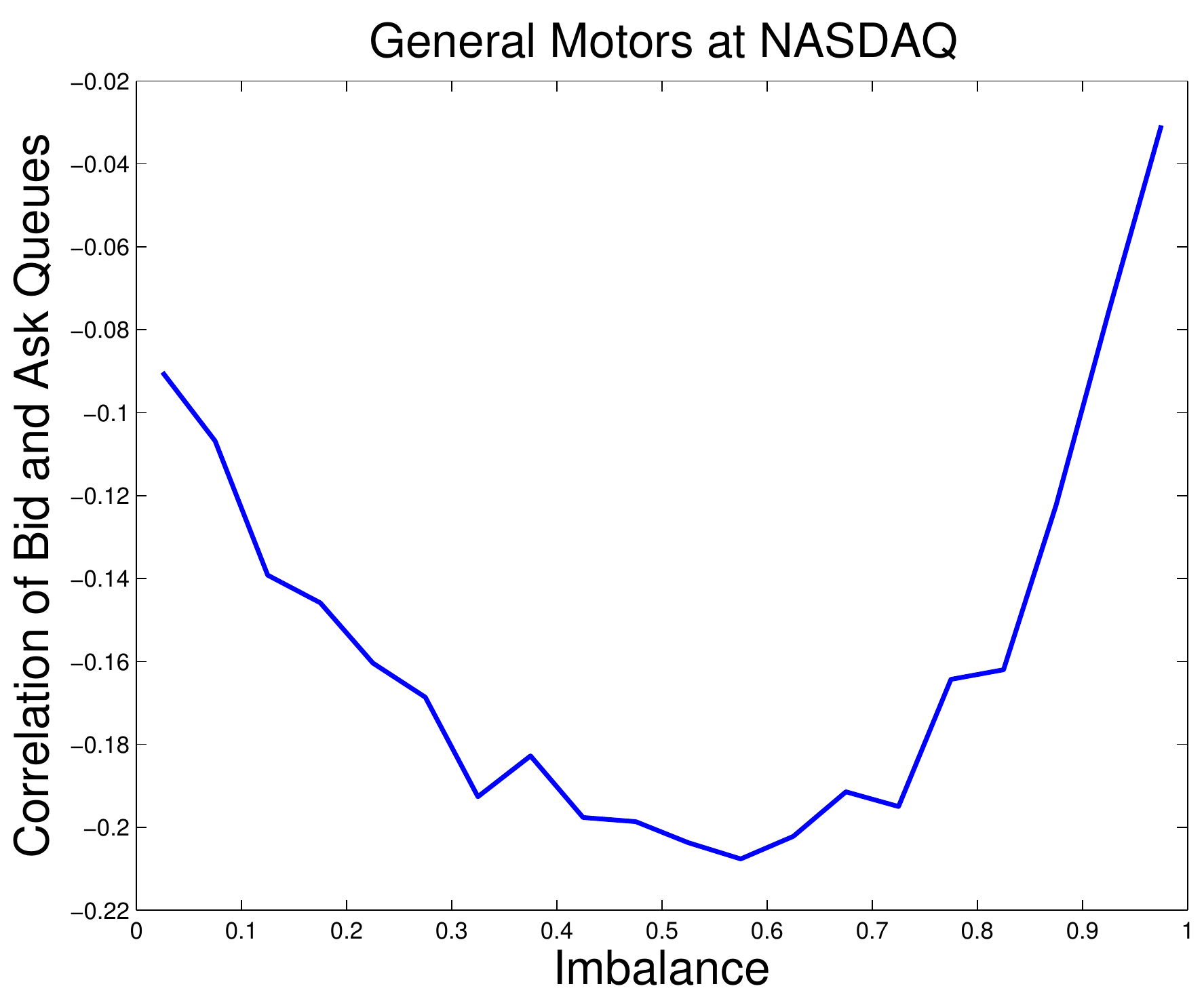}
	\includegraphics[width=0.32\textwidth]{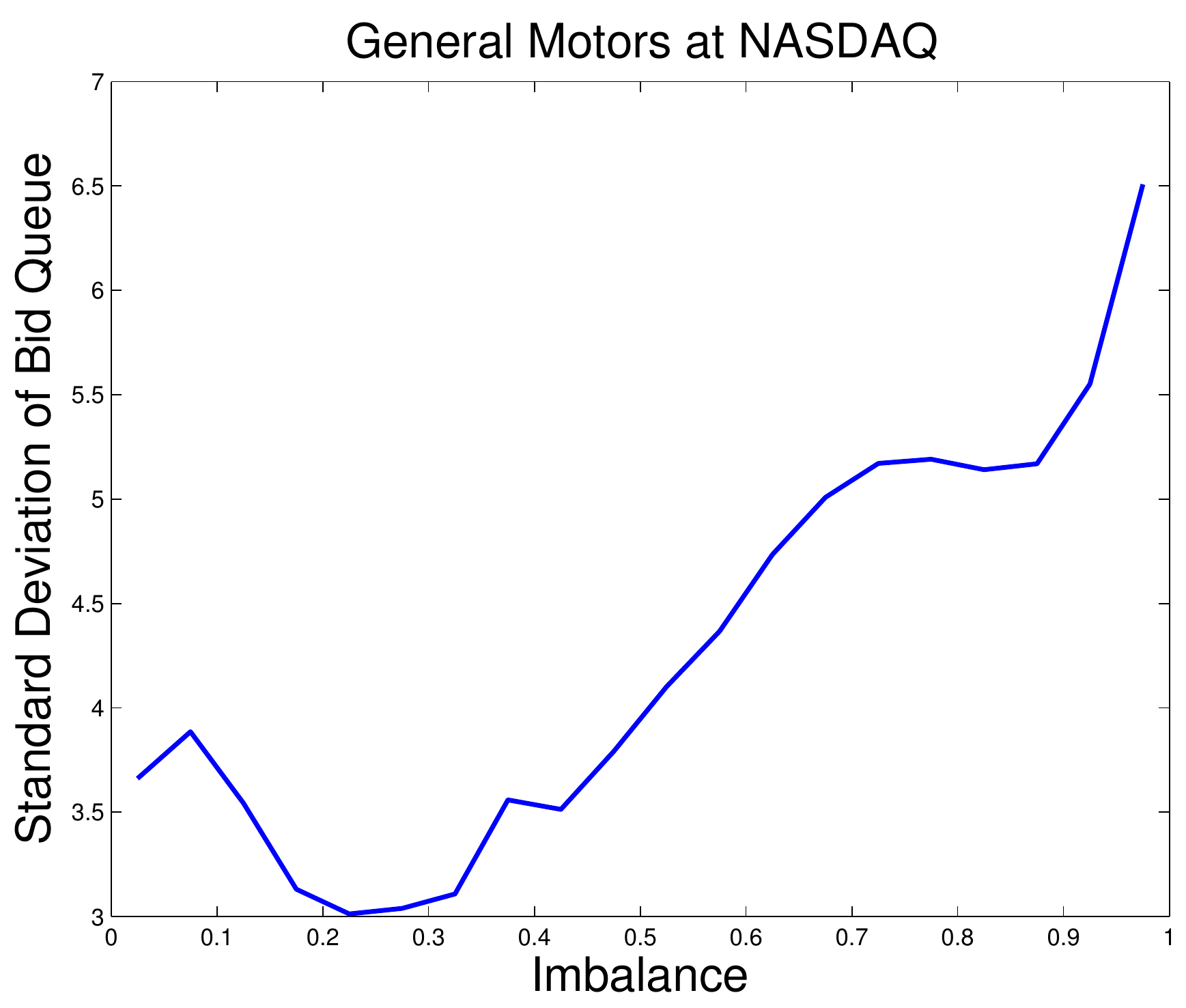}
	\includegraphics[width=0.32\textwidth]{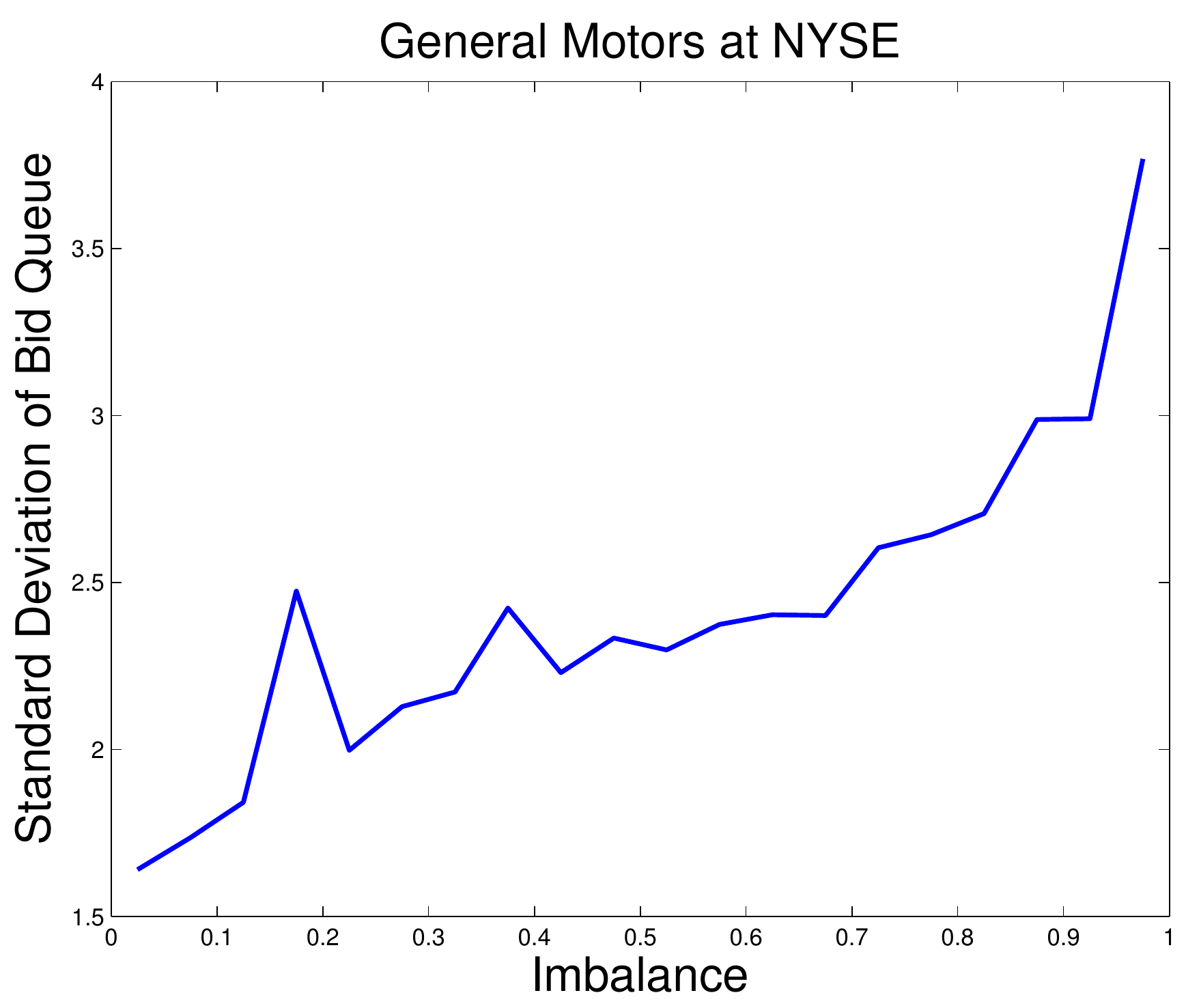}
	
	\includegraphics[width=0.32\textwidth]{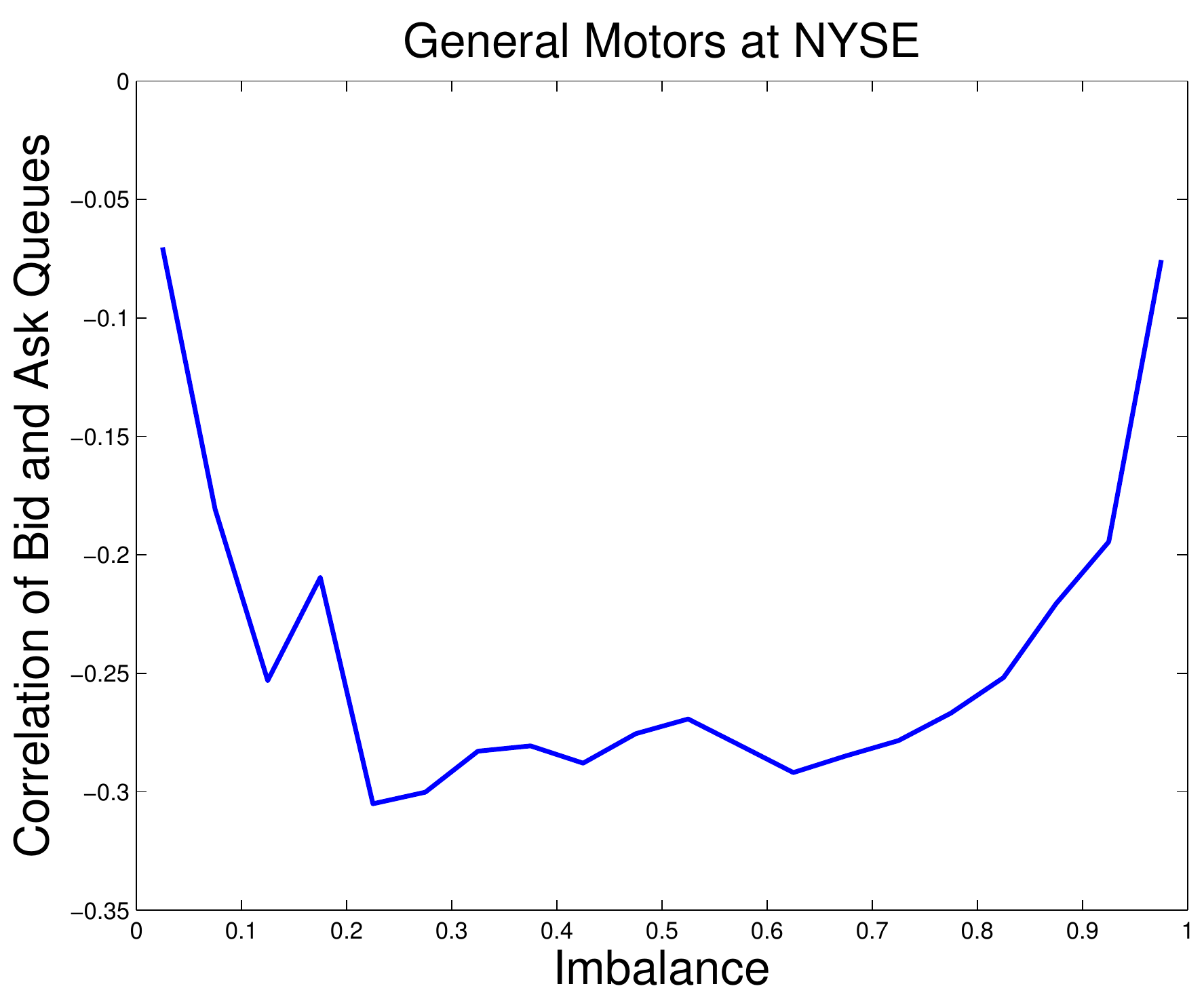}
	\includegraphics[width=0.32\textwidth]{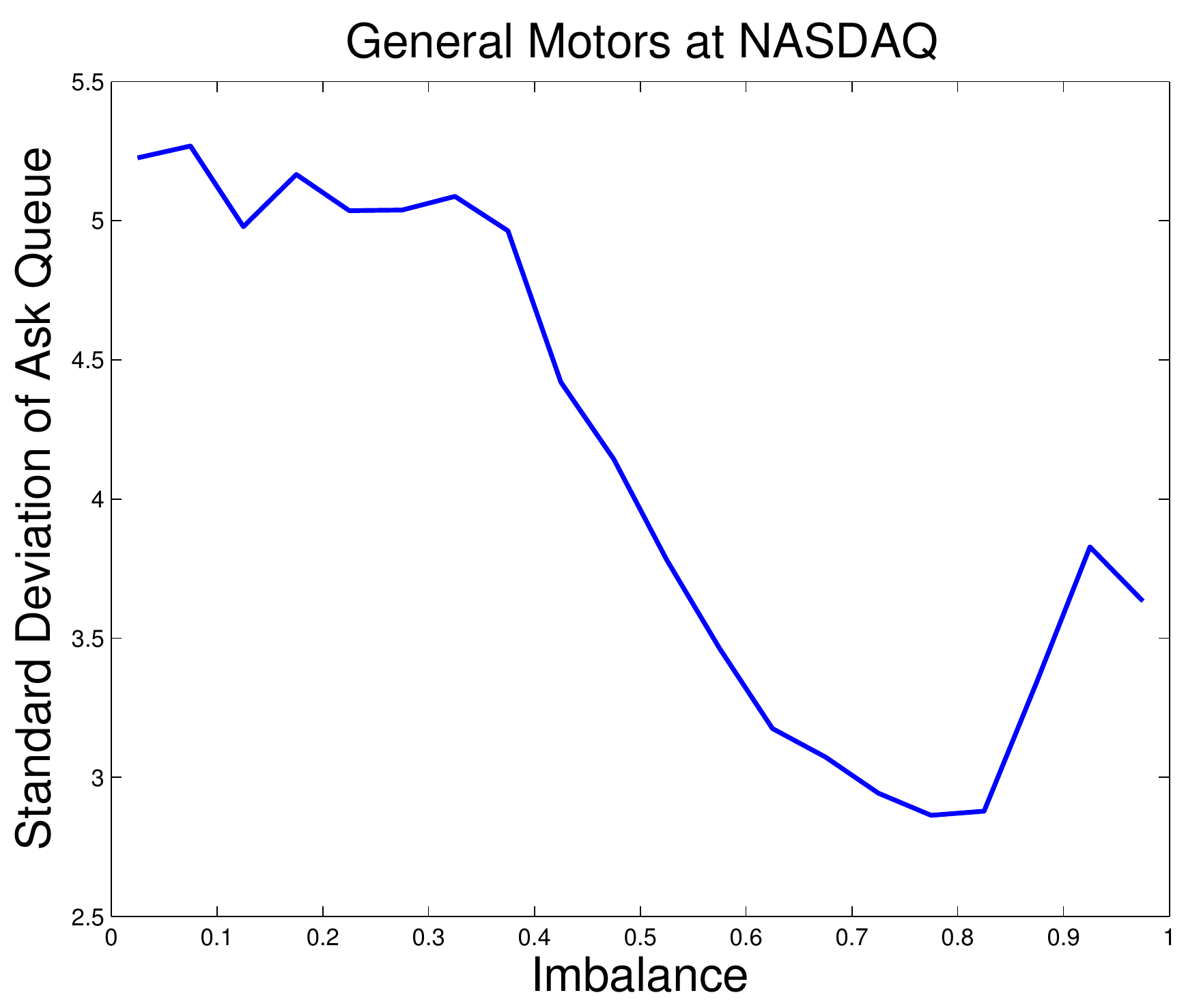}
	\includegraphics[width=0.32\textwidth]{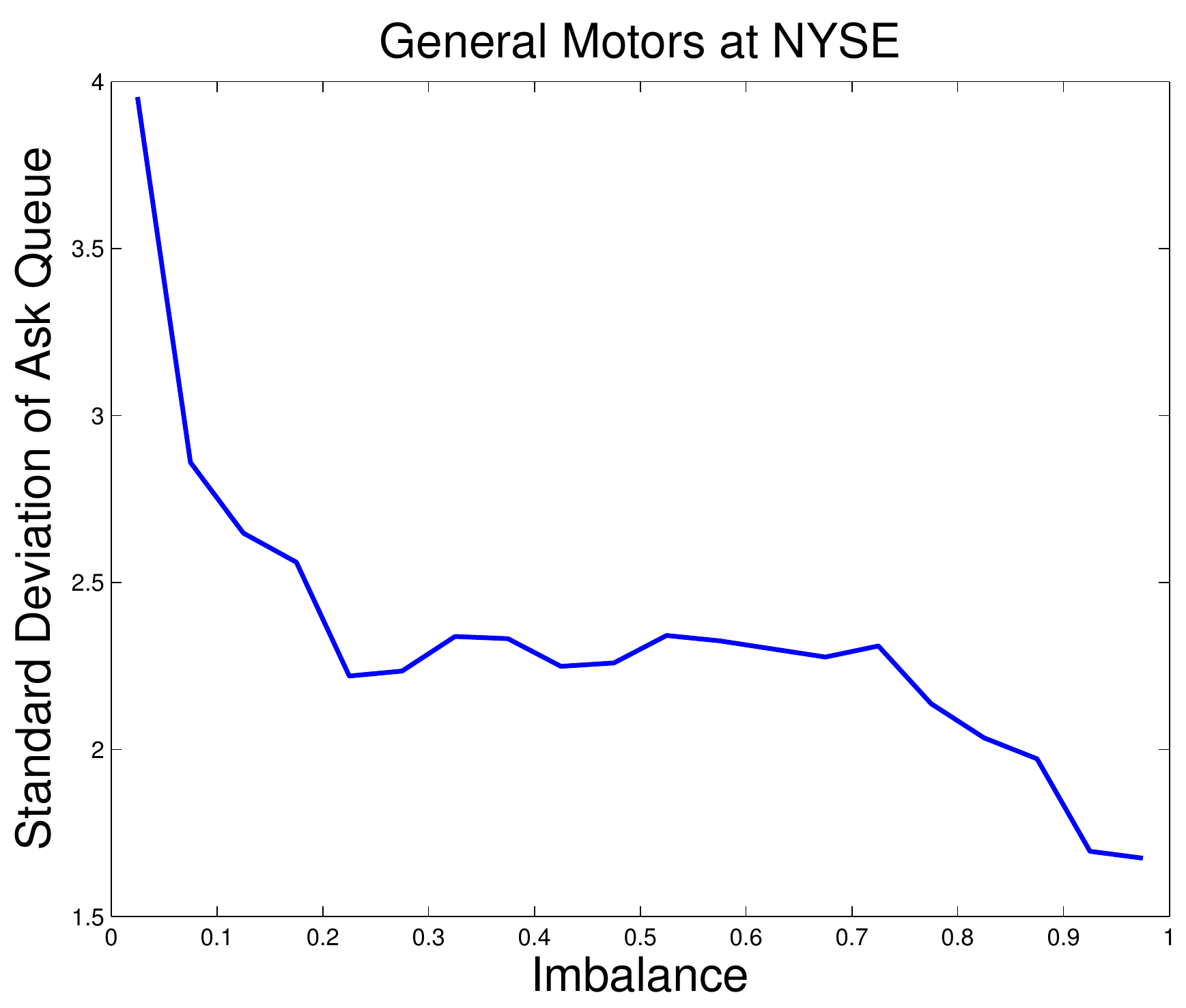}

	\caption{Correlations and Standard Deviations of the Volumes at the Best Bid and the Best Ask
		of General Motors at NASDAQ and NYSE}
	\label{GMcorr}
\end{figure}

\begin{figure}
	\centering
	\includegraphics[width=0.32\textwidth]{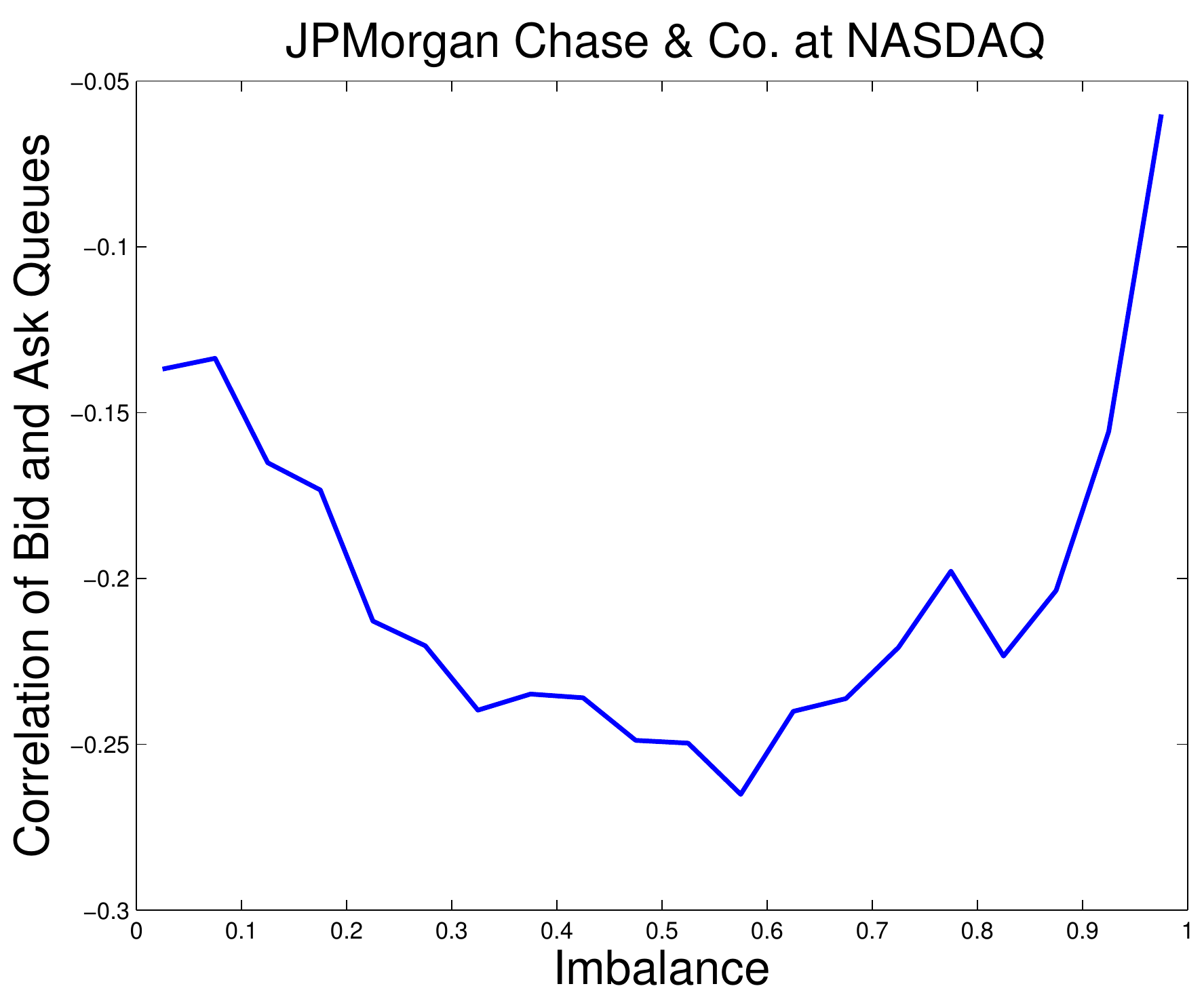}
	\includegraphics[width=0.32\textwidth]{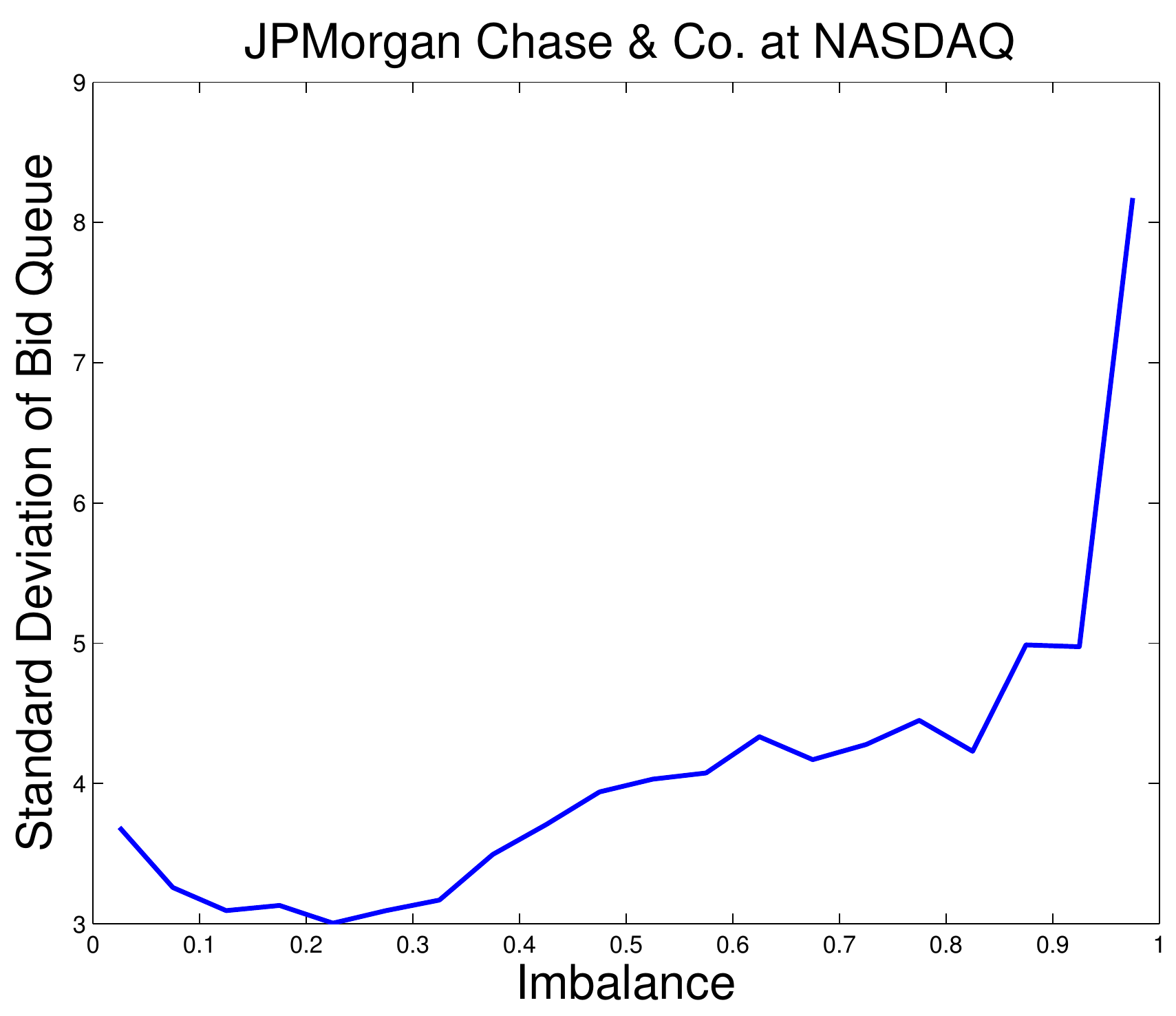}
	\includegraphics[width=0.32\textwidth]{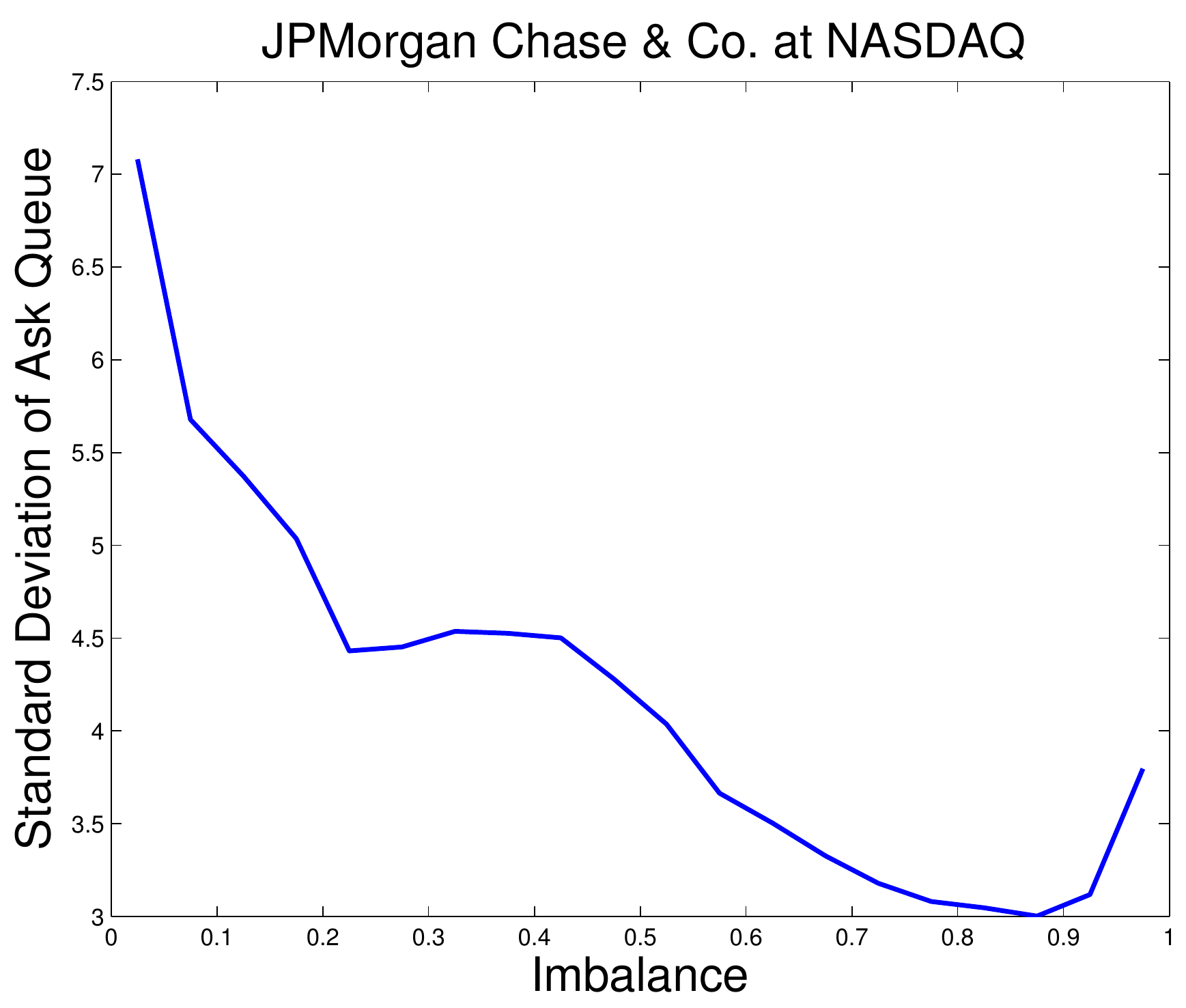}
	
	\includegraphics[width=0.32\textwidth]{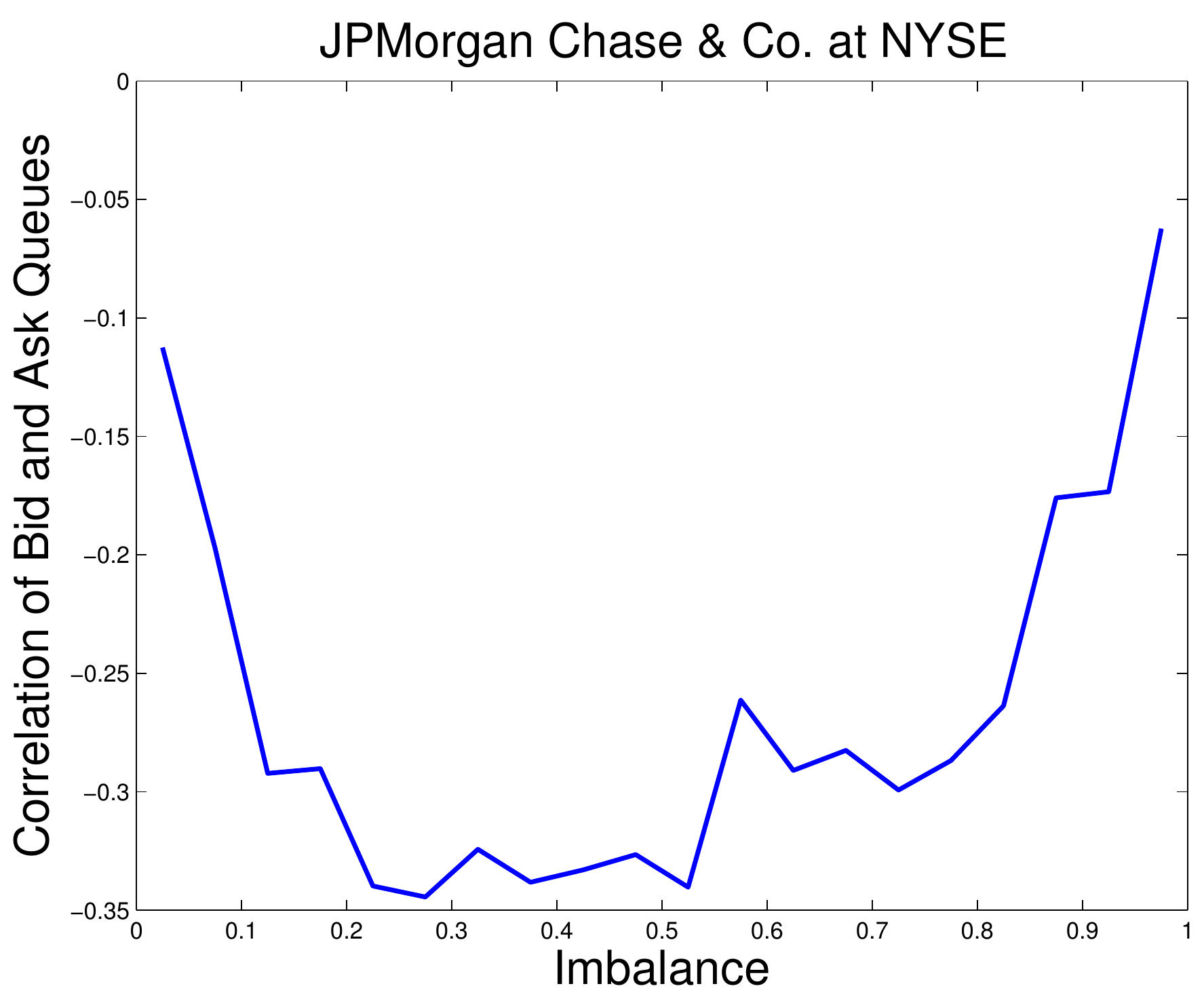}
	\includegraphics[width=0.32\textwidth]{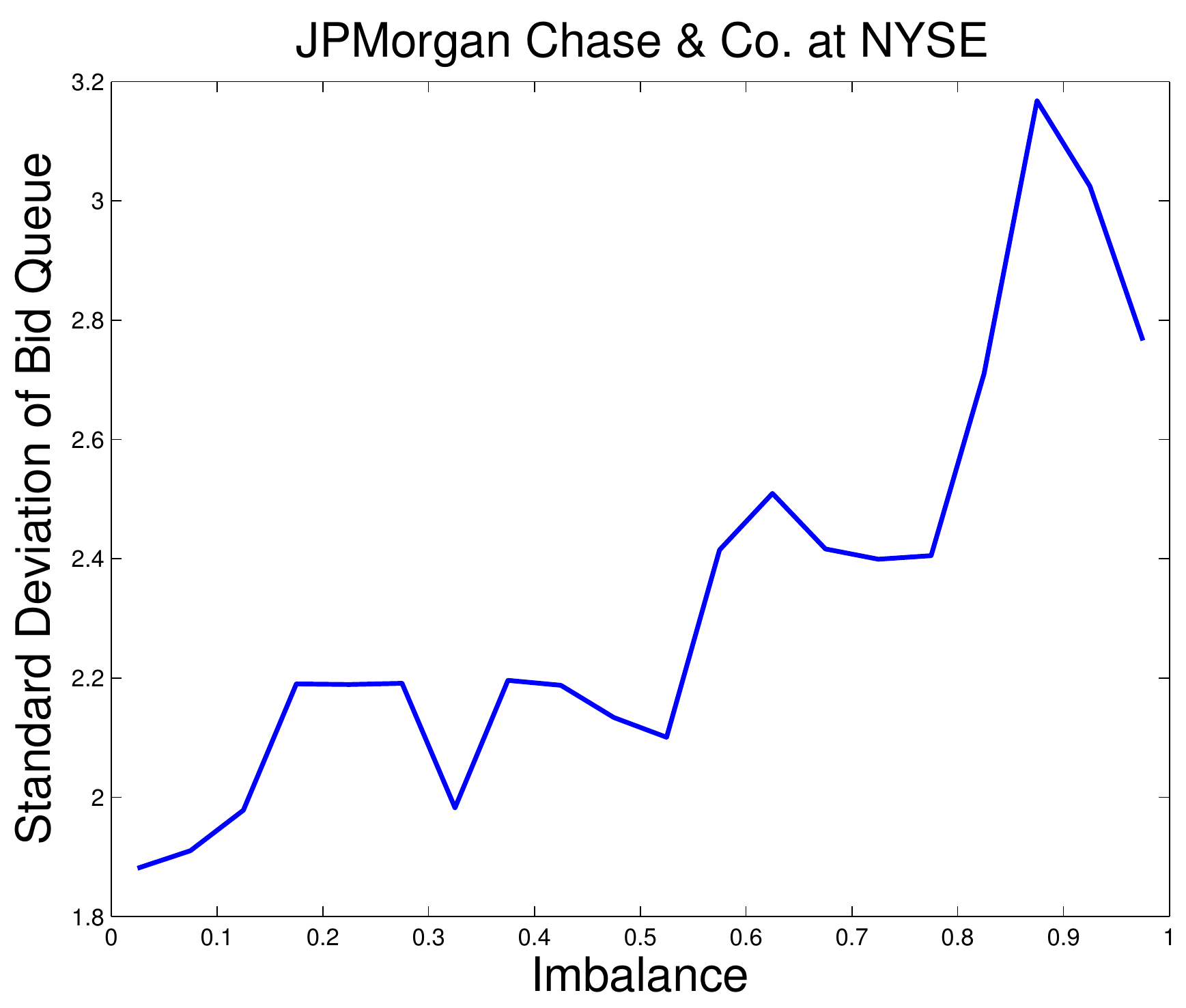}
	\includegraphics[width=0.32\textwidth]{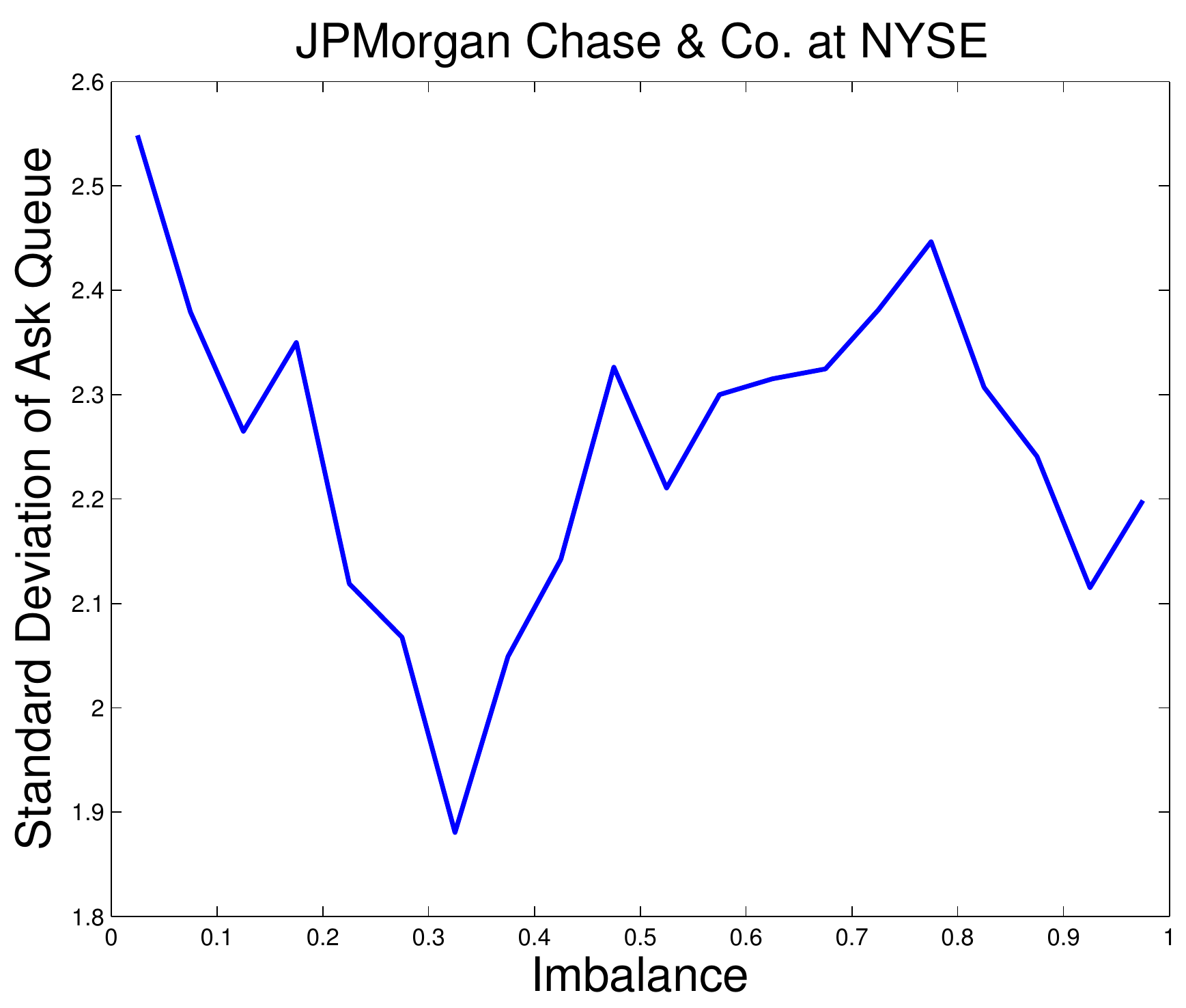}

	\caption{Correlations and Standard Deviations of the Volumes at the Best Bid and the Best Ask
		of JP Morgan \& Chase at NASDAQ and NYSE}
	\label{JPMcorr}
\end{figure}

\begin{figure}
	\centering
	\includegraphics[width=0.49\textwidth]{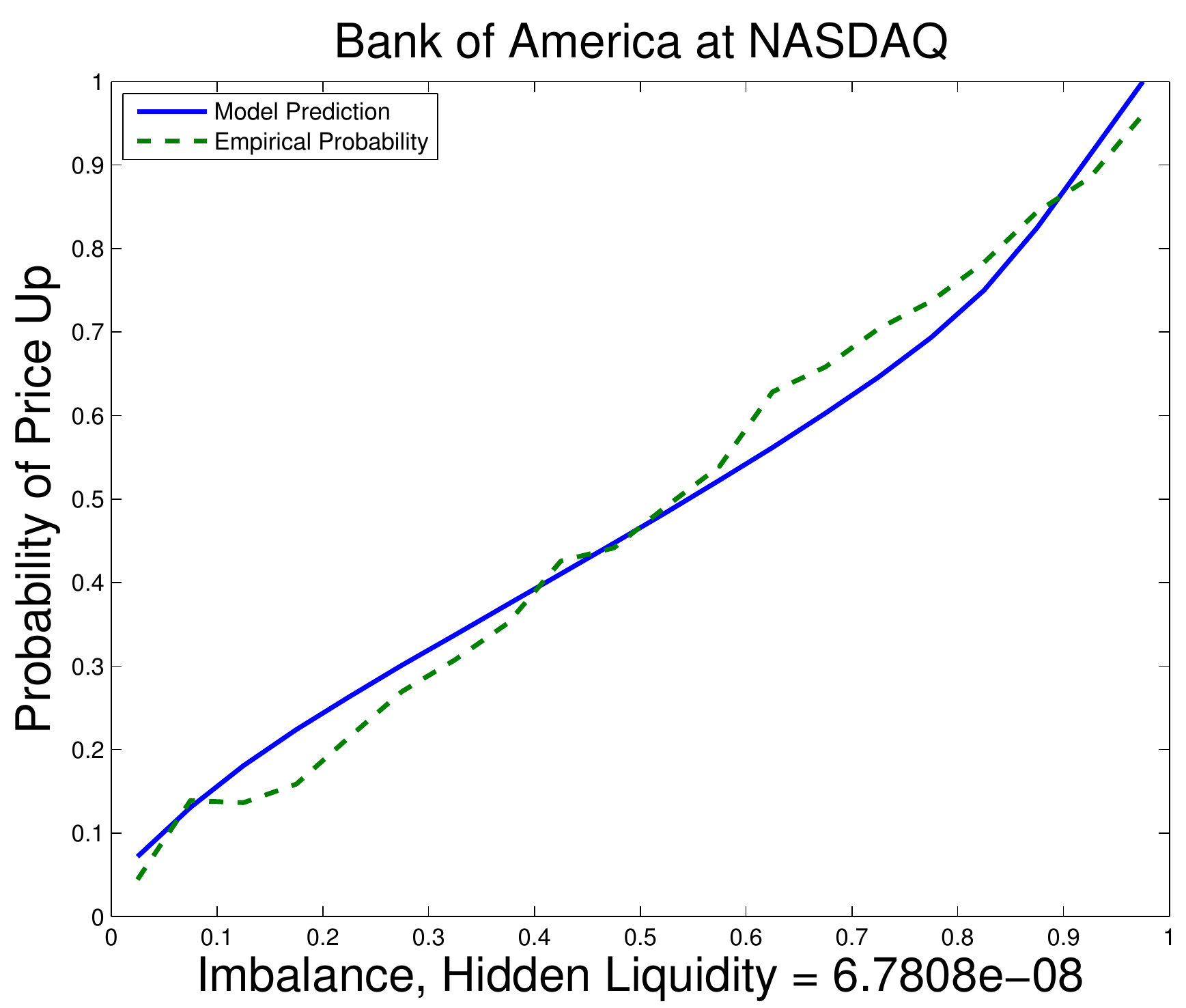}
	\includegraphics[width=0.49\textwidth]{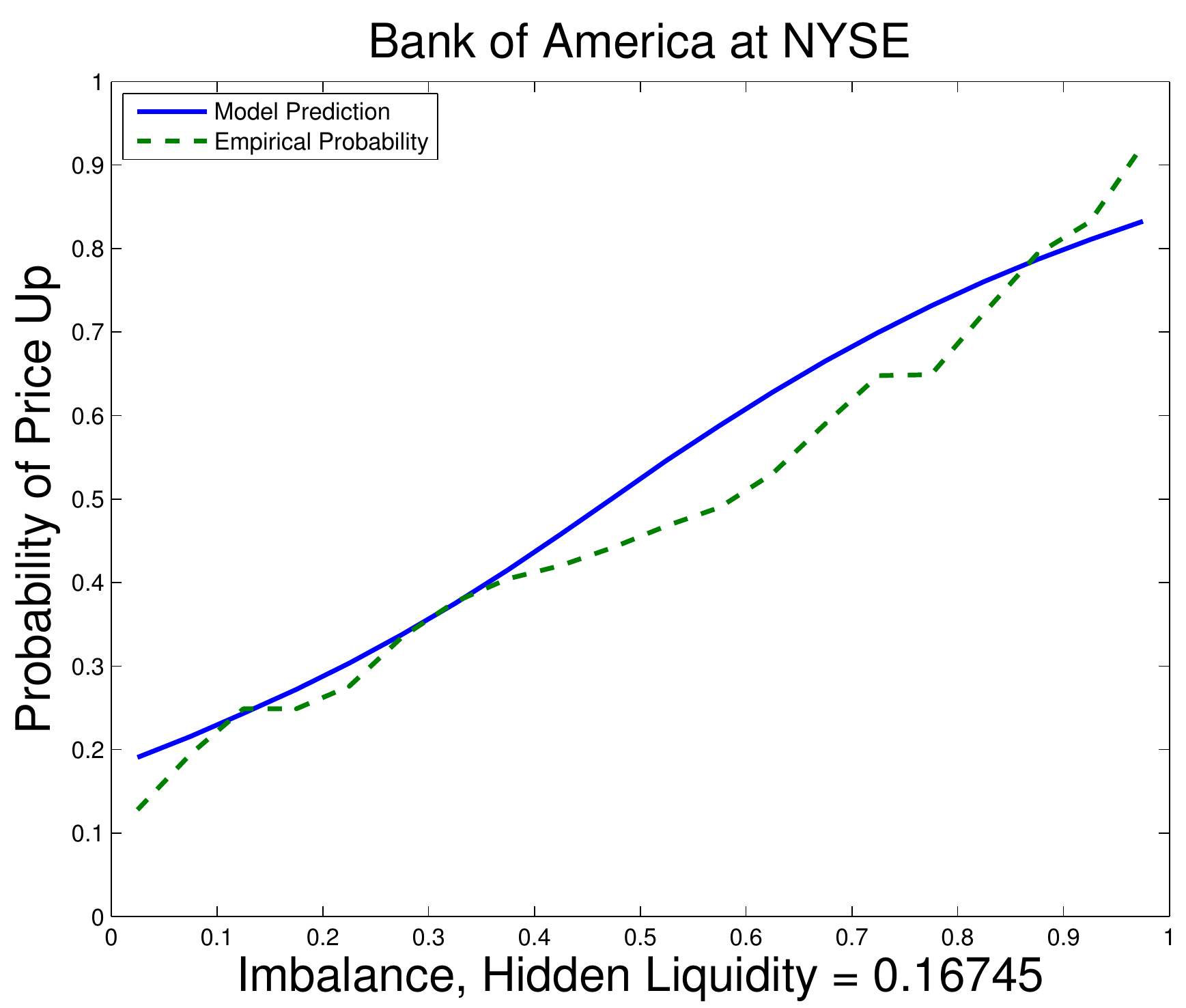}

	\caption{Empirical Probability (Dotted Lines) and Model Prediction (Solid Lines) of Bank of America}
	\label{ProbBAC}
\end{figure}

\begin{figure}
	\centering
	\includegraphics[width=0.49\textwidth]{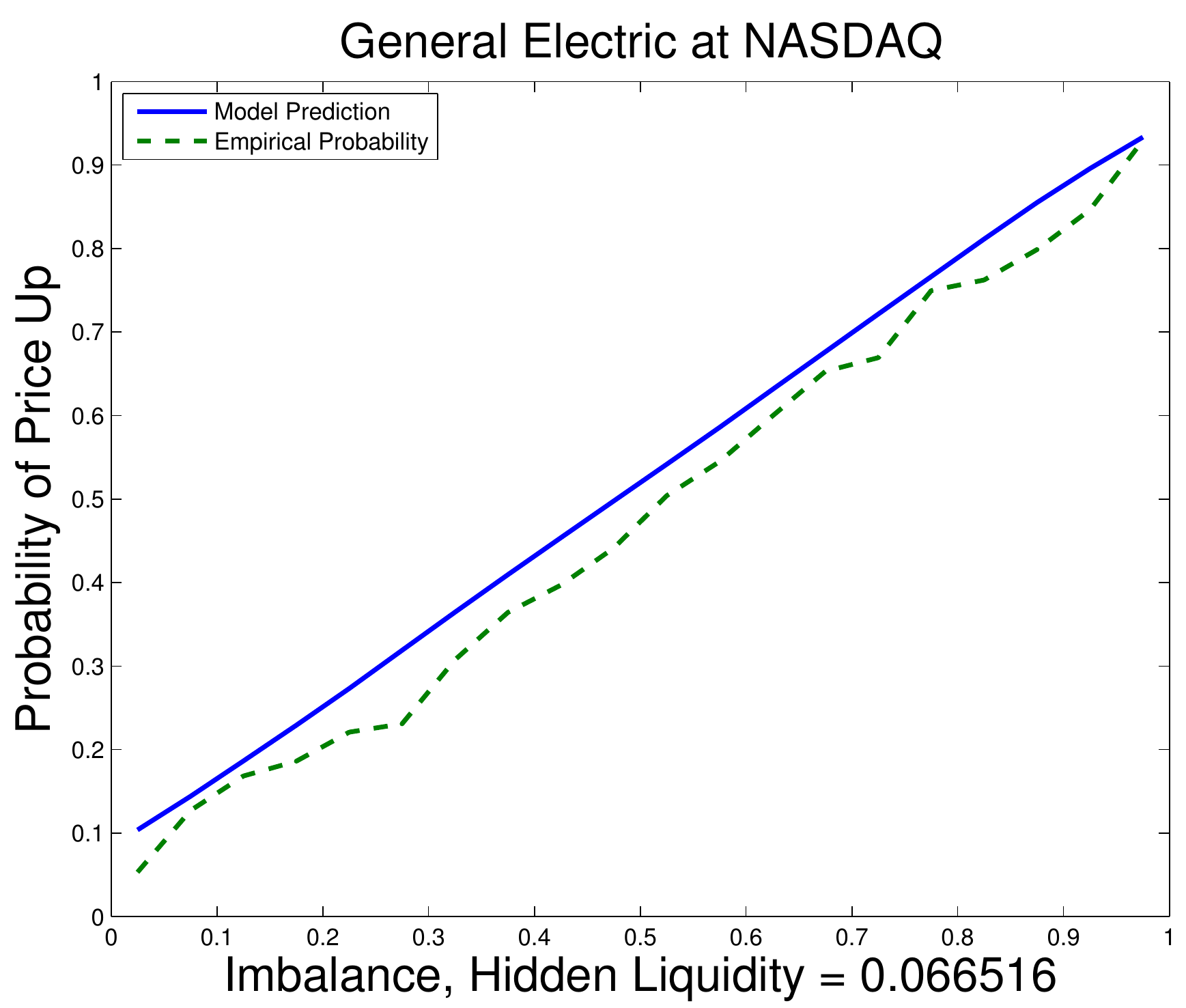}
	\includegraphics[width=0.49\textwidth]{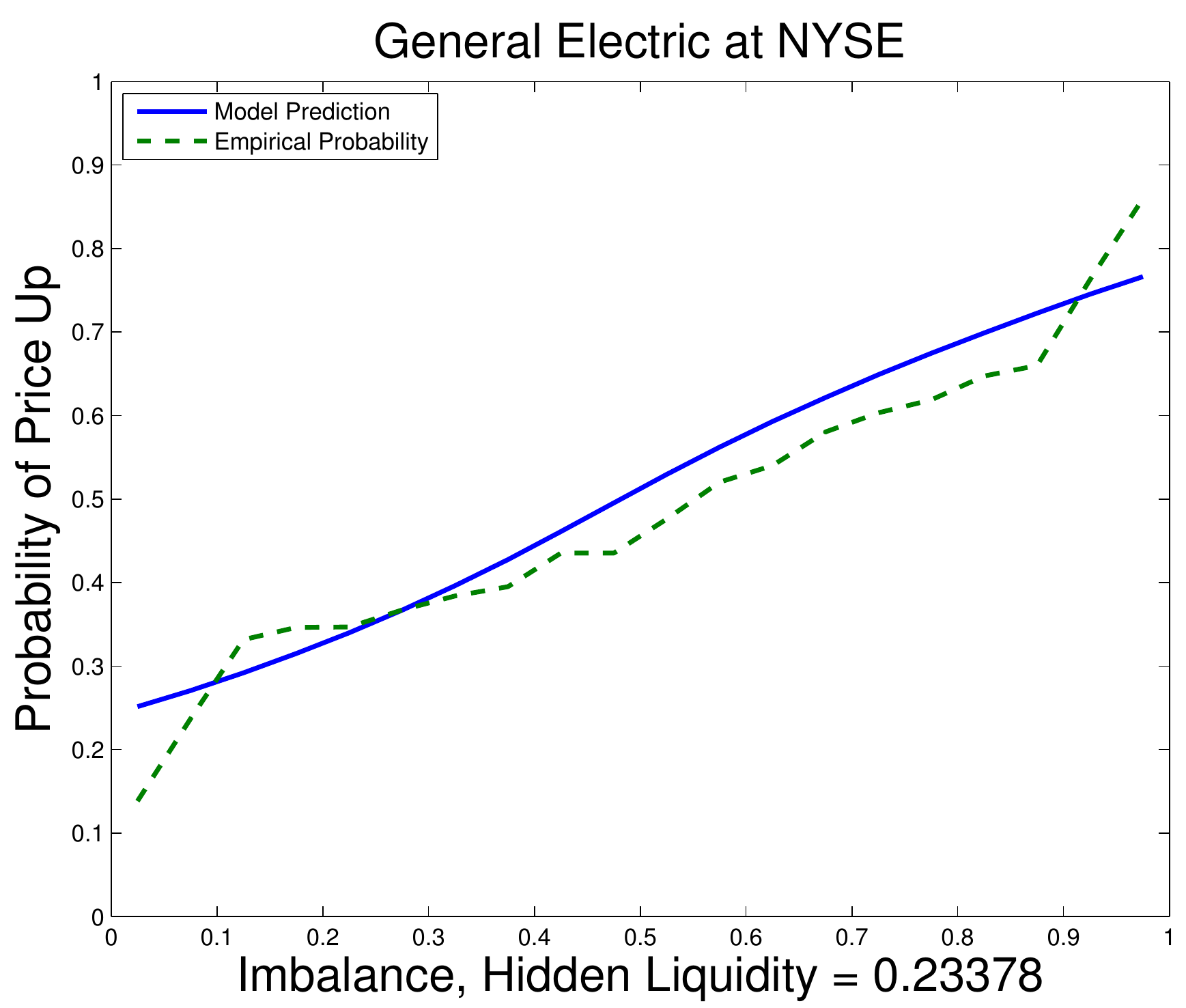}

	\caption{Empirical Probability (Dotted Lines) and Model Prediction (Solid Lines) of General Electric}
	\label{ProbGE}
\end{figure}

\begin{figure}
	\centering
	\includegraphics[width=0.49\textwidth]{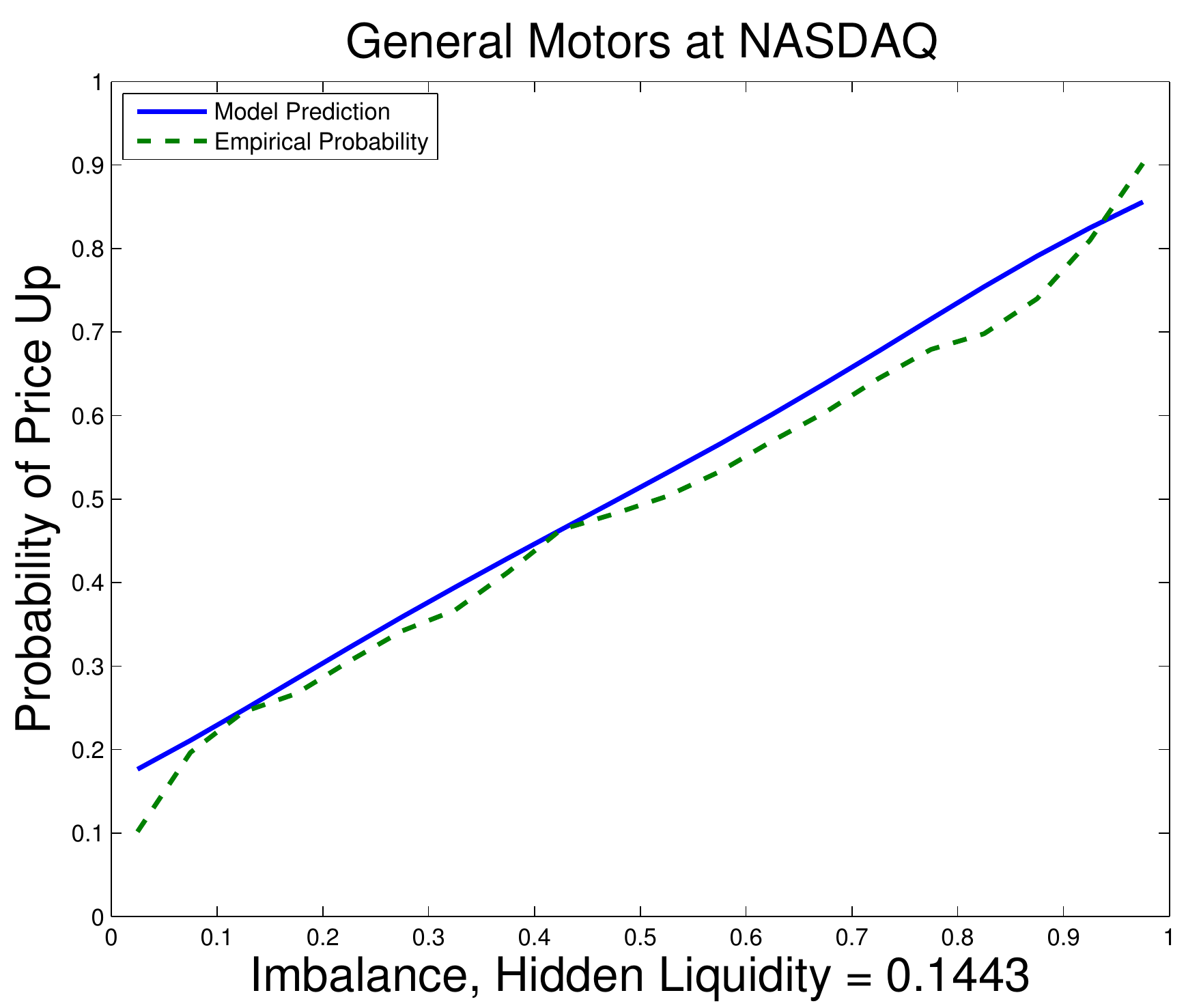}
	\includegraphics[width=0.49\textwidth]{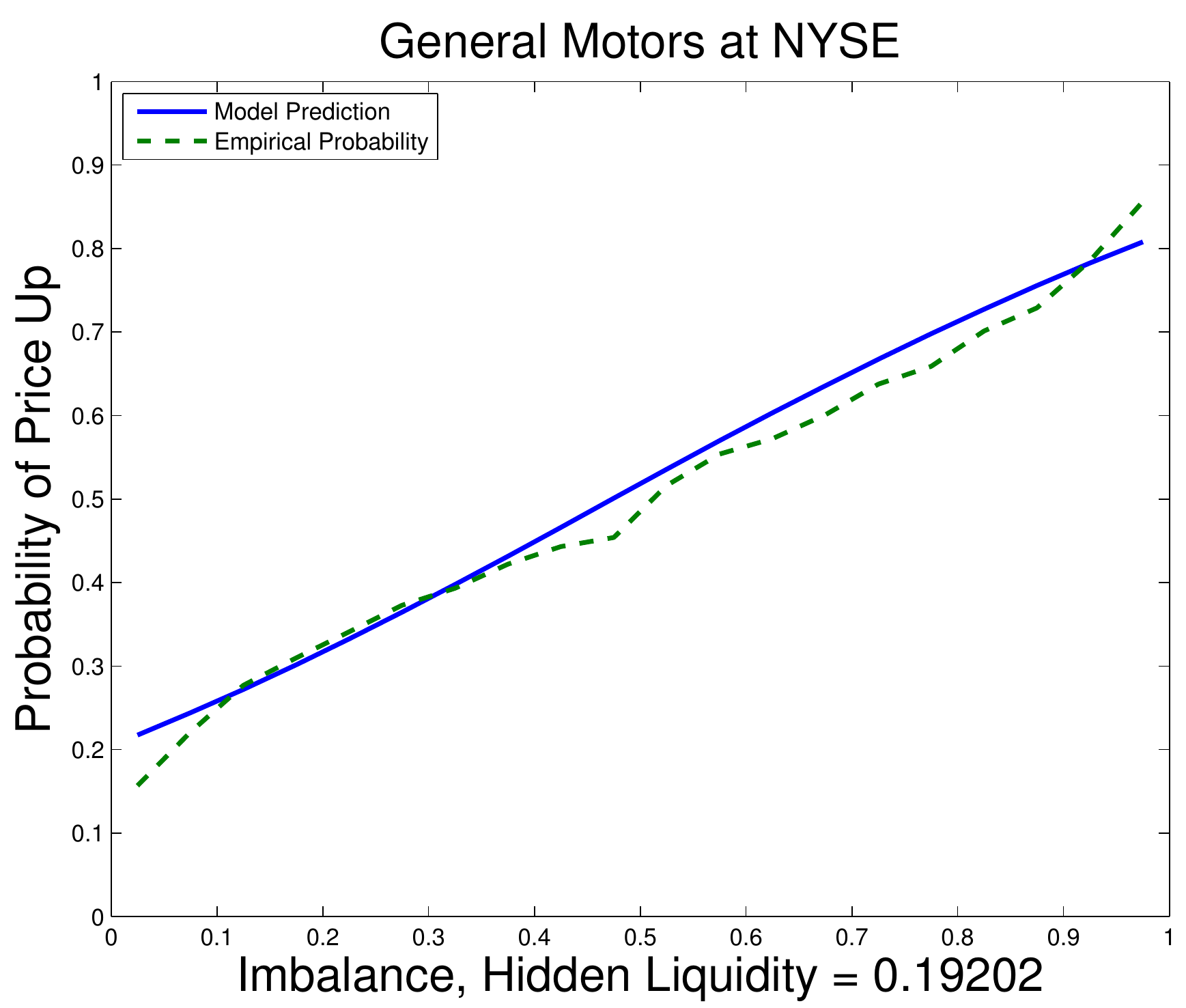}

	\caption{Empirical Probability (Dotted Lines) and Model Prediction (Solid Lines) of General Motors}
	\label{ProbGM}
\end{figure}

\begin{figure}
	\centering
	\includegraphics[width=0.49\textwidth]{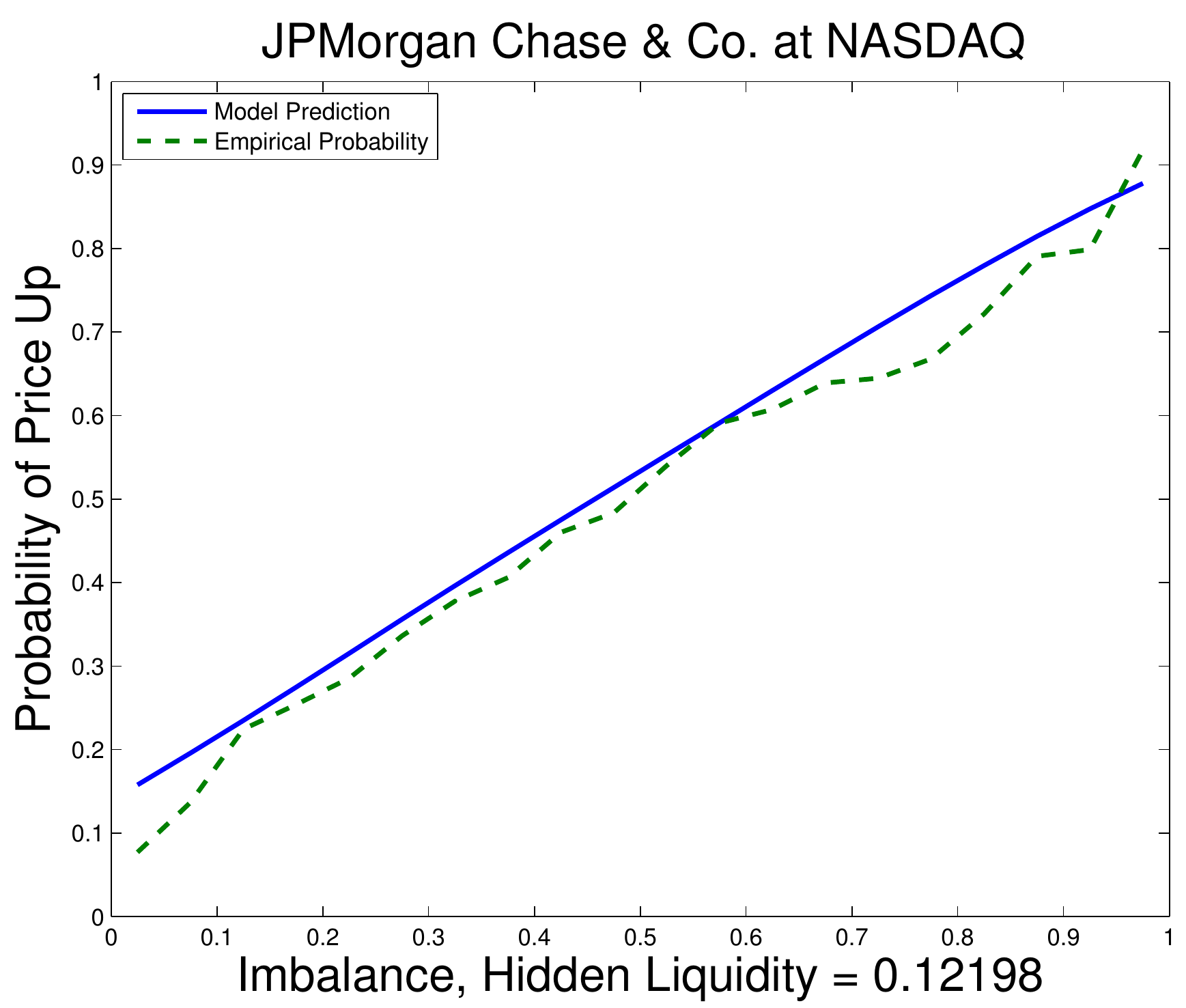}
	\includegraphics[width=0.49\textwidth]{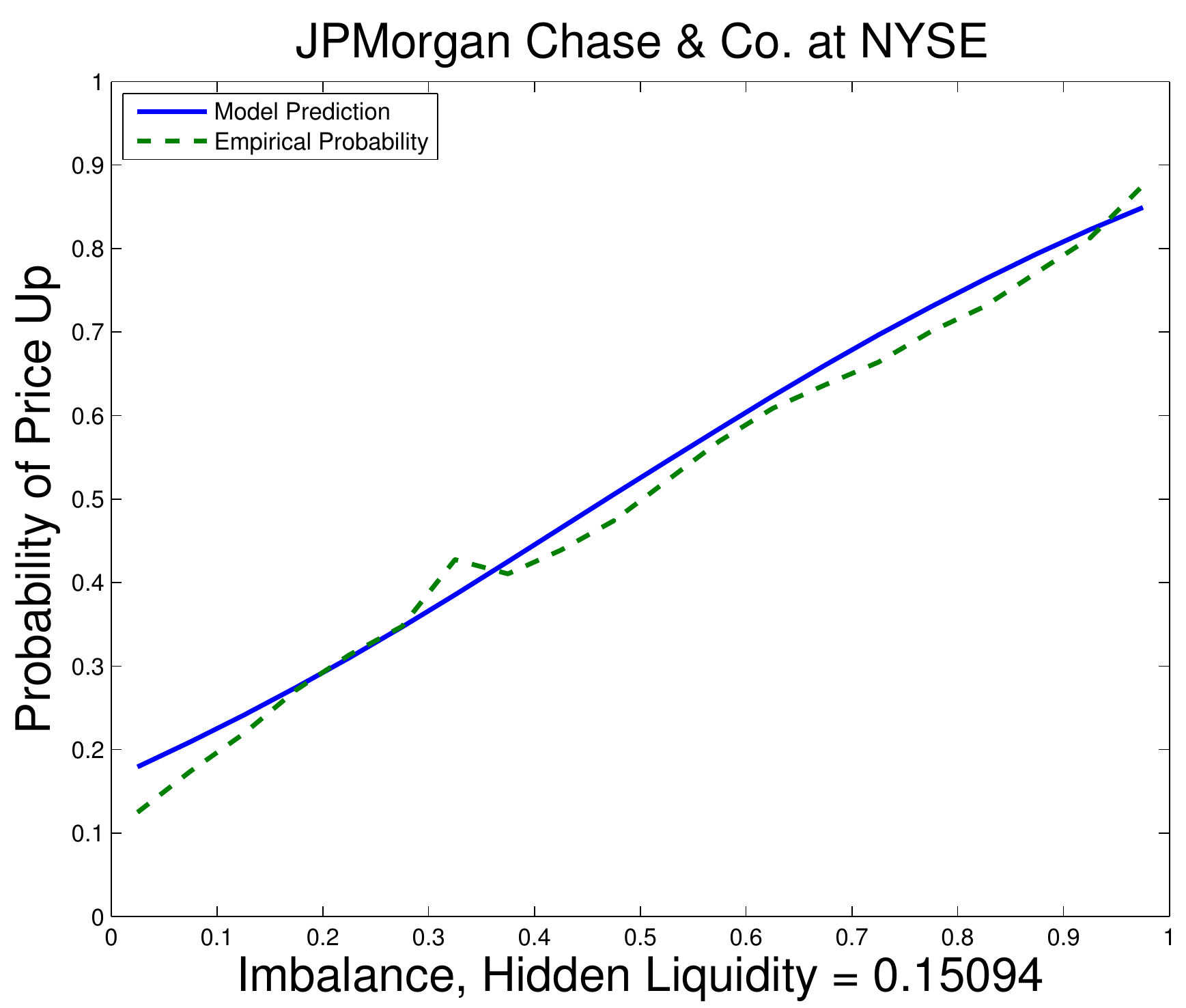}

	\caption{Empirical Probability (Dotted Lines) and Model Prediction (Solid Lines) of JP Morgan Chase \& Co.}
	\label{ProbJPM}
\end{figure}
\section{A Reduced-Form Model}\label{ReducedSection}

The simplest continuous time diffusion model to describe the dynamics of the level-1 limit order book
is the correlated Brownian motions, where $Q^{b}(t)$ and $Q^{a}(t)$ are the queue lengths at the best bid and the best ask
normalized by the median size of the queues, see e.g. Avellaneda \textit{et al.} \cite{Avellaneda2011}:
\begin{align}
	dQ^{b}(t)&=\sigma dW^{b}(t),\quad Q^{b}(0)=x,\\
	dQ^{a}(t)&=\sigma dW^{a}(t),\quad Q^{a}(0)=y,
\end{align}
where $W^{b}(t)$ and $W^{a}(t)$ are two correlated standard Brownian motions
with correlation $-1\leq\rho\leq 1$. 

We are interested in the probability of the price movement. 
The probability that the price moves up and down
are given respectively by
\begin{equation}
	\label{eq:prob of price movement}
	P_{\text{up}}=\mathbb{P}(\tau^{a}<\tau^{b}),\qquad
	P_{\text{down}}=\mathbb{P}(\tau^{b}<\tau^{a}),
\end{equation}
where
\begin{equation}
	\label{eq:time to price movement}
	\tau^{a}:=\inf\{t>0: Q^{a}(t)\leq 0\},\qquad
	\tau^{b}:=\inf\{t>0: Q^{b}(t)\leq 0\}.
\end{equation}

Let the probability of price moving up be:
\begin{equation}
u(x,y):=\mathbb{P}(\tau^{a}<\tau^{b}|Q^{b}(0)=x,Q^{a}(0)=y). 
\end{equation}
It is known that, see e.g. Avellaneda \textit{et al.} \cite{Avellaneda2011}:
\begin{equation}
	\label{BMmodel}
	u(x,y)=\frac{1}{2}\left[1-\frac{\arctan\left(\sqrt{\frac{1+\rho}{1-\rho}}\frac{y-x}{y+x}\right)}
	{\arctan\left(\sqrt{\frac{1+\rho}{1-\rho}}\right)}\right].
\end{equation}
When there is no correlation, i.e., $\rho=0$:
\begin{equation}
	\label{ZeroCorr}
	u(x,y)=\frac{2}{\pi}\arctan\left(\frac{x}{y}\right).
\end{equation}
When the correlation is perfectly negative , i.e., $\rho=-1$:
\begin{equation}
	\label{NegCorr}
	u(x,y)=\frac{x}{x+y}.
\end{equation}
From \eqref{BMmodel}, we can see that the probability of price moving up can be written 
as a function depending only on the imbalance:
\begin{equation}
	P_{\text{up}}(z)=\frac{1}{2}\left[1-\frac{\arctan\left(\sqrt{\frac{1+\rho}{1-\rho}}(1-2z)\right)}
	{\arctan\left(\sqrt{\frac{1+\rho}{1-\rho}}\right)}\right],
\end{equation}
where $z=\frac{x}{x+y}$. Moreover, $P_{\text{up}}(z)$ is monotonically increasing in the imbalance $z$.

\begin{remark}
	More generally, we can assume that the diffusion processes
	are correlated Brownian motions with constant drifts:
	\begin{align}
		dQ^{b}(t)&=\mu^{b}dt+\sigma^{b}dW^{b}(t),\qquad Q^{b}(0)=x,\\
		dQ^{a}(t)&=\mu^{a}dt+\sigma^{a}dW^{a}(t),\qquad Q^{a}(0)=y.
	\end{align}
	Based on the results in Iyengar \cite{Iyengar1985} and Metzler \cite{Metzler2010}, we have
	\begin{equation}
		u(x,y) = \int_{0}^{\infty}\int_{0}^{\infty}
		e^{\gamma^{a}(r\cos\alpha-z^{a})+\gamma^{b}(r\sin\alpha-z^{b})	-\frac{(\gamma^{a})^{2}+(\gamma^{b})^{2}}{2}t}g(t,r)drdt,
	\end{equation}
	where 
	\begin{equation}
	\left[
	\begin{array}{c}
	\gamma^{a}
	\\
	\gamma^{b}
	\end{array}
	\right]
	=
	\left[
	\begin{array}{cc}
	\sigma^{a}\sqrt{1-\rho^{2}} & \sigma^{a}\rho
	 \\
	0 & \sigma^{b}
	\end{array}
	\right]^{-1}
	\left[
	\begin{array}{c}
	\mu^{a}
	\\
	\mu^{b}
	\end{array}
	\right],
	\qquad
	\left[
	\begin{array}{c}
	z^{a}
	\\
	z^{b}
	\end{array}
	\right]
	=
	\left[
	\begin{array}{cc}
	\sigma^{a}\sqrt{1-\rho^{2}} & \sigma^{a}\rho 
	 \\
	0 & \sigma^{b}
	\end{array}
	\right]^{-1}
	\left[
	\begin{array}{c}
	y
	\\
	x
	\end{array}
	\right],
	\end{equation}
	and $g(t,r)=\frac{\pi}{\alpha^{2}tr}\exp(-\frac{r^{2}+r_{0}^{2}}{2t})
	\sum_{n=1}^{\infty}n\sin(n\pi(\alpha-\theta_{0})/\alpha)I_{n\pi/\alpha}(rr_{0}/t)$,
	where $I_{\nu}$ is the modified Bessel function of the first kind and
	\begin{align*}
		\alpha &:=
		\begin{cases}
			\pi+\arctan(-\sqrt{1-\rho^2}/\rho), &\rho>0\\
			\frac{\pi}{2},  &\rho=0\\
			\arctan(-\sqrt{1-\rho^2}/\rho), &\rho<0
		\end{cases}\\
		r_{0} &:= \sqrt{(x/\sigma^{b})^2+(y/\sigma^{a})^2-2\rho(x/\sigma^{b})(y/\sigma^{a})}/\sqrt{1-\rho^2}\\
		\theta_{0} &:=
		\begin{cases}
			\pi+\arctan\left(\frac{(x/\sigma^{b})\sqrt{1-\rho^2}}{y/\sigma^{a}-\rho x/\sigma^{b}}\right), &y/\sigma^{a}<\rho x/\sigma^{b},\\
			\frac{\pi}{2}, &y/\sigma^{a}=\rho x/\sigma^{b},\\
			\arctan\left(\frac{(x/\sigma^{b})\sqrt{1-\rho^2}}{y/\sigma^{a}-\rho x/\sigma^{b}}\right), &y/\sigma^{a}>\rho x/\sigma^{b}.
		\end{cases}
	\end{align*}
	In particular, when $\mu^{a}=\mu^{b}=0$,
	\begin{equation}
		\label{ZDrift}
		u(x,y)=\frac{\theta_{0}}{\alpha}.
	\end{equation}
\end{remark}

In Avellaneda \textit{et al.} \cite{Avellaneda2011}, the authors fitted the empirical probability of mid-price moving up 
by the correlated Brownian motion model when the correlation is $\rho=-1$, that is, \eqref{NegCorr}.
From Figures \ref{ProbBAC}-\ref{ProbJPM} and Table \ref{NTable}-\ref{corrTable} the empirical
probability of mid-price moving up is indeed linearly dependent on the imbalance. 
However, as we have already seen in Figures \ref{BACcorr}, \ref{GEcorr}, \ref{GMcorr}, \ref{JPMcorr} 
that the correlation is negative, but far away from $-1$, and it also depends
on the level of the imbalance. Therefore, a perfect negatively correlated Brownian motions model
might not fit both the empirical probability and the empirical correlation. 
We will propose a non-parametric diffusion model that can fit the empirical correlation, empirical volatilities, and empirical probability
of price movement simultaneously.

The correlated Brownian motion is simple yet still captures the phenomenon
that the price movement is mainly driven by the imbalance at the best bid and ask level. 
We are interested to investigate further the relation of the dynamics of the volumes
at the best bid and ask level and the imbalance. 
The assumption in the model \eqref{BMmodel} that the correlation and volatility
of the volumes at the best bid and ask levels are constant might be oversimplified
and not consistent with the real data. 
Indeed, the empirical studies we did in Section \ref{DataSection}
suggests that the correlation of the movements of the volumes at the best bid and ask level
is non-trivially dependent on the imbalance. Two universal shapes for the correlation as a function
of the imbalance are the $U$-shaped curve and $W$-shaped curve. For a $U$-shaped correlation
function of the imbalance, the correlation is negative and it is close to zero when
the imbalance is close to $0$ and $1$ and it is the most negative when the imbalance is close to $\frac{1}{2}$.
Similarly we also observe $W$-shaped correlation curves. The correlations are consistently negative though far away from $-1$.
We also observe that the volatilities of the volumes at the best bid and ask levels also depend non-trivially
on the imbalance. The volatility is in general large when the imbalance is small or large and the volatility
is small when the imbalance is moderate. The difference here is that instead of a symmetric $U$-shaped or $W$-shaped curve,
we often get two skewed $U$-shaped curves, depending on whether we consider the best bid or the best ask.
Therefore, our goal is to improve the model \eqref{BMmodel} to allow the correlation and volatilities
to be non-constant and depend on the level of the imbalance. 

In a very loose analogy, 
in the literature of the pricing of derivative securities, it is well known that the stock price has the so-called
leverage effect, that is, the volatility of a stock tends to increase when the stock price drops, which is one of the key reasons
that people have used the CEV models and other local volatility models as an alternative to the Black-Scholes model
in which the volatility is always constant.

We are interested to build up a model for the dynamics of the level-1 limit order books, 
that can capture the empirical evidence that we observed from the data.

Let us build a discrete model and find its diffusion approximation.
Let $X(t),Y(t)$ be the queue lengths at the best bid and the best ask at time $t$
and $Z_{t}=\frac{X(t)}{X(t)+Y(t)}$ be the imbalance. 
Let us assume that
\begin{itemize}
\item
The limit orders that arrive at the best bid is a simple point process $N^{1}(t)$
with intensity $\lambda^{1}(Z_{t-})$ at time $t$;
\item
The market orders or cancellations that arrive at the best bid is a simple point process $N^{2}(t)$
with intensity $\lambda^{2}(Z_{t-})$ at time $t$;
\item
The limit orders that arrive at the best ask is a simple point process $N^{3}(t)$
with intensity $\lambda^{3}(Z_{t-})$ at time $t$;
\item
The market orders or cancellations that arrive at the best ask is a simple point process $N^{4}(t)$
with intensity $\lambda^{4}(Z_{t-})$ at time $t$;
\item
There are simultaneous cancellations at
the best ask and limit orders at the best bid that is a simple point process $N^{5}(t)$
with intensity $\lambda^{5}(Z_{t-})$ at time $t$;
\item
There are simultaneous cancellations at
the best bid and limit orders at the best ask that is a simple point process $N^{6}(t)$
with intensity $\lambda^{6}(Z_{t-})$ at time $t$;
\end{itemize}
The last two assumptions above are made due to the observation that the empirical correlation
between the best bid and ask queues are always negative. Note that $\lambda^{1}$ is the arrival
rate for the idiosyncratic limit orders at the best bid, so the total arrival rate
for the limit orders at the best bid is $\lambda^{1}+\lambda^{5}$. Similarly, 
the total arrival rate for the limit orders at the best ask is $\lambda^{3}+\lambda^{6}$.
For $1\leq j\leq 6$, we assume that $\lambda^{j}(z) = \lambda^{j}(\frac{x}{x+y}):\mathbb{R}\rightarrow\mathbb{R}^{+}$
are continuous and bounded (there is singularity when $x+y=0$ and we assume analytic continuation of $\lambda^{j}$
at the singularity). Finally, for simplicity, we assume
that the order size has unit size $1$. Note that all the following arguments work if we assume
constant order sizes for different types of orders.
Therefore, the dynamics at the best bid and ask are given by:
\begin{align}
	\label{DiscreteModel}
	dX(t)&=dN^{1}(t)-dN^{2}(t)+dN^{5}(t)-dN^{6}(t),\\
	dY(t)&=dN^{3}(t)-dN^{4}(t)+dN^{6}(t)-dN^{5}(t).\nonumber
\end{align}
Since empirically, we do not observe strong evidence for the drift effect, we assume the driftless condition:
\begin{equation}
	\label{DriftlessAssump}
	\lambda^{1}(z)-\lambda^{2}(z)=\lambda^{6}(z)-\lambda^{5}(z)=\lambda^{4}(z)-\lambda^{3}(z),
\end{equation}
so that $X(t)$ and $Y(t)$ are driftless, in the sense that
\begin{align*}
	dX(t) &=dM^{1}(t)-dM^{2}(t)+dM^{5}(t)-dM^{6}(t)\\
	dY(t) &=dM^{3}(t)-dM^{4}(t)+dM^{6}(t)-dM^{5}(t),
\end{align*}
where for any $1\leq j\leq 6$, $M^{j}(t):=N^{j}(t)-\int_{0}^{t}\lambda^{j}(Z_{s-})ds$
is a martingale. For the high frequency trading, the number of orders is large and the trading frequency is high, 
so we can rescale time and space to get a diffusion approximation to the discrete model. 
Let us define the rescaled process for $1\leq j\leq 6$,
\begin{equation}
	\label{eq:rescaled process}
	X_{n}(t):=\frac{1}{\sqrt{n}}X(nt),\quad
	Y_{n}(t):=\frac{1}{\sqrt{n}}Y(nt),\quad
	M_{n}^{j}(t):=\frac{1}{\sqrt{n}}M^{j}(nt).
\end{equation}

The discrete model \eqref{DiscreteModel} describes the dynamics of the best bid and ask queues at the micro level, but may not
be easy to work with when we are interested to compute the probability of mid-price movement. So, next, let us find a diffusion approximation
to the discrete model \eqref{DiscreteModel}.

Let us assume that
\begin{align}
	\label{SDE}
	& dQ^{b}(t) = \sigma^{b}(Z(t))dW^{b}(t),\quad Q^{b}(0)=x>0\\
	& dQ^{a}(t) = \sigma^{a}(Z(t))dW^{a}(t),\quad Q^{a}(0)=y>0\notag \\
	& Z(t) = \frac{Q^{b}(t)}{Q^{b}(t)+Q^{a}(t)},\notag
\end{align}
where $W^{b}(t)$ and $W^{a}(t)$ are two standard Brownian motions with correlation
$\rho(Z(t))$ at time $t$. 
We assume that $(x,y)\mapsto\sigma^{b}(\frac{x}{x+y})$, $(x,y)\mapsto\sigma^{a}(\frac{x}{x+y})$
are bounded and continuous from $\mathbb{R}^2$ to $\mathbb{R}^{+}$, 
and $(x,y)\mapsto\rho(\frac{x}{x+y})$
is bounded and continuous\footnote{Note that $\frac{x}{x+y}$ can be singular, and $\sigma^{b},\sigma^{a},\rho$ 
are defined as the analytic continuation at the singular points $\pm\infty$.} from $\mathbb{R}^2$ to $[-1,1]$, so that
there exists a unique solution to \eqref{SDE}, see e.g. \cite{StroockVaradhan}. If in addition, we assume that $\sigma^{b},\sigma^{a},\rho$ are Lipschitz,
then the solution is guaranteed to be strong, see e.g. \cite{Oksendal2003}.

Note that the discrete process $(X_{n}(t),Y_{n}(t))$ and $(Q^{b}(t),Q^{a}(t))$ should both
live in the first quadrant. But to avoid the well-definedness after the process hitting the boundary of the first
quadrant, we make the processes well-defined on $\mathbb{R}^{2}$. Since our goal is to compute the probability
of mid-price movement, which is about the first hitting time of the boundary of the first quadrant, the extension from the first quadrant
to $\mathbb{R}^{2}$
will not alter the results, and it is just for the sake of convenience. 

The discrete model \eqref{DiscreteModel} can be approximated by the diffusion model \eqref{SDE} as follows.

\begin{theorem}
	\label{ConvergenceThm}
	Given that $(X(t),Y(t))$ is the discrete model of the best bid and ask queues in (\ref{DiscreteModel}), assume that for $1\leq j\leq 6$, $\lambda^{j}(z):\mathbb{R}\rightarrow\mathbb{R}^{+}$ are continuous and bounded functions and the driftless condition (\ref{DriftlessAssump}) holds.
	Also assume that $(X_{n}(0),Y_{n}(0))=(x,y)\in\mathbb{R}^{+}\times\mathbb{R}^{+}$.
	Then the rescaled process $(X_{n}(t),Y_{n}(t))$ in (\ref{eq:rescaled process}) converges weakly in $D[0,T]$ as $n\to\infty$ to $(Q^{b}(t),Q^{a}(t))$ in 
	(\ref{SDE}), where $D[0,T]$ is the space of c\`{a}dl\`{a}g processes equipped with Skorohod topology.  In addition, the diffusion and correlation coefficients 
	are the explicit functions of the intensities $\lambda^j(z)$:
	\begin{align*}
		\sigma^{b}(z)&=\left[\lambda^{1}(z)+\lambda^{2}(z)+\lambda^{5}(z)+\lambda^{6}(z)\right]^{1/2}\\
		\sigma^{a}(z)&=\left[\lambda^{3}(z)+\lambda^{4}(z)+\lambda^{5}(z)+\lambda^{6}(z)\right]^{1/2}\\
		\rho(z)&=-\frac{\lambda^{5}(z)+\lambda^{6}(z)}{\sigma^{b}(z)\sigma^{a}(z)}.
	\end{align*}
\end{theorem}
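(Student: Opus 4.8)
The plan is to establish this as a functional central limit theorem of martingale type, identifying the weak limit through the well-posed martingale problem associated with the generator of (\ref{SDE}). First I would exploit the driftless condition (\ref{DriftlessAssump}): after centering each $N^{j}$ by its compensator, the predictable finite-variation parts of $dX$ and $dY$ cancel exactly, so that
\begin{align*}
X_{n}(t) &= x + M_{n}^{1}(t) - M_{n}^{2}(t) + M_{n}^{5}(t) - M_{n}^{6}(t),\\
Y_{n}(t) &= y + M_{n}^{3}(t) - M_{n}^{4}(t) + M_{n}^{6}(t) - M_{n}^{5}(t),
\end{align*}
where the $M_{n}^{j}$ are the rescaled martingales of (\ref{eq:rescaled process}). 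The key structural observation is that the imbalance is invariant under the diffusive rescaling: writing $Z_{n}(t):=X_{n}(t)/(X_{n}(t)+Y_{n}(t)) = Z_{nt}$, the intensities entering the compensators depend on the process only through $Z_{n}$, so the state dependence survives the scaling limit.

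Next I would compute the predictable quadratic variations. Since the $N^{j}$ are simple point processes with no common jumps, $\langle M_{n}^{i}, M_{n}^{j}\rangle \equiv 0$ for $i\neq j$, while a change of variables gives $\langle M_{n}^{j}\rangle(t) = \int_{0}^{t} \lambda^{j}(Z_{n}(u-))\,du$. Collecting the contributions to $X_{n}$ and $Y_{n}$ then yields
\begin{align*}
\langle X_{n}\rangle(t) &= \int_{0}^{t} \bigl[\lambda^{1}+\lambda^{2}+\lambda^{5}+\lambda^{6}\bigr](Z_{n}(u-))\,du,\\
\langle Y_{n}\rangle(t) &= \int_{0}^{t} \bigl[\lambda^{3}+\lambda^{4}+\lambda^{5}+\lambda^{6}\bigr](Z_{n}(u-))\,du,\\
\langle X_{n},Y_{n}\rangle(t) &= -\int_{0}^{t} \bigl[\lambda^{5}+\lambda^{6}\bigr](Z_{n}(u-))\,du,
\end{align*}
the cross term arising solely because $N^{5}$ and $N^{6}$ appear with opposite signs in the two coordinates. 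These integrands are exactly $(\sigma^{b})^{2}$, $(\sigma^{a})^{2}$, and $\rho\,\sigma^{b}\sigma^{a}$ evaluated at $Z_{n}$, which is the asserted form of the coefficients.

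With the bracket processes in hand I would invoke a martingale central limit theorem (of the type developed by Ethier and Kurtz). Tightness of $(X_{n},Y_{n})$ follows from the uniform boundedness of the intensities, which bounds $\mathbb{E}[\langle X_{n}\rangle(t)]$ and $\mathbb{E}[\langle Y_{n}\rangle(t)]$ linearly in $t$, together with the maximal-jump estimate $\sup_{s\le T}|\Delta M_{n}^{j}(s)| \le 1/\sqrt{n} \to 0$; the latter also forces every subsequential limit to have continuous paths. To identify such a limit $(Q^{b},Q^{a})$ I would pass to the limit in the martingale problem: for smooth $f$, It\^o's formula shows that $f(X_{n},Y_{n}) - \int_{0}^{t} \mathcal{L}_{n} f\,du$ is a martingale, where $\mathcal{L}_{n}$ is built from the brackets above, and along a convergent subsequence $\mathcal{L}_{n} f \to \mathcal{L} f$ with
\[
\mathcal{L} f = \tfrac{1}{2} (\sigma^{b})^{2} \partial_{xx} f + \rho\,\sigma^{b}\sigma^{a}\,\partial_{xy} f + \tfrac{1}{2} (\sigma^{a})^{2} \partial_{yy} f,
\]
using the continuity of the $\lambda^{j}$ (hence of $\sigma^{b},\sigma^{a},\rho$) and the continuity of the limiting paths. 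Since (\ref{SDE}) is well-posed under the stated boundedness and continuity hypotheses (\cite{StroockVaradhan}), the solution to this martingale problem is unique, so every subsequential limit coincides with $(Q^{b},Q^{a})$ and the full sequence converges.

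I expect the main obstacle to be the identification step rather than tightness: because the diffusion coefficients are genuinely state dependent through the imbalance, one cannot invoke a functional limit theorem with deterministic limiting brackets, and must instead argue convergence of $\int_{0}^{t} \lambda^{j}(Z_{n}(u-))\,du$ to $\int_{0}^{t} \lambda^{j}(Z(u))\,du$. This requires care where $\lambda^{j}$ is only continuous (not Lipschitz) and, more delicately, near the singular set $X_{n}+Y_{n}=0$; here I would lean on the analytic-continuation convention for $\lambda^{j}$ at the singularity together with the continuous-mapping theorem applied to the functional sending a c\`{a}dl\`{a}g path to the time integral of $\lambda^{j}$ of its imbalance, justified by the almost sure continuity of the limit and by controlling the imbalance away from the singular set up to the hitting time relevant to $P_{\text{up}}$.
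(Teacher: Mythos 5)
Your proposal is correct, and its skeleton --- martingale decomposition under the driftless condition (\ref{DriftlessAssump}), tightness from boundedness of the intensities, identification of the limit through the second-order operator with the stated $\sigma^{b},\sigma^{a},\rho$ --- is the same as the paper's; where you differ is in the technical devices used for the two main steps. For tightness, the paper applies the Burkholder--Davis--Gundy inequality to the rescaled martingales to get the fourth-moment bound $\mathbb{E}\left[\Vert (X_{n},Y_{n})(t+\delta)-(X_{n},Y_{n})(t)\Vert^{4}\right]\leq C\delta^{2}$ and then invokes Kolmogorov's criterion, while you argue through the predictable brackets; your route works, but it should be phrased as a bound on bracket \emph{increments}, $\langle X_{n}\rangle(t+\delta)-\langle X_{n}\rangle(t)\leq \Vert\lambda\Vert_{\infty}\delta$ (which here holds deterministically and uniformly in $n$), feeding Aldous' criterion --- linear growth of $\mathbb{E}[\langle X_{n}\rangle(t)]$ by itself is not a tightness statement. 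For identification, the paper never computes brackets: it Taylor-expands the jump generator $\mathcal{L}_{n}$ of the rescaled Markov process, shows $\mathcal{L}_{n}f\to\mathcal{L}f$, and concludes via Theorem 7.8(b) of Chapter 3 of Ethier and Kurtz \cite{EK}; you instead read the coefficients off the quadratic covariations and close the argument through the martingale problem together with its uniqueness \cite{StroockVaradhan} --- the same well-posedness the paper invokes when defining (\ref{SDE}), so your version makes explicit a uniqueness step the paper leaves largely implicit. One conflation to repair: with $\mathcal{L}_{n}$ ``built from the brackets'' (i.e.\ the second-order operator with coefficients $\lambda^{j}(Z_{n})$), the process $f(X_{n},Y_{n})-\int_{0}^{t}\mathcal{L}_{n}f\,du$ is a martingale only up to third-order Taylor remainders of size $O(n^{-1/2})$; the exact martingale requires the full jump generator, and showing that the remainder vanishes is precisely the expansion carried out in the paper's proof. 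With those two points tightened, your argument is complete, and it buys a presentation in which the formulas for $\sigma^{b}$, $\sigma^{a}$ and $\rho$ emerge transparently as limiting quadratic (co)variations rather than as coefficients extracted from a generator expansion.
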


The probability of mid-price movement for the diffusion model \eqref{SDE} can be computed in the closed-form as follows.
\begin{theorem}
	\label{ThmI}
	Given the model (\ref{SDE}), $P_{\text{up}}(z)$, the probability of the price moving up, defined in (\ref{eq:prob of price movement}) 
	and (\ref{eq:time to price movement}), is explicitly given by
	\begin{equation}
		P_{\text{up}}(z)=\frac{\int_{0}^{z}e^{-\int_{0}^{y}\frac{\mu(x)}{\nu(x)}dx}dy}
		{\int_{0}^{1}e^{-\int_{0}^{y}\frac{\mu(x)}{\nu(x)}dx}dy},
	\end{equation}
	where $z$ is the imbalance and
	\begin{align}
		\label{eq:mu and nu}
		\mu(z) &= -2(1-z)\sigma^{b}(z)^{2}+2(2z-1)\rho(z)\sigma^{b}(z)\sigma^{a}(z)+2z\sigma^{a}(z)^{2}\\
		\nu(z) &= (1-z)^{2}\sigma^{b}(z)^{2} - 2z(1-z)\rho(z)\sigma^{b}(z)\sigma^{a}(z) + z^{2}\sigma^{a}(z)^{2}.\notag
	\end{align}
\end{theorem}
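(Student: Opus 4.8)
The plan is to reduce the two-dimensional first-passage problem to a one-dimensional scale-function computation for the imbalance process $Z(t)$ itself. First I would apply It\^o's formula to $Z(t)=Q^{b}(t)/\bigl(Q^{b}(t)+Q^{a}(t)\bigr)$. Writing $f(x,y)=x/(x+y)$ and computing $f_{x},f_{y},f_{xx},f_{yy},f_{xy}$, then substituting $dQ^{b}=\sigma^{b}(Z)\,dW^{b}$, $dQ^{a}=\sigma^{a}(Z)\,dW^{a}$ with cross-variation $\rho(Z)\sigma^{b}(Z)\sigma^{a}(Z)\,dt$ from \eqref{SDE}, and setting $\Sigma(t):=Q^{b}(t)+Q^{a}(t)$, a direct calculation yields
\begin{equation*}
	dZ(t)=\frac{\mu(Z(t))}{2\,\Sigma(t)^{2}}\,dt+\frac{\sqrt{\nu(Z(t))}}{\Sigma(t)}\,dB(t),
\end{equation*}
where $B$ is a standard Brownian motion and $\mu,\nu$ are exactly the functions in \eqref{eq:mu and nu} (the homogeneity $x=z\Sigma$, $y=(1-z)\Sigma$ is what produces these expressions). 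The key structural observation is that the drift and the squared diffusion coefficient of $Z$ carry the \emph{same} random prefactor $\Sigma(t)^{-2}$, so the ratio $2\cdot(\text{drift})/(\text{diffusion})^{2}=\mu(z)/\nu(z)$ is \emph{independent} of $\Sigma(t)$; the $\Sigma^{-2}$ factor is only a random time change that cannot affect which boundary is reached first.

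Next I would identify the target event. Since $Z\to 1$ as $Q^{a}\to 0$ and $Z\to 0$ as $Q^{b}\to 0$, the event $\{\tau^{a}<\tau^{b}\}$ from \eqref{eq:time to price movement} is precisely the event that $Z$ reaches $1$ before $0$, so that $P_{\text{up}}(z)=\mathbb{P}(Z\text{ hits }1\text{ before }0\mid Z(0)=z)$. I would then define the candidate scale function
\begin{equation*}
	\Phi(z):=\int_{0}^{z}\exp\!\left(-\int_{0}^{w}\frac{\mu(u)}{\nu(u)}\,du\right)dw,
\end{equation*}
which by construction solves $\tfrac{1}{2}\nu(z)\Phi''(z)+\tfrac{1}{2}\mu(z)\Phi'(z)=0$. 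Applying It\^o to $\Phi(Z(t))$, the finite-variation part equals $\Sigma(t)^{-2}\bigl[\tfrac{1}{2}\nu\,\Phi''+\tfrac{1}{2}\mu\,\Phi'\bigr]=0$, so $\Phi(Z(t))$ is a local martingale. Here the coefficient $\mu/\nu$ is continuous on $[0,1]$ because $\nu$ is the quadratic form associated with the positive-definite matrix $\bigl(\begin{smallmatrix}1&-\rho\\-\rho&1\end{smallmatrix}\bigr)$ and is therefore bounded away from $0$ on $[0,1]$ whenever $|\rho|<1$ and $\sigma^{b},\sigma^{a}$ are bounded below; this keeps both integrals convergent, consistent with the analytic-continuation convention on the coefficients.

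Finally I would conclude by optional stopping at $\tau:=\tau^{a}\wedge\tau^{b}$. Because $\Phi$ is continuous on the compact interval $[0,1]$ it is bounded there, and $Z(t\wedge\tau)\in[0,1]$, so $\Phi(Z(t\wedge\tau))$ is a bounded local martingale, hence a true martingale; letting $t\to\infty$ and using $\Phi(0)=0$ gives $\Phi(z)=\mathbb{E}[\Phi(Z(\tau))]=\Phi(1)\,P_{\text{up}}(z)$, i.e. $P_{\text{up}}(z)=\Phi(z)/\Phi(1)$, which is the claimed formula. The main obstacle is not the algebra but the justification of this last step, namely that $\tau<\infty$ almost surely (so that $Z(\tau)\in\{0,1\}$ and dominated convergence applies). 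I would establish this from the standing assumptions: since $\sigma^{b},\sigma^{a}$ are continuous and strictly positive, they are bounded below by a positive constant on $[0,1]$, so each queue $Q^{b},Q^{a}$ is a continuous martingale whose quadratic variation grows at least linearly; by the Dambis--Dubins--Schwarz theorem it is a time-changed Brownian motion with clock tending to infinity and therefore hits $0$ in finite time almost surely, forcing $\tau<\infty$. A minor point to dispatch along the way is that $\mathbb{P}(\tau^{a}=\tau^{b})=0$, so the two boundaries are reached disjointly and $P_{\text{up}}+P_{\text{down}}=1$.
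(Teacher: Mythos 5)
Your proposal is correct, but it takes a genuinely different route from the paper's own proof. The paper argues through the two-dimensional Dirichlet problem: it asserts that $u(x,y)=\mathbb{P}(\tau^{a}<\tau^{b}\,|\,Q^{b}(0)=x,Q^{a}(0)=y)$ satisfies the degenerate elliptic PDE \eqref{ContinuousPDE}, imposes the ansatz $u(x,y)=u(z)$ with $z=x/(x+y)$, and lets the chain rule collapse the PDE to the ODE $\mu(z)u'(z)+\nu(z)u''(z)=0$, solved with $u(0)=0$, $u(1)=1$. You instead work directly with the one-dimensional imbalance process: your It\^o computation giving $dZ=\tfrac{\mu(Z)}{2\Sigma^{2}}dt+\tfrac{\sqrt{\nu(Z)}}{\Sigma}dB$ is correct (it is, of course, the same algebra as the paper's chain-rule step, with $x=z\Sigma$, $y=(1-z)\Sigma$), the scale function $\Phi$ kills the drift, and optional stopping at $\tau=\tau^{a}\wedge\tau^{b}$ yields $P_{\text{up}}(z)=\Phi(z)/\Phi(1)$. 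The two arguments share the same computational core, but your wrapper buys rigor that the paper leaves implicit: the paper never justifies that the hitting probability depends on $z$ alone, nor that the particular PDE solution it constructs actually equals the hitting probability (a verification/uniqueness step); your bounded-martingale argument \emph{is} that verification, and you additionally supply the facts the paper silently assumes, namely $\tau<\infty$ a.s.\ (via the Dambis--Dubins--Schwarz time change and the lower bound on the quadratic variation) and $\mathbb{P}(\tau^{a}=\tau^{b})=0$ (needed so that $Z(\tau)\in\{0,1\}$ and $\Phi(Z(\tau))\in\{0,\Phi(1)\}$; this corner-avoidance rests on nondegeneracy, e.g. $|\rho|<1$ and $\sigma^{b},\sigma^{a}$ bounded below on $[0,1]$, and is the one step you should not dismiss as entirely routine). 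The price of your route is that these nondegeneracy hypotheses must be invoked explicitly --- as you note, $\nu$ can vanish if $\rho$ is allowed to reach $+1$, an issue latent in the paper's formula as well since $\mu/\nu$ sits in the integrand --- whereas the paper's formal PDE derivation never confronts them. In short: the paper's approach buys brevity; yours buys an actual proof that the closed-form expression equals the probability in question.
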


\begin{remark}
What we are really interested to compute is the probability of mid-price movement for the discrete model, 
and this can be approximated by the probability of mid-price movement for the diffusion model which has closed-form formula, 
that is given in Theorem \ref{ThmI}. For any $n>0$, we have
\begin{align*}
\mathbb{P}(\text{$X(t)$ hits zero before $Y(t)$ does})
&=\mathbb{P}\left(\text{$\frac{1}{\sqrt{n}}X(nt)$ hits zero before $\frac{1}{\sqrt{n}}Y(nt)$ does}\right)
\\
&\simeq\mathbb{P}(\text{$Q^{b}(t)$ hits zero before $Q^{a}(t)$ does}),
\end{align*}
as $n\rightarrow\infty$. Note that the approximation requires that $Q^{b}(0)=\frac{1}{\sqrt{n}}X(0)$ and
$Q^{a}(0)=\frac{1}{\sqrt{n}}Y(0)$ and this is still reasonable since the formula in Theorem \ref{ThmI}
only depends on the ratio of $Q^{b}(0)$ and $Q^{a}(0)$ so we can rescale the initial condition.
\end{remark}

\begin{remark}
	We can recover the results in \cite{Avellaneda2011} from Theorem \ref{ThmI}:
	\begin{enumerate}
		\item When $\sigma^{b}=\sigma^{a}=\sigma$ and $\rho=-1$, we have $\mu(z)\equiv 0$ and thus $P_{\text{up}}(z)=z=\frac{x}{x+y}$ which recovers \eqref{NegCorr}.
		
		\item When $\sigma^{b}=\sigma^{a}=\sigma$ and $\rho=0$, we have
		$\mu(z)=-2(1-2z)\sigma^{2}$ and $\nu(z)=[(1-z)^{2}+z^{2}]\sigma^{2}$. Therefore,
		\begin{align*}
			P_{\text{up}}(z)
			&= \frac{\int_{0}^{z}e^{2\int_{0}^{y}\frac{1-2x}{1-2x+2x^{2}}dx}dy}{\int_{0}^{1}e^{2\int_{0}^{y}\frac{1-2x}{1-2x+2x^{2}}dx}dy}
			= \frac{\int_{0}^{z}e^{-\log(2(y-1)y+1)}dy}{\int_{0}^{1}e^{-\log(2(y-1)y+1)}dy}
			= \frac{\int_{0}^{z}\frac{1}{1-2y+2y^{2}}dy}{\int_{0}^{1}\frac{1}{1-2y+2y^{2}}dy}\\
			&= \frac{\arctan(1-2z)-\frac{\pi}{4}}{-\frac{\pi}{4}-\frac{\pi}{4}}
			= \frac{\frac{\pi}{4}-\arctan(1-2z)}{\frac{\pi}{2}}
			= \frac{\pi}{2}\arctan\left(\frac{z}{1-z}\right)\\
			&=\frac{2}{\pi}\arctan\left(\frac{x}{y}\right),
		\end{align*}
		which recovers \eqref{ZeroCorr}.
		
		\item In the special case that $\sigma^{b}(\cdot)=\sigma^{a}(\cdot)$
		and $\rho(\cdot)\equiv\rho$, by \eqref{ZDrift}, we have
		\begin{table}[H]
			\begin{tabular}{|c|c|c|c|}
				\hline
				$u(x,y)=\theta_0/\alpha$ &                                     $y<\rho x$                                      &             $y=\rho x$             &                                   $y>\rho x$                                    \\ \hline
				        $\rho>0$         & $\frac{\pi-\arctan\left(\lambda\frac{\rho x}{\rho x-y}\right)}{\pi-\arctan\lambda}$ & $\frac{\pi/2}{\pi-\arctan\lambda}$ & $\frac{\arctan\left(\lambda\frac{\rho x}{\rho x-y}\right)}{\pi-\arctan\lambda}$ \\ \hline
				        $\rho=0$         &                                         N/A                                         &                $1$                 &        $\frac{2}{\pi}\arctan\left(\lambda\frac{\rho x}{\rho x-y}\right)$        \\ \hline
				        $\rho<0$         &                                         N/A                                         &  $\frac{\pi/2}{-\arctan\lambda}$   &  $\frac{\arctan\left(\lambda\frac{\rho x}{\rho x-y}\right)}{-\arctan\lambda}$   \\ \hline
			\end{tabular} 
		\end{table}
		where $\lambda=\sqrt{1-\rho^2}/\rho$.
	\end{enumerate}
\end{remark}

\begin{remark}
When $\rho(\cdot)\equiv -1$, we have $\lambda^{1}\equiv\lambda^{2}\equiv\lambda^{3}\equiv\lambda^{4}\equiv 0$
and $\lambda^{5}(\cdot)=\lambda^{6}(\cdot)$. And thus $\sigma^{b}(\cdot)=\sigma^{a}(\cdot)$.
By using the assumption \ref{DriftlessAssump}, we can check that the probability of mid-price moving up 
$u(x,y)=\frac{x}{x+y}$ satisfies \eqref{ContinuousPDE}
and thus this is the probability of mid-price movement for the reduced-form model. Indeed, for the original unscaled discrete model \eqref{DiscreteModel}, 
$u(x,y)$ satisfies the equation 
\begin{equation}
\lambda^{5}\left(\frac{x}{x+y}\right)[u(x+1,y-1)-u(x,y)]
+\lambda^{6}\left(\frac{x}{x+y}\right)[u(x-1,y+1)-u(x,y)]=0,
\end{equation}
for $(x,y)\in\mathbb{Z}_{\geq 0}\times\mathbb{Z}_{\geq 0}$ with boundary condition $u(0,y)=0$ and $u(x,0)=1$.
It is easy to see that $u(x,y)=\frac{x}{x+y}$. Indeed, this result is true for perfectly negatively correlated queues model-free. 
To see this, notice that $X(t)$ and $Y(t)$ are perfectly negatively correlated martingales and we can write $X(t)=x+M(t)$
and $Y(t)=y-M(t)$, where $M(t)$ is a martingale. Therefore, the probability that $X(t)$ hits $0$ before $Y(t)$ does
is the same as the probability that $M(t)$ hits $-x$ before $y$. By optional stopping theorem for martingales, this probability
can be easily computed as $\frac{x}{x+y}$.
\end{remark}

As we can see from Figures \ref{ProbBAC}, \ref{ProbGE}, \ref{ProbGM}, \ref{ProbJPM}, 
the empirical probability of mid-price moving up (dotted lines) is linearly dependent on the imbalance. 
Except the BAC at NASDAQ and the GE at NASDAQ, there is also a strong numerical evidence
of the hidden liquidity, that is, the probability of moving up is bigger than zero when the imbalance
is near zero and less than one when the imbalance is near one. 
To better fit the data, we introduce the hidden liquidity $H\in(0,1)$, so that
the theoretical probability of mid-price moving up is $H$ for imbalance at zero and $1-H$ for imbalance at 
one\footnote{This is slightly different from the definition of hidden liquidity in Avellaneda \textit{et al.} \cite{Avellaneda2011}. 
Our definition of $H$ can be interpreted as a probability, a free parameter that helps get $P_{\text{up}}$
closer to the empirical data in the least-square sense rather than a real liquidity as in Avellaneda \textit{et al.} \cite{Avellaneda2011}. We choose this definition to keep the analytical tractability of the model and
also our definition
	is simpler in the sense that we only need to bucket the data and discretize our analytical formula according
	to the imbalance level rather than the best bid and ask queue sizes as in \cite{Avellaneda2011}}.
That is, $P_{\text{up}}(z)$ satisfies the boundary condition $P_{\text{up}}(0)=H$ and $P_{\text{up}}(1)=1-H$. 
Following the proofs of Theorem \ref{ThmI}, we get the following result (the proof will
be given in Appendix \ref{ProofSection}).

\begin{theorem}
	\label{ThmII}
	Given the model (\ref{SDE}), $P_{\text{up}}(z)$, the probability of the price moving up, defined in (\ref{eq:prob of price movement}) 
	and (\ref{eq:time to price movement}) with the boundary conditions $P_{\text{up}}(0)=H$, $P_{\text{up}}(1)=1-H$, is explicitly given by
	\begin{equation}
		\label{HFormula}
		P_{\text{up}}(z)
		=H+(1-2H)\frac{\int_{0}^{z}e^{-\int_{0}^{y}\frac{\mu(x)}{\nu(x)}dx}dy}{\int_{0}^{1}e^{-\int_{0}^{y}\frac{\mu(x)}{\nu(x)}dx}dy},
	\end{equation}
	where $z=\frac{x}{x+y}$ is the imbalance and $\mu$, $\nu$ are the functions in (\ref{eq:mu and nu}). 
\end{theorem}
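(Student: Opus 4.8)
The plan is to follow the proof of Theorem \ref{ThmI} essentially verbatim up to the point where the boundary data enter, since the interior equation is unchanged and only the Dirichlet conditions differ. First I would recall that the probability of the price moving up, viewed as a function $u(x,y)=P_{\text{up}}(z)$ of the initial queue lengths, is harmonic for the infinitesimal generator of the diffusion \eqref{SDE},
\[
\mathcal{L}=\tfrac12\sigma^{b}(z)^{2}\partial_{xx}+\rho(z)\sigma^{b}(z)\sigma^{a}(z)\partial_{xy}+\tfrac12\sigma^{a}(z)^{2}\partial_{yy},
\]
so that $\mathcal{L}u=0$ on the open first quadrant. Because $\sigma^{b},\sigma^{a},\rho$ depend on $(x,y)$ only through the imbalance $z=x/(x+y)$, the problem is scale invariant and $u(x,y)$ depends only on $z$; this is the same reduction already used for Theorem \ref{ThmI}.

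Next I would carry out the change of variables $z=x/(x+y)$ via the chain rule. Writing $s=x+y$, one has $z_{x}=(1-z)/s$ and $z_{y}=-z/s$, and a routine second-derivative computation converts $\mathcal{L}u=0$ into the second-order linear ODE
\[
\nu(z)\,P_{\text{up}}''(z)+\mu(z)\,P_{\text{up}}'(z)=0 ,
\]
with $\mu,\nu$ exactly the functions in \eqref{eq:mu and nu}: the coefficient of $P_{\text{up}}''$ collects into the quadratic form $\nu$, and the coefficient of $P_{\text{up}}'$ into $\mu$. This equation is first order and linear in $P_{\text{up}}'$, so $P_{\text{up}}'(z)=C\exp\!\big(-\int_{0}^{z}\mu(x)/\nu(x)\,dx\big)$, and integrating yields the two-parameter family $P_{\text{up}}(z)=A+C\,\Phi(z)$, where $\Phi(z):=\int_{0}^{z}e^{-\int_{0}^{y}\mu(x)/\nu(x)\,dx}\,dy$.

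The only step that differs from Theorem \ref{ThmI} is imposing the boundary data. Here the hidden-liquidity conditions read $P_{\text{up}}(0)=H$ and $P_{\text{up}}(1)=1-H$. Since $\Phi(0)=0$, the first condition forces $A=H$; the second then gives $H+C\,\Phi(1)=1-H$, i.e.\ $C=(1-2H)/\Phi(1)$. Substituting $A$ and $C$ produces precisely \eqref{HFormula}. Equivalently, because the ODE is linear and homogeneous in $P_{\text{up}}'$, every solution is an affine image of the Theorem \ref{ThmI} solution $\Phi(z)/\Phi(1)$, so $H+(1-2H)\Phi(z)/\Phi(1)$ solves the ODE and meets both boundary values; uniqueness for the Dirichlet problem then closes the argument.

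The computations are routine and I anticipate no serious obstacle, since the two genuinely substantive ingredients—the harmonicity of $u$ and the reduction to a single variable by scale invariance—are inherited from Theorem \ref{ThmI}. The one point that must be checked is that $\nu(z)>0$ on $[0,1]$, so that $\mu/\nu$ is integrable and the ODE is non-degenerate. This holds because $\nu(z)=\big((1-z)\sigma^{b}(z)-\rho(z)z\sigma^{a}(z)\big)^{2}+(1-\rho(z)^{2})\,z^{2}\sigma^{a}(z)^{2}\ge 0$, with strict positivity on $[0,1]$ whenever $\sigma^{b},\sigma^{a}>0$ and $\rho>-1$, and in the degenerate case $\rho\equiv-1$ one has $\nu(z)=((1-z)\sigma^{b}+z\sigma^{a})^{2}>0$ on $(0,1)$, which is exactly the situation handled separately in the remarks following Theorem \ref{ThmI}.
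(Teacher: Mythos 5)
Your proposal is correct and follows essentially the same route as the paper: rederive (or inherit from Theorem \ref{ThmI}) the ODE $\nu(z)u''(z)+\mu(z)u'(z)=0$, write the general solution $u(z)=C_{2}+C_{1}\int_{0}^{z}e^{-\int_{0}^{y}\mu(x)/\nu(x)\,dx}dy$, and fix the two constants from $u(0)=H$, $u(1)=1-H$, which gives exactly \eqref{HFormula}. Your added check that $\nu(z)>0$ (so that $\mu/\nu$ is integrable and the ODE is non-degenerate) is a point the paper leaves implicit, but it does not change the argument.
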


To fit the data, we use the empirical data for $\sigma^{b}(\cdot)$, $\sigma^{a}(\cdot)$ and $\rho(\cdot)$
to plug into the formula \eqref{HFormula} and obtain $P_{\text{theoretical}}(z,H)$, 
the theoretical probability of mid-price moving up
at imbalance level $z$ and then use the least square method
\begin{equation}
\min_{H}\sum_{z}(P_{\text{empirical}}(z)-P_{\text{theoretical}}(z,H))^{2},
\end{equation}
to find the best fitting hidden liquidity $H$. The solid lines in Figures \ref{ProbBAC}, \ref{ProbGE}, \ref{ProbGM}, \ref{ProbJPM}
are the model predictions. Remarkably, the solid lines are almost linear in imbalance even though
the correlation is a complicated function of imbalance and far away from being $-1$. 
Hence, we have built up a model with the empirical correlation and volatilities
as the input that can produce an analytical formula for the price movement probability 
that can be used to fit to the empirical probability data
\section{Conclusions}
\label{sec:conclusion}

We did numerical studies of the drift effect, correlation and volatility of the best
bid and ask queues and how they depend on the imbalance of the volumes at the best bid and ask queues
from the level-1 limit order books data from the WRDS. We discovered
that there is little evidence for the drift except when the imbalance is small or large. 
The correlation as a function of the imbalance
exhibits universal behaviors, which is either a $U$-shaped or a $W$-shaped curve, and it is almost always
negative though far away from $-1$. The volatility is much more noisy and in general lacks a clear pattern though very often
exhibit skewed $U$-shapes. All the empirical results are highly stock and also exchange dependent, which suggests
that the dynamics of the limit order books are very sensitive to their particular stock and also exchanges. 
Based on our empirical discoveries, we built up a discrete model for the dynamics of the best
bid and ask queues and showed that it can be approximated by a reduced-form diffusion model 
with functional dependence of the drift, correlation and volatility
on the imbalance, which therefore generalizes the correlated Brownian motion model that is commonly
used in the limit order books literature. Our reduced-form model still keeps analytical tractability, 
and it is self-consistent when it is fit to the data of both the empirical probability of mid-price movement
and the empirical correlation/volatility.

\section*{Acknowledgments}

The authors are very grateful to two anonymous referees and the editor whose suggestions have greatly
improved the presentation and the quality of the paper. 
The authors would also like to thank Jean-Philippe Bouchaud, Yuanda Chen, Arash Fahim, Xuefeng Gao and Alec Kercheval for their invaluable comments.
Lingjiong Zhu is partially supported by NSF Grant DMS-1613164.

\appendix

\section{Proofs}\label{ProofSection}

\begin{proof}[Proof of Theorem \ref{ConvergenceThm}]
Notice that $M^{j}(t)$ are martingales with predictable quadratic variation $\int_{0}^{t}\lambda^{j}(Z_{s-})ds$,
where $\lambda^{j}$ is bounded, i.e. $\Vert\lambda^{j}\Vert_{\infty}<\infty$.
For any $t\in[0,T-\delta]$, $\delta>0$, we have
\begin{equation}
\mathbb{E}\left[\Vert(X_{n}(t+\delta),Y_{n}(t+\delta))-(X_{n}(t),Y_{n}(t))\Vert^{4}\right]
\leq C\sum_{j=1}^{6}\mathbb{E}\left[(M^{j}_{n}(t+\delta)-M^{j}_{n}(t))^{4}\right],
\end{equation}
for some constant $C>0$.
By Burkholder-Davis-Gundy inequality, for any $1\leq j\leq 6$, 
\begin{equation}
\mathbb{E}\left[(M^{j}_{n}(t+\delta)-M^{j}_{n}(t))^{4}\right]
\leq\frac{C}{n^{2}}\mathbb{E}\left[\left(\int_{tn}^{(t+\delta)n}\lambda^{j}(Z_{s})ds\right)^{2}\right]
\leq C\Vert\lambda^{j}\Vert_{\infty}\delta^{2},
\end{equation}
for some constant $C>0$.
Therefore, by applying Kolmogorov's tightness criterion, we can show that $(X_{n}(t),Y_{n}(t))$ is tight.

The infinitesimal generator for the rescaled process $(X_{n}(t),Y_{n}(t))$ is given by
\begin{align*}
	\mathcal{L}_{n}f(x,y) &:= n\lambda^{1}(z)\left[f\left(x+\frac{1}{\sqrt{n}},y\right)-f(x,y)\right]\\
	&\quad + n\lambda^{2}(z)\left[f\left(x-\frac{1}{\sqrt{n}},y\right)-f(x,y)\right]\\
	&\quad + n\lambda^{3}(z)\left[f\left(x,y+\frac{1}{\sqrt{n}}\right)-f(x,y)\right]\\
	&\quad + n\lambda^{4}(z)\left[f\left(x,y-\frac{1}{\sqrt{n}}\right)-f(x,y)\right]\\
	&\quad + n\lambda^{5}(z)\left[f\left(x+\frac{1}{\sqrt{n}},y-\frac{1}{\sqrt{n}}\right)-f(x,y)\right]\\
	&\quad + n\lambda^{6}(z)\left[f\left(x-\frac{1}{\sqrt{n}},y+\frac{1}{\sqrt{n}}\right)-f(x,y)\right],
\end{align*}
where $z=\frac{x}{x+y}$ and $f$ is a twice continuously differentiable test function.

By using the driftless assumption \eqref{DriftlessAssump},
we get
\begin{align*}
	&\mathcal{L}_{n}f(x,y) := n\lambda^{1}(z)\left[\frac{1}{\sqrt{n}}\frac{\partial f}{\partial x}+\frac{1}{2n}\frac{\partial^{2}f}{\partial x^{2}}+O(n^{-3/2})\right]\\
	&\quad + n\lambda^{2}(z)\left[-\frac{1}{\sqrt{n}}\frac{\partial f}{\partial x} + \frac{1}{2n}\frac{\partial^{2}f}{\partial x^{2}}+O(n^{-3/2})\right]\\
	&\quad + n\lambda^{3}(z)\left[\frac{1}{\sqrt{n}}\frac{\partial f}{\partial y} + \frac{1}{2n}\frac{\partial^{2}f}{\partial y^{2}}+O(n^{-3/2})\right]\\
	&\quad + n\lambda^{4}(z)\left[-\frac{1}{\sqrt{n}}\frac{\partial f}{\partial y} + \frac{1}{2n}\frac{\partial^{2}f}{\partial y^{2}}+O(n^{-3/2})\right]\\
	&\quad + n\lambda^{5}(z)\left[\frac{1}{\sqrt{n}}\frac{\partial f}{\partial x} - \frac{1}{\sqrt{n}}\frac{\partial f}{\partial y}
			+\frac{1}{2n}\frac{\partial^{2}f}{\partial x^{2}}
			+\frac{1}{2n}\frac{\partial^{2}f}{\partial y^{2}}- \frac{1}{n}\frac{\partial^{2}f}{\partial x\partial y}+O(n^{-3/2})\right]\\
	&\quad + n\lambda^{6}(z)\left[-\frac{1}{\sqrt{n}}\frac{\partial f}{\partial x} + \frac{1}{\sqrt{n}}\frac{\partial f}{\partial y}
			+\frac{1}{2n}\frac{\partial^{2}f}{\partial x^{2}}
			+\frac{1}{2n}\frac{\partial^{2}f}{\partial y^{2}}-\frac{1}{n}\frac{\partial^{2}f}{\partial x\partial y}+O(n^{-3/2})\right]\\
	&= \frac{1}{2}\left[\lambda^{1}(z)+\lambda^{2}(z)+\lambda^{5}(z)+\lambda^{6}(z)\right]\frac{\partial^{2}f}{\partial x^{2}}
	+ \frac{1}{2}\left[\lambda^{3}(z)+\lambda^{4}(z)+\lambda^{5}(z)+\lambda^{6}(z)\right]\frac{\partial^{2}f}{\partial y^{2}}\\
	&\quad - \left[\lambda^{5}(z)+\lambda^{6}(z)\right]\frac{\partial^{2}f}{\partial x\partial y}+O(n^{-1/2}).
\end{align*}
As $n\rightarrow\infty$, $\mathcal{L}_{n}f(x,y)\rightarrow\mathcal{L}f(x,y)$, where 
\begin{align*}
	&\mathcal{L}f(x,y) 
	= \frac{1}{2}(\sigma^{b}(z))^{2}\frac{\partial^{2}f}{\partial x^{2}} 
	+ \frac{1}{2}(\sigma^{a}(z))^{2}\frac{\partial^{2}f}{\partial y^{2}}
	+ \sigma^{b}(z)\sigma^{a}(z)\rho(z)\frac{\partial^{2}f}{\partial x\partial y}\\
	&\sigma^{b}(z):=\left[\lambda^{1}(z)+\lambda^{2}(z)+\lambda^{5}(z)+\lambda^{6}(z)\right]^{1/2}\\
	&\sigma^{a}(z):=\left[\lambda^{3}(z)+\lambda^{4}(z)+\lambda^{5}(z)+\lambda^{6}(z)\right]^{1/2}\\
	&\rho(z):=-\frac{\lambda^{5}(z)+\lambda^{6}(z)}{\sigma^{b}(z)\sigma^{a}(z)}.
\end{align*}
Also notice that by our assumption, the initial condition satisfies $(X_{n}(0),Y_{n}(0))=(x,y)\in\mathbb{R}^{+}\times\mathbb{R}^{+}$.
The tightness gives the relative compactness of the sequence and and the convergence of infinitesimal generators gives
convergence in distribution for finite 
fixed time point, which guarantees the weak convergence
on $D[0,T]$, see e.g. Theorem 7.8(b)
of Chapter 3 in Ethier and Kurtz \cite{EK}. Hence $(X_{n}(t),Y_{n}(t))\Rightarrow(Q^{b}(t),Q^{a}(t))$ on $D[0,T]$.
\end{proof}

\begin{proof}[Proof of Theorem \ref{ThmI}]
Recall that the price moves up is:
\begin{equation}
	P_{\text{up}}(x,y) = u(x,y) = \mathbb{P}(\tau^{a}<\tau^{b}|Q^{b}(0)=x,Q^{a}(0)=y). 
\end{equation}
Then, $u(x,y)$ satisfies the PDE:
\begin{equation}\label{ContinuousPDE}
	\sigma^{b}(z)^2\frac{\partial^{2}u}{\partial x^{2}} + 2\rho(z)\sigma^{b}(z)\sigma^{a}(z)\frac{\partial^{2}u}{\partial x\partial y}
	+ \sigma^{a}(z)^{2}\frac{\partial^{2}u}{\partial y^{2}} = 0,
\end{equation}
with the boundary condition: $u(0,y)=0$ and $u(x,0)=1$, where $z=\frac{x}{x+y}$ is the imbalance.

Assuming that $u(x,y)$ is a function of $z$ so that $u(x,y)=u(z)$, by the chain rule,
\begin{align}
	&\frac{\partial u}{\partial x} = \frac{y}{(x+y)^{2}}u'(z),\quad \frac{\partial u}{\partial y}=-\frac{x}{(x+y)^{2}}u'(z)\\
	&\frac{\partial^{2}u}{\partial x^{2}} = -\frac{2y}{(x+y)^{3}}u'(z)+\frac{y^{2}}{(x+y)^{4}}u''(z)\\
	&\frac{\partial^{2}u}{\partial y^{2}} = \frac{2x}{(x+y)^{3}}u'(z)+\frac{x^{2}}{(x+y)^{4}}u''(z)\\
	&\frac{\partial^{2}u}{\partial x\partial y} = \frac{x-y}{(x+y)^{3}}u'(z)-\frac{xy}{(x+y)^{4}}u''(z).
\end{align}
Hence, the PDE reduces to:
\begin{multline*}
	\sigma^{b}(z)^{2}\left[-\frac{2y}{(x+y)^{3}}u'(z)+\frac{y^{2}}{(x+y)^{4}}u''(z)\right]\\
	+2\rho(z)\sigma^{b}(z)\sigma^{a}(z)\left[\frac{x-y}{(x+y)^{3}}u'(z)-\frac{xy}{(x+y)^{4}}u''(z)\right]\\
	+\sigma^{a}(z)^{2}\left[\frac{2x}{(x+y)^{3}}u'(z)+\frac{x^{2}}{(x+y)^{4}}u''(z)\right]=0,
\end{multline*}
which can be further reduced to the ODE:
\begin{multline*}
	\sigma^{b}(z)^{2}\left[-2z^{2}(1-z)u'(z)+(1-z)^{2}z^{2}u''(z)\right]\\
	+2\rho(z)\sigma^{b}(z)\sigma^{a}(z)\left[z^{2}(2z-1)u'(z)-z^{3}(1-z)u''(z)\right]\\
	+\sigma^{a}(z)^{2}\left[2z^{3}u'(z)+z^{4}u''(z)\right]=0,
\end{multline*}
with the boundary condition $u(0)=0$ and $u(1)=1$, which can be rewritten as
\begin{equation}
	\mu(z)f(z)+\nu(z)f'(z)=0,
\end{equation}
where
\begin{align*}
	\mu(z) &= -2(1-z)\sigma^{b}(z)^{2}+2(2z-1)\rho(z)\sigma^{b}(z)\sigma^{a}(z)+2z\sigma^{a}(z)^{2}\\
	\nu(z) &= (1-z)^{2}\sigma^{b}(z)^{2} - 2z(1-z)\rho(z)\sigma^{b}(z)\sigma^{a}(z) + z^{2}\sigma^{a}(z)^{2},\notag
\end{align*}
and $f(z)=u'(z)$. This is a first-order linear equation with solution
\begin{equation}
	f(z)=C_{1}e^{-\int_{0}^{z}\frac{\mu(x)}{\nu(x)}dx},
\end{equation}
and hence
\begin{equation}
	u(z) = C_{1}\int_{0}^{z}e^{-\int_{0}^{y}\frac{\mu(x)}{\nu(x)}dx}dy+C_{2},
\end{equation}
where $C_{1},C_{2}$ are two constants to be determined. 
By using the boundary conditions $u(0)=0$ and $u(1)=1$, we conclude that
\begin{equation}
	P_{\text{up}}(x,y) = P_{\text{up}}(z) = u(z) = 
	\frac{\int_{0}^{z}e^{-\int_{0}^{y}\frac{\mu(x)}{\nu(x)}dx}dy}{\int_{0}^{1}e^{-\int_{0}^{y}\frac{\mu(x)}{\nu(x)}dx}dy}.
\end{equation}
\end{proof}

\begin{proof}[Proof of Theorem \ref{ThmII}]
$u(z)$ as in Theorem \ref{ThmI} satisfies the ODE:
\begin{equation}
\mu(z)u'(z)+\nu(z)u''(z)=0,
\end{equation}
now with the boundary conditions $u(0)=H$ and $u(1)=1-H$, where
\begin{align*}
	\mu(z) &= -2(1-z)\sigma^{b}(z)^{2}+2(2z-1)\rho(z)\sigma^{b}(z)\sigma^{a}(z)+2z\sigma^{a}(z)^{2}\\
	\nu(z) &= (1-z)^{2}\sigma^{b}(z)^{2} - 2z(1-z)\rho(z)\sigma^{b}(z)\sigma^{a}(z) + z^{2}\sigma^{a}(z)^{2}.\notag
\end{align*}
As in the proof of Theorem \ref{ThmI}, this ODE has the solution of the form
\begin{equation}
u(z)=C_{1}\int_{0}^{z}e^{-\int_{0}^{y}\frac{\mu(x)}{\nu(x)}dx}dy+C_{2},
\end{equation}
where $C_{1},C_{2}$ are two constants to be determined. 
By using the boundary conditions $u(0)=H$ and $u(1)=1-H$, we conclude that
\begin{equation}
	P_{\text{up}}(x,y) = P_{\text{up}}(z) = u(z) = 
	H+(1-2H)\frac{\int_{0}^{z}e^{-\int_{0}^{y}\frac{\mu(x)}{\nu(x)}dx}dy}{\int_{0}^{1}e^{-\int_{0}^{y}\frac{\mu(x)}{\nu(x)}dx}dy}.
\end{equation}
\end{proof}
\newpage
\section{Tables}
\label{sec:table}

\begin{table}[H]
	\caption{Summary of Volume Changes (NASDAQ)}
	\centering 
	\begin{tabular}{|c|c|c|c|c|c|c|c|c|}
		\hline 
		Imbalance & BAC b & BAC a & GE b  & GE a  & GM b  & GM a  & JPM b & JPM a \\ \hline\hline
		                                 0.0-0.1                                  & 0.349 & 0.499 & 0.367 & 0.508 & 0.640 & 0.462 & 0.628 & 0.496 \\ \hline
		                                 0.1-0.2                                  & 0.449 & 0.519 & 0.461 & 0.503 & 0.508 & 0.488 & 0.580 & 0.487 \\ \hline
		                                 0.2-0.3                                  & 0.458 & 0.542 & 0.439 & 0.513 & 0.480 & 0.517 & 0.498 & 0.507 \\ \hline
		                                 0.3-0.4                                  & 0.472 & 0.556 & 0.478 & 0.536 & 0.482 & 0.532 & 0.450 & 0.542 \\ \hline
		                                 0.4-0.5                                  & 0.503 & 0.537 & 0.507 & 0.539 & 0.508 & 0.531 & 0.472 & 0.529 \\ \hline
		                                 0.5-0.6                                  & 0.533 & 0.508 & 0.529 & 0.518 & 0.532 & 0.510 & 0.514 & 0.448 \\ \hline
		                                 0.6-0.7                                  & 0.540 & 0.488 & 0.537 & 0.490 & 0.530 & 0.489 & 0.518 & 0.477 \\ \hline
		                                 0.7-0.8                                  & 0.540 & 0.467 & 0.528 & 0.469 & 0.514 & 0.471 & 0.527 & 0.514 \\ \hline
		                                 0.8-0.9                                  & 0.522 & 0.479 & 0.523 & 0.472 & 0.494 & 0.498 & 0.516 & 0.527 \\ \hline
		                                 0.9-1.0                                  & 0.486 & 0.310 & 0.496 & 0.378 & 0.463 & 0.643 & 0.508 & 0.696 \\ \hline
	\end{tabular}
	\label{TTable} 
\end{table}

\begin{table}[H]
	\caption{Summary of Volume Changes (NYSE)}
	\centering 
	\begin{tabular}{|c|c|c|c|c|c|c|c|c|}
		\hline 
		Imbalance & BAC b & BAC a & GE b  & GE a  & GM b  & GM a  & JPM b & JPM a \\ \hline\hline
		                                 0.0-0.1                                  & 0.561 & 0.540 & 0.586 & 0.596 & 0.730 & 0.598 & 0.723 & 0.602 \\ \hline
		                                 0.1-0.2                                  & 0.524 & 0.531 & 0.502 & 0.552 & 0.557 & 0.506 & 0.532 & 0.531 \\ \hline
		                                 0.2-0.3                                  & 0.514 & 0.529 & 0.490 & 0.531 & 0.502 & 0.497 & 0.491 & 0.512 \\ \hline
		                                 0.3-0.4                                  & 0.518 & 0.541 & 0.505 & 0.522 & 0.494 & 0.497 & 0.480 & 0.511 \\ \hline
		                                 0.4-0.5                                  & 0.517 & 0.553 & 0.518 & 0.529 & 0.485 & 0.486 & 0.492 & 0.507 \\ \hline
		                                 0.5-0.6                                  & 0.539 & 0.511 & 0.539 & 0.518 & 0.501 & 0.491 & 0.510 & 0.494 \\ \hline
		                                 0.6-0.7                                  & 0.551 & 0.506 & 0.528 & 0.513 & 0.496 & 0.496 & 0.499 & 0.485 \\ \hline
		                                 0.7-0.8                                  & 0.534 & 0.516 & 0.541 & 0.503 & 0.498 & 0.503 & 0.506 & 0.491 \\ \hline
		                                 0.8-0.9                                  & 0.518 & 0.535 & 0.553 & 0.512 & 0.510 & 0.550 & 0.527 & 0.539 \\ \hline
		                                 0.9-1.0                                  & 0.554 & 0.582 & 0.575 & 0.551 & 0.586 & 0.718 & 0.598 & 0.742 \\ \hline
	\end{tabular}
	\label{NTable} 
\end{table}

\begin{table}[H]
	\caption{Summary of Correlation of the Best Bid and Ask}
	\rotatebox{90}{
	\begin{tabular}{|c|c|c|c|c|c|c|c|c|}
		\hline 
		Imbalance & BAC(T) & BAC(N) & GE(T) & GE(N) & GM(T) & GM(N) & JPM(T) & JPM(N) \\ \hline\hline
		                                0.00-0.05                                 & -0.03  &  0.00  & -0.05 & -0.09 & -0.09 & -0.07 & -0.14  & -0.11  \\ \hline
		                                0.05-0.10                                 & -0.12  &  0.02  & -0.08 & -0.10 & -0.11 & -0.18 & -0.13  & -0.20  \\ \hline
		                                0.10-0.15                                 & -0.09  & -0.02  & -0.09 & -0.10 & -0.14 & -0.25 & -0.17  & -0.29  \\ \hline
		                                0.15-0.20                                 & -0.09  & -0.06  & -0.13 & -0.18 & -0.15 & -0.21 & -0.17  & -0.29  \\ \hline
		                                0.20-0.25                                 & -0.07  & -0.09  & -0.10 & -0.16 & -0.16 & -0.31 & -0.21  & -0.34  \\ \hline
		                                0.25-0.30                                 & -0.05  & -0.13  & -0.09 & -0.17 & -0.17 & -0.30 & -0.22  & -0.34  \\ \hline
		                                0.30-0.35                                 & -0.04  & -0.13  & -0.07 & -0.19 & -0.19 & -0.28 & -0.24  & -0.32  \\ \hline
		                                0.35-0.40                                 & -0.04  & -0.12  & -0.06 & -0.19 & -0.18 & -0.28 & -0.23  & -0.34  \\ \hline
		                                0.40-0.45                                 & -0.04  & -0.15  & -0.09 & -0.21 & -0.20 & -0.29 & -0.24  & -0.33  \\ \hline
		                                0.45-0.50                                 & -0.05  & -0.15  & -0.07 & -0.22 & -0.20 & -0.28 & -0.25  & -0.33  \\ \hline
		                                0.50-0.55                                 & -0.04  & -0.12  & -0.06 & -0.20 & -0.20 & -0.27 & -0.25  & -0.34  \\ \hline
		                                0.55-0.60                                 & -0.04  & -0.12  & -0.07 & -0.20 & -0.21 & -0.28 & -0.27  & -0.26  \\ \hline
		                                0.60-0.65                                 & -0.05  & -0.12  & -0.09 & -0.17 & -0.20 & -0.29 & -0.24  & -0.29  \\ \hline
		                                0.65-0.70                                 & -0.05  & -0.11  & -0.08 & -0.18 & -0.19 & -0.28 & -0.24  & -0.28  \\ \hline
		                                0.70-0.75                                 & -0.04  & -0.10  & -0.07 & -0.15 & -0.19 & -0.28 & -0.22  & -0.30  \\ \hline
		                                0.75-0.80                                 & -0.07  & -0.09  & -0.11 & -0.16 & -0.16 & -0.27 & -0.20  & -0.29  \\ \hline
		                                0.80-0.85                                 & -0.11  & -0.07  & -0.12 & -0.16 & -0.16 & -0.25 & -0.22  & -0.26  \\ \hline
		                                0.85-0.90                                 & -0.12  & -0.04  & -0.10 & -0.15 & -0.12 & -0.22 & -0.20  & -0.18  \\ \hline
		                                0.90-0.95                                 & -0.10  & -0.03  & -0.09 & -0.07 & -0.08 & -0.19 & -0.16  & -0.17  \\ \hline
		                                0.95-1.00                                 & -0.06  &  0.00  & -0.04 & -0.10 & -0.03 & -0.08 & -0.06  & -0.06  \\ \hline
	\end{tabular}
	}
	\label{corrTable} 
\end{table}

\begin{table}[H]
	\caption{Summary of Empirical Probability (E) and Model Prediction (P) for Stocks Traded on NASDAQ}
	\centering 
	\rotatebox{90}{
	\begin{tabular}{|c|c|c|c|c|c|c|c|c|}
		\hline 
		Imbalance & BAC(E) & BAC(P) & GE(E) & GE(P) & GM(E) & GM(P) & JPM(E) & JPM(P) \\ \hline\hline
		                                0.00-0.05                                 & 0.044  & 0.072  & 0.053 & 0.104 & 0.102 & 0.176 & 0.077  & 0.158  \\ \hline
		                                0.05-0.10                                 & 0.139  & 0.131  & 0.127 & 0.144 & 0.197 & 0.211 & 0.137  & 0.196  \\ \hline
		                                0.10-0.15                                 & 0.136  & 0.181  & 0.168 & 0.186 & 0.245 & 0.247 & 0.225  & 0.235  \\ \hline
		                                0.15-0.20                                 & 0.159  & 0.224  & 0.186 & 0.229 & 0.267 & 0.285 & 0.254  & 0.275  \\ \hline
		                                0.20-0.25                                 & 0.215  & 0.263  & 0.221 & 0.273 & 0.306 & 0.322 & 0.285  & 0.315  \\ \hline
		                                0.25-0.30                                 & 0.270  & 0.301  & 0.231 & 0.319 & 0.342 & 0.359 & 0.336  & 0.356  \\ \hline
		                                0.30-0.35                                 & 0.307  & 0.337  & 0.308 & 0.364 & 0.367 & 0.394 & 0.378  & 0.396  \\ \hline
		                                0.35-0.40                                 & 0.351  & 0.374  & 0.364 & 0.409 & 0.412 & 0.429 & 0.406  & 0.436  \\ \hline
		                                0.40-0.45                                 & 0.426  & 0.410  & 0.397 & 0.454 & 0.463 & 0.463 & 0.460  & 0.475  \\ \hline
		                                0.45-0.50                                 & 0.441  & 0.447  & 0.441 & 0.498 & 0.482 & 0.497 & 0.484  & 0.514  \\ \hline
		                                0.50-0.55                                 & 0.492  & 0.484  & 0.504 & 0.542 & 0.503 & 0.531 & 0.540  & 0.553  \\ \hline
		                                0.55-0.60                                 & 0.539  & 0.523  & 0.545 & 0.586 & 0.533 & 0.566 & 0.591  & 0.591  \\ \hline
		                                0.60-0.65                                 & 0.628  & 0.562  & 0.600 & 0.631 & 0.570 & 0.602 & 0.607  & 0.630  \\ \hline
		                                0.65-0.70                                 & 0.658  & 0.603  & 0.653 & 0.676 & 0.604 & 0.639 & 0.639  & 0.668  \\ \hline
		                                0.70-0.75                                 & 0.704  & 0.646  & 0.669 & 0.722 & 0.644 & 0.677 & 0.645  & 0.706  \\ \hline
		                                0.75-0.80                                 & 0.737  & 0.693  & 0.749 & 0.766 & 0.679 & 0.716 & 0.668  & 0.744  \\ \hline
		                                0.80-0.85                                 & 0.783  & 0.750  & 0.762 & 0.811 & 0.698 & 0.754 & 0.722  & 0.780  \\ \hline
		                                0.85-0.90                                 & 0.844  & 0.825  & 0.799 & 0.855 & 0.740 & 0.791 & 0.791  & 0.815  \\ \hline
		                                0.90-0.95                                 & 0.885  & 0.912  & 0.846 & 0.896 & 0.810 & 0.825 & 0.799  & 0.848  \\ \hline
		                                0.95-1.00                                 & 0.960  & 1.000  & 0.930 & 0.933 & 0.902 & 0.856 & 0.918  & 0.878  \\ \hline
	\end{tabular}
	}
	\label{probTableNASDAQ} 
\end{table}

\begin{table}[H]
	\caption{Summary of Empirical Probability (E) and Model Prediction (P) for Stocks Traded on NYSE }
	\centering 
	\rotatebox{90}{
	\begin{tabular}{|c|c|c|c|c|c|c|c|c|}
		\hline 
		Imbalance & BAC(E) & BAC(P) & GE(E) & GE(P) & GM(E) & GM(P) & JPM(E) & JPM(P) \\ \hline\hline
		                                0.00-0.05                                 & 0.128  & 0.191  & 0.138 & 0.251 & 0.157 & 0.217 & 0.125  & 0.179  \\ \hline
		                                0.05-0.10                                 & 0.194  & 0.216  & 0.237 & 0.271 & 0.221 & 0.244 & 0.174  & 0.209  \\ \hline
		                                0.10-0.15                                 & 0.249  & 0.243  & 0.332 & 0.292 & 0.277 & 0.272 & 0.219  & 0.241  \\ \hline
		                                0.15-0.20                                 & 0.249  & 0.272  & 0.346 & 0.315 & 0.310 & 0.302 & 0.272  & 0.275  \\ \hline
		                                0.20-0.25                                 & 0.276  & 0.303  & 0.347 & 0.340 & 0.341 & 0.333 & 0.313  & 0.310  \\ \hline
		                                0.25-0.30                                 & 0.335  & 0.338  & 0.367 & 0.367 & 0.373 & 0.365 & 0.348  & 0.347  \\ \hline
		                                0.30-0.35                                 & 0.377  & 0.375  & 0.384 & 0.396 & 0.393 & 0.398 & 0.428  & 0.385  \\ \hline
		                                0.35-0.40                                 & 0.405  & 0.415  & 0.395 & 0.428 & 0.422 & 0.432 & 0.411  & 0.425  \\ \hline
		                                0.40-0.45                                 & 0.420  & 0.458  & 0.435 & 0.461 & 0.443 & 0.466 & 0.439  & 0.465  \\ \hline
		                                0.45-0.50                                 & 0.442  & 0.502  & 0.435 & 0.495 & 0.454 & 0.501 & 0.474  & 0.506  \\ \hline
		                                0.50-0.55                                 & 0.468  & 0.546  & 0.476 & 0.529 & 0.516 & 0.536 & 0.522  & 0.545  \\ \hline
		                                0.55-0.60                                 & 0.490  & 0.588  & 0.520 & 0.562 & 0.554 & 0.570 & 0.569  & 0.584  \\ \hline
		                                0.60-0.65                                 & 0.530  & 0.628  & 0.540 & 0.593 & 0.572 & 0.603 & 0.608  & 0.623  \\ \hline
		                                0.65-0.70                                 & 0.590  & 0.665  & 0.580 & 0.622 & 0.601 & 0.636 & 0.637  & 0.661  \\ \hline
		                                0.70-0.75                                 & 0.648  & 0.700  & 0.603 & 0.649 & 0.638 & 0.667 & 0.664  & 0.696  \\ \hline
		                                0.75-0.80                                 & 0.649  & 0.731  & 0.619 & 0.675 & 0.659 & 0.698 & 0.701  & 0.730  \\ \hline
		                                0.80-0.85                                 & 0.723  & 0.760  & 0.647 & 0.699 & 0.701 & 0.727 & 0.730  & 0.763  \\ \hline
		                                0.85-0.90                                 & 0.793  & 0.787  & 0.660 & 0.723 & 0.729 & 0.756 & 0.771  & 0.794  \\ \hline
		                                0.90-0.95                                 & 0.832  & 0.811  & 0.763 & 0.745 & 0.785 & 0.783 & 0.812  & 0.823  \\ \hline
		                                0.95-1.00                                 & 0.924  & 0.833  & 0.860 & 0.766 & 0.856 & 0.808 & 0.875  & 0.849  \\ \hline
	\end{tabular}
	}
	\label{probTableNYSE} 
\end{table}

\bibliographystyle{plain}
\bibliography{reference.bib}

\end{document}